\tikzset{
  column sep/.code=\def\pgfmatrixcolumnsep{\pgf@matrix@xscale*(#1)},
  row sep/.code   =\def\pgfmatrixrowsep{\pgf@matrix@yscale*(#1)},
  matrix xscale/.code=%
    \pgfmathsetmacro\pgf@matrix@xscale{\pgf@matrix@xscale*(#1)},
  matrix yscale/.code=%
    \pgfmathsetmacro\pgf@matrix@yscale{\pgf@matrix@yscale*(#1)},
  matrix scale/.style={/tikz/matrix xscale={#1},/tikz/matrix yscale={#1}}}
\def\pgf@matrix@xscale{1}
\def\pgf@matrix@yscale{1}
\newcommand{\ch}{\text{conv}}
\newcommand{\conv}{\overline{\text{conv}}}
\newtheorem{theorem}{Theorem}[section]
\newtheorem{corollary}[theorem]{Corollary}
\newtheorem{lemma}[theorem]{Lemma}
\newtheorem{proposition}[theorem]{Proposition}
\newtheorem{definition}{Definition}
\newtheorem{example}{Example}
\newtheorem*{remark*}{Remark}
\newtheorem*{inftheorem*}{Theorem (informal)}
\newtheorem*{notation*}{Notation}
\newtheorem*{observation*}{Observation}
\newtheorem*{theorem*}{Theorem}
\newtheorem*{proposition*}{Proposition}
\newtheorem*{definition*}{Definition}
\newtheorem*{axiom*}{Axiom}
\newtheorem*{claim*}{Claim}
\newtheorem*{lemma*}{Lemma}
\DeclareMathOperator*{\argmin}{arg\,min}
\newcommand{\tc}[2]{\mathcal{TC}_{#2}\left(#1\right)}
\newcommand{\nc}[2]{\mathcal{NC}_{#2}\left(#1\right)}
\newcommand{\bilin}[2]{\left\langle #1, #2 \right\rangle}
\newcommand{\proj}[2]{\Pi_{#1}\left[#2\right]}
\newcommand{\poly}{\textnormal{poly}}
\newcommand{\coo}{\tilde{C}^{1,\omega}}
\newcommand{\coom}{\tilde{C}^{1,\omega-}}
\newcommand{\cdl}{\tilde{C}^{1,\omega}}
\newcommand{\ftau}{{\lfloor \tau \rfloor}}
\newcommand{\tanproju}[1]{\proj{\tc{X_i}{x_i^{#1}}}{\nabla_i u_i(x^{#1})}}
\newcommand{\tanprojh}[1]{\proj{\tc{X_i}{x_i^{#1}}}{\nabla_i h(x^{#1})}}
\newcommand{\tanprojhhat}[1]{\proj{\tc{X_i}{x_i^{#1}}}{\nabla_i h(x^{#1}_{\hat{N}})}}
\newcommand{\norprojhhat}[1]{\proj{\nc{X_i}{x_i^{#1}}}{\nabla_i h(x^{#1}_{\hat{N}})}}
\newcommand{\xhat}[1]{x_{\hat{N}}^{#1}}
\newcommand{\x}[1]{x^{#1}}
\newcommand{\hatN}{\hat{N}}
\newcommand{\rtc}[2]{\mathcal{RTC}_{#2}({#1})}
\newcommand{\cotdl}{\underrightarrow{\tilde{C}}^{1,\omega}}
\newcommand{\cotb}{\underrightarrow{\tilde{C}}^{1,\omega}}
\newcommand{\cotm}{\underrightarrow{\tilde{C}}^{1,\omega-}}
\newcommand{\cotdla}{\underrightarrow{\tilde{C}}^{1,\omega}}
\newcommand{\cotba}{\underrightarrow{\tilde{C}}^{1,\omega}}
\newcommand{\cotma}{\underrightarrow{\tilde{C}}^{1,\omega-}}
\newcommand{\fint}{F_{\textnormal{INT}}}
\newcommand{\fswap}{F_\textnormal{SWAP}}
\newcommand{\flin}{F_{\textnormal{LIN}}}
\newcommand{\prox}{\textnormal{prox}}
\newcommand{\intxi}{\textnormal{int}(X_i)}
\title{First-order (coarse) correlated equilibria in non-concave games}
\author[1]{Mete \c{S}eref Ahunbay}
\affil[1]{Technische Universit\"{a}t München $\rightarrow$ Université Grenoble Alpes / CNRS / INRIA / LIG \newline
Bâtiment IMAG, Université Grenoble Alpes, 150 Place du Torrent, 38401, St. Martin d'Hères, France.}
\begin{document}
\maketitle

\begin{abstract}
    We investigate first-order notions of correlated equilibria in smooth games, in which players do not incur any regret against small modifications of their actions prescribed by some vector field. We define two such notions, based on local deviations and on stationarity of the distribution, and identify the notion of coarseness as the setting where the strategy modifications are prescribed by gradient fields. For coarse equilibria, we prove that online (projected) gradient ascent has a universal approximation property for both variants of equilibrium; in the self-play setting, every differentiable function induces an equilibrium constraint, the approximation error of which depends on the modulus of continuity and magnitude of the gradient. In the adversarial setting, we instead obtain a characterisation of regret guarantees against continuous strategy modifications satisfied by projected gradient ascent; these are precisely deviations induced by gradient fields tangent to the action set. We also provide a generalisation of the Lagrangian Hedging framework, which identifies a novel refinement of correlated equilibrium which is tractable to approximate. 
    
    We then study the primal-dual framework to our notion of first-order equilibria. For coarse equilibria defined by a family of functions, we find that a dual bound on the worst-case expectation of a \emph{performance metric} takes the form of a \emph{generalised} Lyapunov function for the dynamics of the game. Specifically, usual primal-dual price of anarchy analysis for coarse correlated equilibria as well as the smoothness framework of Roughgarden are both equivalent to a problem of general Lyapunov function estimation. For non-coarse equilibria, we instead observe that price of anarchy problems are dual to a vector field fit problem for the gradient dynamics of the game. This follows from containment results in normal form games; the usual notion of a (coarse) correlated equilibria is equivalent to our first-order local notions of (coarse) correlated equilibria with respect to an appropriately chosen set of vector fields.
\end{abstract}

\newpage

\tableofcontents

\newpage

\allowdisplaybreaks

\section{Introduction}

The central question we study is on notions of correlated equilibria in a smooth game; in abstract, given a set of players $N$, closed \& convex action sets $X_i \subseteq \mathbb{R}^{D_i}$ and sufficiently smooth payoffs $u_i : \times_{i \in N} X_i \rightarrow \mathbb{R}$, a correlated equilibrium is a probability distribution $\sigma$ on $\times_{i \in N} X_i \equiv X \subseteq \mathbb{R}^D$ satisfying 
$\phi(\sigma) \leq 0 \ \forall \ \phi \in \Phi$ for some family of equilibrium constraints $\Phi$ linear in $\sigma$. Each equilibrium constraint $\phi$ is taken to capture in expectation players' \emph{deviation incentives}, i.e. the change in payoffs when players' actions are drawn $x \sim \sigma$, and a player $i$ considers ``modifying'' their drawn strategy $x_i$ in some manner prescribed by $\phi$. Our motivation is threefold:

\vspace{8pt}\noindent\textbf{Correlated equilibrium in non-concave games.}
The initial motivation for the work that comprises this paper comes from a question posed in \cite{daskalakis2021non} regarding what an appropriate theory of non-concave games should be. Specifically, whereas a Nash equilibrium necessarily exists for a concave game, this need not be true when players' utilities can be non-concave in their own actions. While traditional game theory has focused its attention on concave games, most settings relevant to the recent advances in machine learning are appropriately modelled as non-concave games. Moreover, whereas \emph{first-order} Nash equilibrium does exist in non-concave games, even its approximate versions are $PPAD$-hard to compute. This raises the question of what a tractable notion of equilibrium is for non-concave games.

\vspace{8pt}\noindent\textbf{Gradient-based learning \& the ``missing inequalities'' problem.} Our \underline{primary} motivation lies with the analysis of learning algorithms with an associated continuous time dynamical system. It is well known that such learning algorithms, e.g. online gradient ascent (OGA) or the multiplicative weights update (MWU), satisfy no-external regret. However, this property is often too weak to describe the actual behaviour of these algorithms. Auctions provide an important setting where this disconnect becomes apparent. For instance, \cite{soda2023} shows that a variant of MWU (equivalent to dual averaging with an entropic barrier \cite{MZ19}) converges to (an approximate) equilibrium in a variety of single-item and combinatorial auction settings. However, the coarse correlated equilibria of complete information first-price auctions can be significantly different from the auction's unique Nash equilibrium \cite{FLN16}, and mean-based learning \cite{braverman2018selling} only provides a partial answer \cite{kolumbus2022auctions, deng2022nash}.  In turn, no-regret learning explains convergence to equilibrium in incomplete information first-price auctions only under stringent concavity assumptions on buyers' prior distributions \cite{ahunbay2024uniqueness}.

This apparent contradiction between empirical and theoretical results is puzzling; if gradient dynamics necessarily converge to equilibrium, the time-average history of play should place probability $1$ on the auctions' Nash equilibria. One plausible explanation is that these ill-behaved CCE outcomes are reached through some hard-to-find initialisations of these algorithms. However, another explanation is that the incentive constraints associated with no-external regret learning potentially form \underline{merely a subset} of all incentive constraints these gradient-based learning algorithms satisfy. In other words, there should exist hitherto unknown equilibrium refinements over coarse correlated equilibria, which is associated with learned outcomes when players utilise e.g. OGA or MWU, such that the primal formulation is tightened and the actual performance bounds of gradient-based learning algorithms may be certified. Since the ``art'' of efficiency and revenue analysis (i.e. price of anarchy / stability analysis) depends heavily on these convex formulations, finding these missing inequalities would allow us to extend such analysis to outcomes of gradient-based learning.

\newpage

\vspace{8pt}\noindent\textbf{Avoiding learning cycles via regret minimisation.} A tertiary motivation comes from the ubiquity of cyclic behaviour in learning \cite{PP19}. The literature on learning depends overwhelmingly on incorporating rotational corrections \cite{DISZ18,LBRMFTG19} such that, in the presence of cyclic behaviour, the gradient flow is nudged in the direction of the curvature of the trajectory. Such modifications can be considered necessary; \cite{MPPS22} show that there are games in which no deterministic, memoryless dynamics can guarantee convergence to a Nash equilibrium. This raises the question of whether avoiding cycling behaviour in learning is possible via a ``first-principles'' approach, based only on computing a suitably defined correlated equilibrium concept via some deterministic but memoryful algorithm parallelling the fixed-point based approaches of \cite{stoltz2007learning,gordon2008no}.


\subsection{Our results \& contributions}

Given our motivations, we seek notions of equilibrium that are tractably computable given access to payoff gradients independent of concavity assumptions on the payoffs, and that contain (coarse) correlated equilibria of normal-form games as special cases. Towards this end, we build upon the recent work of \cite{cai2024on}, who considered a \emph{local} variant of \emph{$\Phi$-regret} traditionally considered in literature (cf. \cite{gordon2008no}), and showed two such families of local strategy modifications are tractably computable via online gradient ascent:
\begin{align*}
    \Phi^{X_i}_\textnormal{Int}(\delta) & = \{ x_i \mapsto \delta x_i^* + (1-\delta) x_i \ | \ x^*_i \in X_i \}, \\
    \Phi^{X_i}_\textnormal{Proj}(\delta) & = \{ x_i \mapsto \proj{X_i}{x_i + \delta v} \ | \ v \in \mathbb{R}^{D_i}, \| v \| \leq 1 \}. 
\end{align*}
Our key insight is that both $\phi^{X_i}_\textnormal{Int}$ and $\phi^{X_i}_\textnormal{Proj}$ as $\delta$-strategy modifications are provided by families of \emph{gradient fields} of functions over $X_i$. Specifically, we may write 
\begin{align}
    \Phi^{X_i}_\textnormal{Int}(\delta) & = \{ x_i \mapsto x_i - \delta \nabla_i \|x_i-x_i^*\|^2/2 \ | \ x^*_i \in X_i \}, \label{def:phi_int}\\
    \Phi^{X_i}_\textnormal{Proj}(\delta) & = \{ x_i \mapsto \proj{X_i}{x_i + \delta \nabla_i \bilin{x_i}{v}} \ | \ v \in \mathbb{R}^{D_i}, \| v \| \leq 1 \}. \label{def:phi_proj}
\end{align}
Motivated by this observation, we consider the following two generalisations; (1) we may consider the gradient of a function $h : \times_{i \in N} X_i \rightarrow \mathbb{R}$ instead of simply taking $h : X_i \rightarrow \mathbb{R}$, and then (2) drop the gradient field assumption to study aggregate guarantees for deviations against an arbitrary vector field $f : x \mapsto (f_i(x))_{i\in N}$. Considerations of how projections work when players' action sets are constrained motivate our two first-order notions of equilibrium; which are respectively measures of aggregate regret amongst players, or stationarity with respect to the game's gradient dynamics. 
\begin{definition*}[\ref{def:eps-LCE},\ref{def:SCE},\ref{def:coarse}]
    For $\epsilon > 0$, a distribution $\sigma$ over $X$ is said to be an $\epsilon$\textbf{-local CE} with respect to a family $F$ of vector fields $X \rightarrow \mathbb{R}^D$ with a modulus of continuity $L_f \omega$, if for every $f \in F$, 
    $$\sum_{i \in N} \mathbb{E}_{x \sim \sigma}\left[\bilin{\proj{\tc{X_i}{x_i}}{f_i(x)}}{\nabla_i u_i(x)}\right] \leq \epsilon \cdot \poly(\vec{G}, \vec{L}, G_f, L_f).$$
    where $\vec{G},G_h$ are respectively bounds on the magnitudes $(\|\nabla_i u_i\|)_{i \in N}, f$, $\vec{L}$ specifies the Lipschitz moduli of $\nabla u_i$, and $\proj{\tc{X_i}{x_i}}{...}$ is the tangent cone projection operator. For an $\epsilon$\textbf{-stationary CE}, we project each $\nabla_i u_i$ instead of $f_i(x)$ onto the tangent cone to player $i$'s action set at $x$, and demand the inequality for the absolute value of the LHS; i.e. we require the LHS to be near zero. Moreover, we call the equilibrium \textbf{coarse} each vector field $f$ is equal to $\nabla h$ for some function $h$.
\end{definition*}
In particular, since the results of \cite{cai2024on} hint that projected gradient ascent minimises regret with respect to the gradient field of any suitable function $h$, we identify the notion of coarseness to be so. Meanwhile, for more general correlated equilibria, we allow deviations generated by arbitrary set $F$ of vector fields on $X$.

Most of our discussion is concerned the outcomes of a smooth game when a subset $\hat{N}$ of players implement projected gradient ascent. A player $i$ implements projected gradient ascent if they choose action $x_i^{t+1}$ at time $t+1$ by fixing 
$$ x_i^{t+1} = \proj{X_i}{x_i^t + \eta_t \nabla_i u_i(x^t)}.$$
Our final goal is to construct distributions $\sigma$ over $X$ which allow us to deduce bounds on the time-average $(1/T) \cdot \sum_{t=0}^{T-1} q(x^t)$ for some function $q : X \rightarrow \mathbb{R}$. We define two such distributions:
\begin{enumerate}
    \item The \textbf{time-average play} is obtained via the uniform distribution on $(x^t)_{0 \leq t < T}$.
    \item The \textbf{approximate projected gradient dynamics} over $\hat{N}$ is obtained by sampling uniformly from a curve $x : [0,T] \rightarrow X$ which extends the sequence $(x^t)_{0 \leq t < T}$ in a piecewise manner, such that it is continuous for every $i \in \hat{N}$. Players in $N \setminus \hatN$ instead have their actions be piecewise constant.
\end{enumerate} 
A formal definition is provided in Definition \ref{def:pga}. Our main result in Theorem \ref{thm:main} is a universal approximation property of gradient ascent in the ``self-play'' setting. Its proof is intricate and involves many subcases, we provide a roadmap in Section \ref{sec:technical}. 

\begin{theorem}\label{thm:main}
    Suppose that a set of players $\hatN$ implement projected gradient ascent with equal step sizes $\eta_t$, and each $X_i$ is either (i) a convex set with a smooth boundary of bounded curvature $K_i$, or (ii) a polyhedron defined via inequalities $a_{ij}^T x_i \leq b_{ij}$ for $1 \leq j \leq m_i$. Let $h : \times_{i \in \hatN} X_i \rightarrow \mathbb{R}$ be a differentiable function. Then:
    \begin{enumerate}
        \item The approximate projected gradient dynamics satisfies
        \begin{align*}
            & \frac{1}{T} \left| \sum_{i \in \hat{N}} \int_0^{T} d\tau \cdot \bilin{\nabla_i h(x(\tau))}{\proj{\tc{X_i}{x_i(\tau)}}{\nabla_i u_i(x(\tau))}} \right|\\ \leq \ & \frac{1}{T} \left[ \sum_{t = 0}^{T-1} \frac{|h(x^{t+1}_{\hat{N}}) - h(x^t_{\hat{N}})|}{\eta_t} + \frac{1}{2} G_h \sum_{i \in \hat{N}}
       \left(\eta_t L_i \sum_{j \in \hat{N}} G_j \right) + \frac{R_{it} G_h}{\eta_t}  \right],
        \end{align*}
        where $R_{it} = K_i G_i^2 \eta_t^2 /2$ under case (i), and $R_{it} = 0$ under case (ii). The same RHS bound also applies for the local CCE guarantee for $h$. This bound is \underline{independent} of the modulus of continuity of $\nabla h$.
        \item If $\hatN = N$ and $\nabla h$ has a concave modulus of continuity $L_h \omega$, the time-average play has a stationary CCE guarantee 
        \begin{align*}
        & \frac{1}{T} \left| \sum_{t = 0}^{T-1} \sum_{i \in N} \bilin{\nabla_i h(x^t)}{ \tanproju{t}} \right| \\
        & \leq \sum_{t = 0}^{T-1} \left[\frac{|h(x^{t+1}) - h(x^t)|}{\eta_t} +  \frac{1}{2} L_h \Big( \sum_{i \in N} G_i \Big)^2 \omega(\eta_t) \right] + \frac{1}{T} \sum_{i \in {N}} B_{i+}^T
    \end{align*}
    Similarly, the local equilibrium guarantee is given,
    \begin{align*}
        & \frac{1}{T} \sum_{t = 0}^{T-1} \sum_{i \in \hatN} \bilin{\nabla_i u_i(x^t)}{ \tanprojh{t}}  \\
        & \leq \frac{1}{T} \sum_{t = 0}^{T-1} \left[\frac{h(x^{t+1}) - h(x^t)}{\eta_t} +  \frac{1}{2} L_h \Big( \sum_{i \in N} G_i \Big)^2 \omega(\eta_t) \right]  + \frac{1}{T} \sum_{i \in {N}} \left[ B_{i+}^T + G_i + \Big(\sum_{t = 0}^{T-1} \eta_t\Big) L_i \sum_{j \in N} G_j \right].
    \end{align*}
        Here, the $B_{i+}^T$ is given 
        $$ B_{i+}^T = \begin{cases}
        G_i + \Big(\sum_{t = 0}^{T-1} \eta_t\Big) \Big(L_i\sum_{j \in N}  G_j + \frac{G_i^2}{2K_i}) & \text{(i)} \\
\Big( \sum_{d_i = 0}^{D_i} \binom{m_i}{d_i} \Big) \Big[G_i + \Big(\sum_{t = 0}^{T-1} \eta_t\Big) \frac{L_i}{\nu(X_i)} \sum_{j \in N} G_j \Big] & \text{(ii)} 
        \end{cases},$$where $\nu(X_i)$ is some condition number (Definition \ref{def:normal-cond-number}).
    \end{enumerate}
\end{theorem}

That aggregate regret over the players against first-order deviations generated by any such function $h : \times_{i \in \hat{N}} X_i \rightarrow \mathbb{R}$ depends crucially on the fact that when all players in $\hat{N}$ use the same step sizes, the time sequence $(x^t)$ simulates the underlying continuous projected gradient dynamics of the game. Intuitively, zero regret in the first-order against any such deviation provides a convexified description of solutions to the projected gradient dynamics, once the time-ordering information on the curve is erased. Thus, when some players can be adversarial, methods that depend on regularity of action sets fail to guarantee coarse regret bounds in general. 

However, this is not the case when the deviation generating function $h$ satisfies a regularity condition. We call a function \textbf{tangential} if its gradient takes values in the tangent cone to its domain at every point of its evaluation. Our key insight is that (partially or fully) adversarial time-average guarantees are attainable for local coarse equilibrium when the functions $h$ are tangential. Our guarantee for a local CCE follows from Theorem \ref{thm:main} when the action sets satisfy the aforementioned regularity assumptions; we remark in Appendix \ref{sec:tangency-test-fun} a strengthening is possible via using the novel arguments of \cite{ahunbay2025semicoarse,cai2025new} (c.f. Theorems \ref{thm:already-proven}, \ref{thm:epsilon-Delta}).

\begin{theorem*}[\ref{thm:smooth-to-avg-reduction}]
    Suppose that a subset of players $\hat{N}$ all implement projected gradient ascent with the same step sizes $\eta_t$. Then every tangential function $h : \times_{i \in N} X_i \rightarrow \mathbb{R}$ with bounded gradients admitting a concave modulus of continuity $L_h \omega$,  
    \begin{align*}
       \sum_{i \in \hat{N}} \sum_{t=0}^{T-1} \bilin{\nabla_i h(\xhat{t})}{\nabla_i u_i(x^t)} \leq \ & \sum_{t = 0}^{T-1} \Bigg[ \frac{h(x^{t+1}_{\hat{N}}) - h(x^t_{\hat{N}})}{\eta_t} + \frac{1}{2} G_h \sum_{i \in \hat{N}}
       \left(\eta_t L_i \sum_{j \in \hat{N}} G_j \right) + \frac{R_{it} G_h}{\eta_t}   \\ & + \omega(\eta_t) \left(\sum_{i \in \hat{N}} G_i\right) \left(L_h \sum_{i \in \hat{N}} G_i + G_h \sum_{i \in \hat{N}} L_i \right) \Bigg],
    \end{align*}
    where $R_{it}$ is defined as in Theorem \ref{thm:main} via regularity assumptions on players' action sets.
\end{theorem*}

Moreover, \underline{\emph{these are practically the only ones}}; we provide a linear first-order regret lower bound for projected gradient ascent otherwise. Together with Theorems \ref{thm:smooth-to-avg-reduction}, \ref{thm:already-proven} for case when $\hat{N} = \{i\}$, a singleton set, this implies that we obtain a \emph{characterisation} of all continuous strategy modifications against which projected gradient ascent incurs vanishing regret. In particular, the distributions we construct for the self-play setting satisfy strictly more constraints than those implied by the adversarial guarantees of gradient ascent for each individual player.

\begin{theorem}[\ref{prop:impossibility},\ref{prop:gradient-impossibility}]
    Suppose that player $i$ implements projected gradient ascent with step sizes $\eta_t$. Then if either $f_i : X_i \rightarrow \mathbb{R}^{D_i}$ is not a gradient field, or if $f_i = \nabla_i h$ for $h : X_i \rightarrow \mathbb{R}$ fails to be tangent, and if the strategies of players $j \neq i$ may be chosen such that $\nabla_i u_i(x^t)$ can take any value in some small ball $\mathbb{B}_r(0)$, then 
        $\sum_{t=0}^{T-1}  \langle \proj{\tc{X_i}{x_i^t}}{f_i(x_{i}^t)}, \nabla_i u_i(x^t) \rangle = \Omega(T)$.
\end{theorem}

In Section \ref{sec:on-local-CE}, we turn our attention to first-order correlated equilibria. For either local or stationary CE, we formulate  algorithms assuming access to an approximate fixed-point oracle through the framework of \cite{gordon2005no}. Our goal here is to determine the expressive strength of local and stationary CE, based on the form of fixed-point computation required and on their capability of eliminating cycles as solution concepts. The framework of \cite{gordon2005no} applies essentially out of the box, with the algorithm for local CE requiring that the vector fields again satisfy a tangency condition. 

\begin{theorem*}[\ref{thm:eps-LCE-approx},\ref{thm:eps-SCE-approx}]
    Suppose that $F$ is a finite set of vector fields on $X$. Then with access to $\delta$-approximate fixed-point computation for all vector fields in the linear span of $F$, an $O(\epsilon+\delta)$-stationary correlated equilibrium can be computed in $O(\ln|F|/\epsilon^2)$ iterations. An $\epsilon$-local correlated equilibrium has the same approximability property if all $f \in F$ are tangent to players' action sets, with access to fixed-point computation for all vector fields in the conical span of $F$ instead.
\end{theorem*}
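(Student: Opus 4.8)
The plan is to cast the search for an $\epsilon$-stationary (resp.\ $\epsilon$-local) correlated equilibrium as a $\Phi$-regret minimisation problem over the finite family $F$ and to discharge it through the fixed-point reduction of Gordon--Greenwald--Hazan \cite{gordon2008no} in the first-order form of \cite{cai2024on}. For each $f \in F$ and profile $x$ write the per-profile deviation gains
$$ G_f^{\mathrm{stat}}(x) = \sum_{i \in N} \bilin{f_i(x)}{\proj{\tc{X_i}{x_i}}{\nabla_i u_i(x)}}, \qquad G_f^{\mathrm{loc}}(x) = \sum_{i \in N} \bilin{\proj{\tc{X_i}{x_i}}{f_i(x)}}{\nabla_i u_i(x)}, $$
so that $\sigma$ is an $\epsilon$-equilibrium exactly when $\max_{f \in F} \mathbb{E}_{x \sim \sigma}[G_f(x)] \leq \epsilon$ (up to the $\poly$ factor), where $G_f$ denotes $G^{\mathrm{stat}}_f$ or $G^{\mathrm{loc}}_f$ according to the notion sought. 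I would run $T$ rounds: at round $t$ a no-external-regret learner over the experts $F$ (projected online gradient descent on $\Delta(F)$, with regret $O(\sqrt{|F|\,T})$) outputs weights $\lambda^t \in \Delta(F)$; I form the combined field $g^t = \sum_f \lambda^t_f f$, invoke the assumed oracle to obtain a point $x^t$ with $\proj{\tc{X_i}{x_i^t}}{g^t_i(x^t)} = 0$ for all $i$, and feed the gain vector $(G_f(x^t))_{f \in F}$ back to the learner. Existence of $x^t$ follows from Brouwer applied to $x \mapsto \proj{X}{x + g^t(x)}$, whose fixed points are precisely the stationary points of the projected dynamics of $g^t$. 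The output is $\sigma = \tfrac{1}{T}\sum_t \delta_{x^t}$.

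The crux is the per-round certificate that the combined gain is nonpositive at the fixed point, after which the reduction closes routinely. In the stationary case $G^{\mathrm{stat}}_f$ is linear in $f$, so $\sum_f \lambda^t_f G^{\mathrm{stat}}_f(x^t) = G^{\mathrm{stat}}_{g^t}(x^t)$; and the condition $\proj{\tc{X_i}{x_i^t}}{g^t_i(x^t)} = 0$ is equivalent to $g^t_i(x^t) \in \nc{X_i}{x_i^t}$, which is polar to $\tc{X_i}{x_i^t} \ni \proj{\tc{X_i}{x_i^t}}{\nabla_i u_i(x^t)}$, whence every summand of $G^{\mathrm{stat}}_{g^t}(x^t)$ is $\leq 0$. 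Since the learner's regret is $o(T)$ while $\sum_f \lambda^t_f G^{\mathrm{stat}}_f(x^t) \leq 0$ at each round, I obtain $\max_{f} \mathbb{E}_\sigma[G^{\mathrm{stat}}_f] \leq \mathrm{Regret}_T/T \leq \epsilon$ after $T = O(|F|/\epsilon^2)$ rounds. Because $G^{\mathrm{stat}}$ is genuinely linear, I additionally include $-f$ for each expert, turning the guarantee two-sided; the combined fields then range over (convex combinations of $\pm f$, hence) the linear span of $F$, which is exactly the regime for which the oracle is assumed.

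The local case follows the same template once tangency repairs the two failures of linearity. If every $f \in F$ is tangent to the action sets, then $\proj{\tc{X_i}{x_i}}{f_i(x)} = f_i(x)$, so $G^{\mathrm{loc}}_f(x) = \sum_i \bilin{f_i(x)}{\nabla_i u_i(x)}$ is linear, and any nonnegative combination $g^t = \sum_f \lambda^t_f f$ of tangent fields is again tangent, giving $\sum_f \lambda^t_f G^{\mathrm{loc}}_f(x^t) = G^{\mathrm{loc}}_{g^t}(x^t)$. At a stationary point of the projected dynamics of a tangent $g^t$ one has $g^t_i(x^t) = \proj{\tc{X_i}{x_i^t}}{g^t_i(x^t)} = 0$, so $G^{\mathrm{loc}}_{g^t}(x^t) = 0$, again supplying the nonpositive per-round certificate. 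Since tangency survives only under nonnegative scaling (a tangent cone need not be a subspace), I can admit only conical combinations of $F$; the oracle must therefore cover the conical span and the resulting guarantee is one-sided, as stated.

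The main obstacle is exactly this per-round certificate and its interaction with the algebra of the two gains. For the stationary notion the polarity $\bilin{n}{t} \leq 0$ for $n \in \nc{X_i}{x_i}$, $t \in \tc{X_i}{x_i}$ does all the work and no tangency is needed; for the local notion the projection sits on $f$ and destroys linearity, so tangency must be used both to linearise $G^{\mathrm{loc}}$ over conical combinations and to force the fixed point to be a genuine zero of $g^t$ rather than a zero only of its projection. Care is also required in checking that the Brouwer fixed point solves $\proj{\tc{X_i}{x_i}}{g^t_i(x)} = 0$ (i.e.\ that stationary points of the projected dynamics coincide with fixed points of the projected-step map) and in tracking the $\poly$ factors so that the regret bound yields the claimed $\epsilon$; these steps are routine, and the iteration count $O(|F|/\epsilon^2)$ is just the regret of the meta-learner.
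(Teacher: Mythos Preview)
Your proposal is correct and follows essentially the same $\Phi$-regret-via-fixed-point approach as the paper, with the identical per-round certificate: polarity of normal and tangent cones gives $\sum_f \lambda^t_f G^{\mathrm{stat}}_f(x^t)\leq 0$ in the stationary case, and tangency of $F$ forces $g^t_i(x^t)=0$ (not merely $\in\nc{X_i}{x_i^t}$) in the local case. The only difference is cosmetic: the paper uses Blackwell-style regret matching analysed via a quadratic Gordon triple, so the coefficients in the oracle call are the signed cumulative regrets $\mu_{tf}$ themselves (respectively their positive parts $\max\{\mu_{tf},0\}$ in the local case), whereas you run a separate no-external-regret learner on $\Delta(\pm F)$ and handle the two-sided stationary bound by doubling the expert set; both presentations are standard and yield the same $O(|F|/\epsilon^2)$ iteration count.
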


The question is whether there exists an interesting setting in which fixed-point computation can be performed tractably. We identify one setting; when each vector field is affine-linear \& tangent, each fixed point computation reduces to minimising a quadratic function. 

\begin{theorem*}[\ref{prop:affine-lin},\ref{cor:affine-lin}]
    Let $F$ be a finite subset of $\flin$, the set of affine-linear \& tangent vector fields on $X$\footnote{This notation emphasises the connection to ``$\Phi$-EVI''s proposed by \cite{zhang2025expected}.}. Then a $O(\epsilon+\delta)$-local CE with respect to $F$ can be computed via ($\delta$-approximately) solving $O(\ln |F|/\epsilon^2)$ convex quadratic minimisation problems.
\end{theorem*}

Example \ref{ex:pennies} shows that in the matching pennies game, with a careful selection of such vector fields, the unique local CE with respect to this set of vector fields is the game's unique equilibrium. On the other hand, a vector field based formulation corresponding to the usual notion of a CE in the normal form games admits local CE which are cycles around the equilibrium of this game. Example \ref{ex:jordan}, in turn, shows that local CE with respect to affine-linear \& tangent vector fields can necessarily place probability $1$ on NE even when the game Jacobian has a positive eigenvalue, in this case, in Jordan's matching pennies game \cite{jordan1993three}. This suggests that the resulting coupled learning dynamic might not require optimism \cite{DISZ18} or explicit second-order corrections to the utility gradients \cite{LBRMFTG19} to reach equilibrium strategies with high probability in games. 

In turn, we separate the hardness of local CE and stationary CE, by demonstrating that computing a stationary CE with respect to $\flin$ is hard. The result follows from the hardness of Nash equilibrium in two-player normal-form games \cite{CD06,DGP09}, and leverages an insight from \cite{zhang2025expected}.

\begin{theorem*}[\ref{thm:sce-lin-hard}]
    Let $\Gamma$ be the mixed-extension of a two player normal-form game, and let $\flin$ be the set of tangent affine-linear vector fields over the set of players' mixed strategy profiles, $\Delta(A_1) \times \Delta(A_2)$. Then any exact stationary CE of the game with respect to $\flin$ is a probability distribution over the game's set of Nash equilibria. As a consequence, it is $PPAD$-hard to approximate a stationary CE with respect to $F$.
\end{theorem*}

In Section \ref{sec:lyapunov}, we focus on equilibrium analysis. Our goal here is to determine how equilibrium performance in best- and worst-case can be reasoned about. Towards this end, we adapt in Section \ref{sec:duality} the usual primal-dual framework for equilibrium analysis, bounding the expectation of a quantity defined over the set of outcomes (e.g. welfare, distance to an equilibrium) for our equilibrium concepts. The general such bound can be stated as follows:

\begin{theorem}[\ref{thm:general-perf-bounds},\ref{thm:approx-q(x)}]
    Suppose for a smooth game, a vector field $f : X \rightarrow \mathbb{R}^D$ over $X$, a quantity $q : X \rightarrow \mathbb{R}$, and $\gamma \in \mathbb{R}$, that $\gamma - \sum_{i \in N}\langle \proj{\tc{X_i}{x_i}}{f_i(x)}, \nabla_i u_i(x) \rangle \leq q(x)$ holds for every $x \in X$. If the uniform distribution $\sigma$ over $\{x^0, ..., x^{T-1}\}$ is an  $\epsilon$-local CE with respect to $F \ni f$, then $\mathbb{E}_{x \sim \sigma}[q(x^t)] \geq \gamma - \epsilon \cdot \poly(\vec{G}, \vec{L}, G_f, L_f)$. The analogous statement holds for $\epsilon$-stationary CCE, and if $q$ is Lipschitz continuous, we can instead require that the uniform distribution over the continuous curve $x : [0,T] \rightarrow X$ forms an $\epsilon$-first order CE instead.
\end{theorem}

Here, $q$ can be the (negative) distance to equilibrium, welfare in a mechanism, or revenue in an auction. It is in this manner that we obtain an extension of the usual primal-dual framework for price of anarchy analysis. Moreover, the last part of the statement implies that, if the gradient field $\nabla h : x \mapsto (\proj{\tc{X_i}{x_i}}{\nabla_i h(x)})_{i \in N}$ fails to be tangent (and thus, fails to be continuous at the boundary!), we may instead leverage the bound on the expectation of $q$ via the continuous curve if $q$ is sufficiently smooth. \emph{This allows us to both leverage non-tangential $h$ for the analysis of self play, and to eliminate the combinatorial constant in Theorem \ref{thm:main} in equilibrium analysis.}

In Section \ref{sec:interpret}, we observe that for a primal problem which seeks a best or worst-case correlated equilibrium, a dual solution provides information on the gradient dynamics of the game. In particular, if the primal problem maximizes expected worst-case distance to an equilibrium and there exists a dual solution of value $0$, the solution is a Lyapunov function by its usual definition. For correlated equilibrium, the dual problem is instead that of a vector field fit problem, searching for a linear or conical combination of vector fields in $F$ which are well-aligned with the utility gradients. 

Finally, in Section \ref{sec:normal-form} we look at how first-order local equilibria look like for continuous extensions of normal form games. Here, we find an interesting equivalence result.

\begin{proposition*}[Propositions \ref{prop:CCE-equiv}, \ref{prop:ACCE-equiv} \& \ref{prop:CE-equiv}]
    The usual notions of (coarse) correlated equilibrium, as well the notion of an average coarse correlated equilibrium \cite{nadav2010limits} are equivalent to suitably defined notions of first-order local (coarse) correlated equilibria. 
\end{proposition*}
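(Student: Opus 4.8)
The plan is to exhibit, separately for each of the three classical notions, an explicit finite family of vector fields on the multilinear (continuous) extension $X = \times_{i \in N} \Delta(A_i)$ for which the first-order local condition collapses, term by term, to the usual deviation-incentive inequalities. Throughout I write $u_i$ for the multilinear extension, so it is affine in each $x_i$ with ambient partial gradient $(\nabla_i u_i(x))_b = u_i(e_b, x_{-i})$. Two identities drive everything: affinity in $x_i$ gives $\bilin{x_i}{\nabla_i u_i(x)} = u_i(x)$, and evaluation at a pure action gives $\bilin{e_{a_i'}}{\nabla_i u_i(x)} = u_i(a_i', x_{-i})$.

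For the coarse correlated case I take, for each player $i$ and target $a_i'$, the single-player field $f^{i,a_i'}$ equal to $e_{a_i'} - x_i$ in coordinate $i$ and zero elsewhere. First I check it is a gradient field, namely $\nabla h$ for $h(x) = \bilin{e_{a_i'}}{x_i} - \tfrac12\|x_i\|^2$, so the induced notion is coarse. Next I verify tangency: $\sum_b (e_{a_i'} - x_i)_b = 0$, and in any coordinate $b$ with $(x_i)_b = 0$ the component $\mathbb{1}[b=a_i']$ is nonnegative, hence $f^{i,a_i'}_i(x) \in \tc{X_i}{x_i}$ and $\proj{\tc{X_i}{x_i}}{f^{i,a_i'}_i(x)} = f^{i,a_i'}_i(x)$. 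The summation over players in the local-CCE definition then collapses to the single term $\bilin{e_{a_i'} - x_i}{\nabla_i u_i(x)} = u_i(a_i', x_{-i}) - u_i(x)$ by the two identities. The bridge to the discrete notion is the independent-rounding map $\sigma \mapsto \hat\sigma$ on $\times_i A_i$: draw $x\sim\sigma$ and sample each $a_i\sim x_i$ independently. Multilinearity yields $\mathbb{E}_{a\sim\hat\sigma}[u_i(a)] = \mathbb{E}_{x\sim\sigma}[u_i(x)]$ and $\mathbb{E}_{a\sim\hat\sigma}[u_i(a_i', a_{-i})] = \mathbb{E}_{x\sim\sigma}[u_i(a_i', x_{-i})]$, so the local-CCE inequalities are exactly the (approximate) CCE inequalities for $\hat\sigma$; conversely a vertex-supported $\sigma$ satisfies $\hat\sigma=\sigma$, so classical CCEs embed as first-order local CCEs.

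The average-CCE case is identical except that I replace the single-player fields by the all-player field $f^{a'}$ with $f^{a'}_i(x) = e_{a_i'} - x_i$ for every $i$, again coarse as $\nabla$ of $\sum_i (\bilin{e_{a_i'}}{x_i} - \tfrac12\|x_i\|^2)$; now the sum over players is retained and gives $\sum_i \mathbb{E}[u_i(a_i', x_{-i}) - u_i(x)] \le \epsilon$, which under rounding is precisely the Nadav--Roughgarden average-CCE condition. For the (non-coarse) correlated case I use the scaled swap fields $f^{i,a_i,a_i'}_i(x) = (x_i)_{a_i}(e_{a_i'} - e_{a_i})$; these are tangent for the same reasons, and I check they are genuinely not gradient fields (their Jacobian in $x_i$ is asymmetric), consistent with CE being non-coarse. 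The local term becomes $(x_i)_{a_i}\bigl(u_i(a_i', x_{-i}) - u_i(a_i, x_{-i})\bigr)$, which under rounding is the usual conditional swap-regret term $\sum_{a_{-i}}\hat\sigma(a_i, a_{-i})(u_i(a_i', a_{-i}) - u_i(a_i, a_{-i}))$.

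I expect the one genuinely delicate point to be the boundary behaviour of the tangent-cone projection: the whole reduction rests on each chosen field lying inside $\tc{X_i}{x_i}$ for every $x\in X$, including faces of the simplex where some coordinates vanish and the cone is a proper subset of the tangent hyperplane. For the coarse families this is the inward-pointing check above; for the swap fields it is the observation that the factor $(x_i)_{a_i}$ annihilates the field exactly on the face $(x_i)_{a_i}=0$ where the raw direction $e_{a_i'} - e_{a_i}$ would leave $X$. Once tangency is secured no projection loss occurs and the remaining algebra is routine; the tangency of the swap family is also what makes Theorem~\ref{thm:eps-LCE-approx} applicable, linking this equivalence to computability. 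The only other care needed is bookkeeping the $\epsilon$ and the $\poly$-factors so that an $\epsilon$-approximate local equilibrium maps to an $\epsilon$-approximate classical equilibrium and back.
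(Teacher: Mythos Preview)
Your proposal is correct and follows essentially the same route as the paper: the families of vector fields you choose---$e_{a_i'}-x_i$ for CCE (the gradient of $-\tfrac12\|x_i-e_{a_i'}\|^2$ up to a constant), the sum over players for average CCE, and $(x_i)_{a_i}(e_{a_i'}-e_{a_i})$ for CE---are exactly those of Propositions~\ref{prop:CCE-equiv}, \ref{prop:ACCE-equiv} and~\ref{prop:CE-equiv}, your rounding map $\sigma\mapsto\hat\sigma$ is the paper's $\sigma'(a)=\int_X d\sigma(x)\prod_j x_j(a_j)$, and your tangency check is the same observation the paper uses to drop the projection. The only cosmetic difference is that the paper reaches Proposition~\ref{prop:CCE-equiv} by first unwinding the $(0,\Phi_{\textnormal{Int}}(\delta))$-equilibrium condition and then noting it is generated by the quadratic potentials, whereas you go directly to the gradient-field formulation; the algebra is identical.
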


The equivalence here is in terms of equilibrium constraints on the resulting distributions of action profiles; a local correlated equilibrium defined with respect to an appropriate set $F$ of vector fields induces a distribution on action profiles that correspond to the one of the above equilibrium concepts. Moreover, the converse statement is also true. It is in this sense that the usual methods for primal-dual analysis of price of anarchy are given an interpretation as ``Lyapunov function estimation problems''. The equivalence also shows that our first-order notions of equilibrium may be considered true refinements and extensions of the usual notions of (coarse) correlated equilibria, both for the analysis of gradient ascent in normal-form games, and for the richer setting of non-concave games. Finally, we show that the set of vector fields corresponding to the incentive constraints for (average) coarse correlated equilibrium satisfy also our notion of coarseness. 

Our results suggest, amongst other things, stronger (convex) optimisation based methods for tightening the guarantees of projected gradient ascent. Our results are closely linked to those in \cite{cai2024on,cai2025new,zhang2025expected} for which we provide a comparison in the appendix. Following some further results in \cite{cai2025new}, we also provide a note on  how the setting regularised learning reduces to our analysis when the regulariser is \emph{steep}. 

\subsection{Technical summary of main result \& overview}\label{sec:technical}

To provide intuition for our arguments, consider a smooth game with a set of players $N$, compact and convex action sets $X_i \subseteq \mathbb{R}^{D_i}$ for each player $i$ (with their product denoted $X \subseteq \mathbb{R}^D$), and smooth utility functions $u_i$ for each player $i$. Suppose all players implement online gradient ascent with the same adaptive step sizes $\eta_t$ at time $t$. To impose a correlated distribution $\sigma$ of play as an equilibrium concept, let each $x^t$ is drawn with equal probability. Now consider a ``local'' deviation for player $i$, in which they consider modifying their strategies along the gradient of a function $h_i$. That is, for $\delta > 0$, they consider $x_i^t \mapsto \proj{X_i}{x_i^t + \delta \nabla_i h_i(x_i^t)}$. 

What is the possible change in payoffs by such a local deviation? Consider for simplicity the case when no projections are ever required for the purposes of local deviations and during online gradient ascent, i.e. \emph{both} $\proj{X_i}{x_i^t + \delta \nabla_i h_i(x_i^t)} = x_i^t + \delta \nabla_i h_i(x_i^t)$ and $\proj{X_i}{x_i^t + \eta_t \nabla_i u_i(x^t)} = x_i^t + \eta_t \nabla_i u_i(x^t)$. In this case, the average change in utility is given, after $T$ time steps and truncating terms of order $\delta^2$ and $\eta_t/T$, 
\begin{align}
    \mathbb{E}_{x \sim \sigma}[u_i(x_i + \delta \nabla_i h_i(x_i),x_{-i}) - u_i(x)] & = \frac{1}{T} \cdot \sum_t \left( u_i(x^t_i + \delta \nabla_i h_i(x^t_i),x^t_{-i}) - u_i(x^t)\right) \label{eq:unconstrained} \\
    & = \frac{1}{T} \cdot \sum_t \delta \cdot \left( \bilin{\nabla_i h_i(x^t_i)}{\nabla_i u_i(x^t)} + O(\delta^2) \right) \nonumber\\
    & = \frac{\delta}{T} \cdot \sum_t \left( \frac{h_i(x_i^t + \eta_t \nabla_i u_i(x^t)) - h_i(x_i^t)}{\eta_t} + O(\eta_t) \right) + O(\delta^2)\nonumber 
\end{align}
The second equality here is by considering a Taylor expansion of $u_i(x_i + \delta \nabla_i h_i(x_i),x_{-i})$, and the third equality follows from a Taylor expansion of $h_i(x^t_i + \delta \nabla_i u_i(x^t))$. As a consequence, if the function $h_i$ which generates these local deviations are bounded, under suitable assumptions for the step sizes, the regret against $h_i$ of order $O(\delta)$ would vanish with time. Dividing both sides by $\delta$ and letting $\delta \rightarrow 0$, we see that the expectation of $\bilin{\nabla_i h_i(x_i^t)}{\nabla_i u_i(x^t)}$, the \emph{``first-order local regret against $h$''}, would vanish also. 

However, when projections are involved, we would need to bound losses we obtain due to terms of the form $\bilin{x^{t+1}_i - x_i^t + \delta \nabla_i h_i(x_i^t)}{\nabla_i u_i(x^t)}$, $\bilin{x^{t+1}_i - x_i^t + \eta_t \nabla_i u_i(x^t)}{\nabla_i h_i(x_i^t)}$. This turns out to be a non-trivial task if we demand $O(\delta)$ regret; at a high level, we need to bound \emph{how often} these projections are needed. Yet, when we consider the continuous motion induced by the game dynamics, defined via the differential equation $\dot{x}(\tau) = \proj{\tc{X_i}{x_i(\tau)}}{\nabla_i u_i(x(\tau))}$, these projection errors become measure zero events and we observe a declining regret bound through very straightforward differential arguments (cf. Section \ref{sec:on-SCCE}). This is via an argument on \emph{stationarity} of the resulting distribution. Namely, given a smooth function $h : X \rightarrow \mathbb{R}$, each action profile $x \in X$ assigns a rate of change to $h$ via the projected gradient dynamics of the game; given $x(\tau)$, 
$$\frac{dh(x(\tau))}{d\tau} = \sum_{i \in N} \bilin{\nabla_i h(x(\tau))}{\proj{\tc{X_i}{x_i(\tau)}}{\nabla_i u_i(x(\tau))}},$$
and the expectation of this quantity over a time period $[0,\overline{\tau}]$ is simply equal to $(h(\overline{\tau})-h(0))/\overline{\tau}$, which converges to zero whenever $h$ is bounded. Intuitively, the longer we let the curve trace out, the closer it becomes to forming a distribution invariant under time translation. Exchanging which vector is projected onto the tangent cone, we observe that 
$$\sum_{i \in N} \bilin{\nabla_i h(x(\tau))}{\proj{\tc{X_i}{x_i(\tau)}}{\nabla_i u_i(x(\tau))}} \geq \sum_{i \in N} \bilin{\proj{\tc{X_i}{x_i}}{\nabla_i h(x(\tau))}}{\nabla_i u_i(x(\tau))}$$ 
except for a set of measure zero on the curve traced by the projected gradient dynamics of the game. This motivates our notions of stationary and local equilibrium.

We present two ways to deal with the aforementioned projections; to impose regularity assumptions on the players' action sets (Section \ref{sec:action-set-conditions}), or to assume that projections are not required for either the utility gradients or the deviation generating gradient fields (Section \ref{sec:actual-regret}, also Appendix \ref{sec:tangency-test-fun}). In the former case, our chief technical innovation is in Section \ref{sec:lin-dyn}, where we wriggle our hands free of dealing with projections, by constructing an appropriate continuous (and almost everywhere differentiable) curve over which regret analysis is simple. We essentially convert the sequence of play $x^1, x^2, \dots$ to a continuous curve $x(\tau)$, ``connecting the dots'' by projecting a partial gradient step. For instance, suppose that the initial strategies are $x^0$, and each player $i$ fixes $x_i^1 = \proj{X_i}{x_i^0 + \eta_0 \cdot \nabla_i u_i(x^0)}$. We interpret this step as having been taken ``over a unit of time $1$'', i.e. we let $x(0) = x_i^0$ and $x(1) = x_i^1$.  To extend to $x(\tau)$ continuously, we then let $x_i(\tau) = \proj{X_i}{x_i^0 + \tau \cdot \nabla_i u_i(x^0)}$ for each player $i$ and $\tau \in [0,1]$. This extension is continuous, and in fact differentiable almost everywhere with respect to the Lebesgue measure on the interval.

Bounding the regret against a function $h : X \rightarrow \mathbb{R}$ then reduces to bounding the difference between $\dot{x}(\tau)$, the velocity along the curve given by our ``approximate''  gradient dynamics for the game, and $\proj{\tc{X_i}{x_i(\tau)}}{\nabla_i u_i(x(\tau))}$, the actual velocity prescribed by the dynamics of the game along our smoothed curve. When the boundary of the action sets are regular hypersurfaces in Euclidean space, the difference turns out to be small, whereas if the actions sets are polyhedra satisfying a positivity condition (Definition \ref{def:acute}), the difference only depends on the Lipschitz moduli of the utilities. We remark that the sufficient conditions we identify for the actions sets are satisfied by many settings of interest -- it includes e.g. the sphere, the hypercube or the simplex or their products, which in turn includes the set of outcomes of mixed-extensions of normal-form games. 

To obtain direct time-average bounds for local and stationary CE in the self-play setting ($\hat{N} = N$), we employ a \emph{face-transition} argument; if all players use the same step sizes, then their utility gradients change slowly in time, which implies that projection events when a player's gradient update hits a face of their action set cannot happen too often. When the action set has a smooth boundary, there is effectively only one such face, but when the action sets are polyhedral, the approximation error incurs a large combinatorial constant. We see later that this combinatorial constant does not end up mattering for equilibrium analysis, however, through an appeal to the continuous curve we construct. 

For the reader's benefit, an overview of our paper with important definitions and our main results are tabulated below:

\begin{center}
\begin{tabular}{p{11cm} p{.5cm} p{3.2cm}}
    \hline
    \hline
        \textbf{Initial definitions} & & \\
        \hline
        \hline
		$\epsilon$-local CE & & Definition \ref{def:eps-LCE} \\
        $\epsilon$-stationary CE & & Definition \ref{def:SCE} \\
        Coarse equilibrium && Definition \ref{def:coarse} \\
        Continuous \& discrete gradient ascent, setting & & Definition \ref{def:pga} \\
        \hline
        \hline
        \textbf{Generic decompositions for approx. coarse equilibrium} & & \\
        \hline
        \hline
        For the continuous curve & & Proposition \ref{thm:continuous-approx-general} \\
        Time-average, general & & Proposition \ref{prop:average-bounds-functional} \\
        Definition, binding losses & & Definition \ref{def:binding-losses} \\
        Time-average, self-play & & Proposition \ref{thm:avg-approx-general} \\
        \hline 
        \hline
        \textbf{Coarse equilibria \& regularity of action sets} & & \\
        \hline
        \hline
        \textbf{Smooth boundary} & & \\
        \hline
        Guarantees for the continuous curve & & Proposition \ref{prop:curvature-loss} \\
        Binding loss bounds for time-average guarantees & & Theorem \ref{thm:binding-losses-smooth} \\
        \hline 
        \textbf{Polyhedral action sets} & & \\
        \hline 
        Definition, acute polyhedra & & Definition \ref{def:acute} \\
        Guarantees for the continuous curve & & Proposition \ref{lem:trivial} \\
        Binding loss bounds for time-average guarantees & & Theorem \ref{thm:binding-losses-polyhedral} \\
        \hline
        \hline
        \textbf{Coarse equilibria \& regularity of functions} & & \\
        \hline
        \hline
        Definition, tangency \& well-tangency & & Definition \ref{def:tangent} \\
        Time average guarantees, bootstrap argument for tangential $h$ & & Theorem \ref{thm:smooth-to-avg-reduction} \\
        Necessity of tangency assumption, $O(T)$ lower bound & & Proposition \ref{prop:impossibility} \\
        Necessity of gradient-field assumption, $O(T)$ lower bound & & Proposition \ref{prop:gradient-impossibility} \\
        Time-average guarantees, tangential utilities & & Proposition \ref{prop:tangent-util} \\
    \hline
    \hline
        \textbf{Approximation of first-order CE} & & \\
        \hline
        \hline
		$\epsilon$-local CE, finite $|F|$ & & Theorem \ref{thm:eps-LCE-approx} \\
        Tractability for affine-linear tangent vector fields & & Corollary \ref{cor:affine-lin} \\
        Local CE and equilibria in matching pennies & & Example \ref{ex:pennies} \\
        Equilibrium concentration in Jordan's matching pennies & & Example \ref{ex:jordan} \\
        $\epsilon$-stationary CE & & Theorem \ref{thm:eps-SCE-approx} \\
        Stationary CE and equilibria in $2 \times 2$ games & & Proposition \ref{prop:two-by-two-eq-CE} \\
        \hline
        \hline
        \textbf{Equilibrium analysis} & & \\
        \hline 
        \hline
        Generic guarantees for expectation & & Theorem \ref{thm:general-perf-bounds} \\
        Guarantees for expectation of Lipschitz continuous $q$, self-play & & Theorem \ref{thm:approx-q(x)} \\
        Lyapunov interpretation & & Section \ref{sec:interpret} \\
        \hline
        \textbf{Equivalences in normal-form games} && \\
        \hline 
        Coarse correlated equilibrium & & Proposition \ref{prop:CCE-equiv} \\
        Correlated equilibrium & & Proposition \ref{prop:CE-equiv} \\
        Smoothness \& average CCE & & Proposition \ref{prop:ACCE-equiv} \\
        Equilibrium extensions & & Section \ref{sec:refinement} \\
    \hline
    \hline
    \end{tabular}
\end{center}

\begin{center}
\begin{tabular}{p{11cm} p{.5cm} p{3.2cm}}
    \hline
    \hline
    \textbf{Extensions from recent work}  & & \\
    \hline
    \hline
    \textbf{From \cite{cai2025new}}  && \\
    \hline
    Time-average guarantees, tangential $h$ (also via \cite{ahunbay2025semicoarse}) & & Theorems \ref{thm:already-proven}, \ref{thm:epsilon-Delta} \\
    Adversarial guarantees with steep regularisers & & Section \ref{sec:reg-learning} \\
    \hline
    \textbf{From \cite{zhang2025expected}} & & \\
    \hline
    Affine-linear tangent vector fields \& $\epsilon$-stationary CE & & Theorem \ref{thm:sce-lin-hard} \\
    \hline
\end{tabular}    
\end{center}

\subsection{Related work}

Game theoretic analysis of multi-agent systems is often based on first the assertion of a concept of equilibrium as an expected outcome, followed by an investigation of its properties. The classical assumption is that with self-interested agents, the outcome of the game should be a Nash equilibrium, which exists for finite normal-form games \cite{Nash50} or more generally, concave games \cite{rosen1965existence}. 

Diverging from classical equilibrium theory for convex markets, the outcome of a game need not be socially optimal. The algorithmic game theory perspective has then been to interpret the class of equilibria in question through the lens of approximation algorithms. First proposed by \cite{koutsoupias2009worst}, the \emph{price of anarchy} measures the worst-case ratio between the outcome of a Nash equilibrium and the socially optimal one. A related concept measuring the performance of the best-case equilibrium outcome, the \emph{price of stability} was first employed by \cite{schulz2003performance} and named so in \cite{anshelevich2008price}. Since then, a wealth of results have followed, proving constant welfare approximation bounds for e.g. selfish-routing \cite{roughgarden2005selfish}, facility location \cite{vetta2002nash}, a variety of auction settings including single-item \cite{jin2022settling} and simultaneous \cite{christodoulou2016tight} first-price auctions, simultaneous second-price auctions \cite{christodoulou2016bayesian} and so on, hinting at good performance of the associated mechanisms in practice. Meanwhile, deteriorating price of anarchy bounds, such as those for sequential first-price auctions \cite{leme2012sequential}, are interpreted as arguments against the use of such mechanisms.

However, the assertion that the outcome of a game should be a Nash equilibrium is problematic for several reasons, despite the guarantee that it exists in concave games. It is often the case that multiple equilibria exists in a game, in which case we are faced with an \emph{equilibrium selection problem} \cite{HarsanyiSelten1988}. Moreover, Nash equilibrium computation is computationally  hard in general, even in the setting of concave games where its existence is guaranteed. In general determining whether a pure strategy NE exists in a normal-form game is \textit{NP}-complete \cite{CS08}, and even for two player normal form games finding an exact \emph{or} approximate mixed Nash equilibrium is \textit{PPAD}-complete \cite{CD06,DGP09,daskalakis2013complexity}. Similar results extend to auction settings; for instance, (1) finding a Bayes-Nash equilibrium in a simultaneous second-price auction is \textit{PP}-hard \cite{CP14}, even when buyers' valuations are restricted to be either additive or unit-demand, and (2) with subjective priors computing an approximate equilibrium of a single-item first-price auction is \textit{PPAD}-hard \cite{filos2021complexity}, a result that holds also with common priors when the tie-breaking rule is also part of the input \cite{chen2023complexity}.  

Some answers to this problem come from learning theory. First, for special classes of games, when each agent employs a specific learning algorithm in repeated instances of the game, the outcome converges in \emph{average play} to Nash equilibria. This is true for fictitious play \cite{brown1951iterative} for the empirical distribution of players' actions in a variety of classes of games, including zero-sum games \cite{robinson1951iterative}, $2 \times n$ non-degenerate normal form games \cite{berger2005fictitious}, or potential games \cite{monderer1996potential}. For the case of zero-sum games, the same is true for a more general class of no-(internal \cite{cesa2006prediction} or external \cite{young2004strategic}) regret learning algorithms, while for general normal-form games they respectively converge to correlated or coarse correlated equilibria -- convex generalisations of the set of Nash equilibria of a normal form game. The price of anarchy / stability approach can then be extended to such coarse notions of equilibria, with the smoothness framework of \cite{roughgarden2015intrinsic} for \emph{robust} price of anarchy bounds being a prime example.

Unfortunately, this perspective falls short of a complete picture for several reasons. First, learning dynamics can exhibit arbitrarily high complexity, both computationally and with respect to dynamical systems analysis. Commonly considered learning dynamics may cycle about equilibria, as is the case for fictitious play \cite{shapley1964some} or for the multiplicative weights update \cite{baileyMultiplicativeWeightsUpdate2018}. Worse, learning dynamics can exhibit formally chaotic behaviour \cite{palaiopanos2017multiplicative,CP19}, and bimatrix games may approximate arbitrary dynamical systems \cite{AFP21}. In fact, replicator dynamics on matrix games is Turing complete \cite{andrade2022no}, and reachability problems for the dynamics is in general undecideable.

In the converse direction, the usual notion of no-regret learning can be too weak to capture \emph{learnable} behaviour. In this case, the associated price of anarchy bounds may misrepresent the efficiency of actual learned outcomes. One motivating example here is that of auctions. For first-price single item auctions, in the complete information setting there may exist coarse correlated equilibria with suboptimal welfare, even though the unique equilibrium of the auction is welfare optimal \cite{FLN16}. Meanwhile, in the incomplete information setting with symmetric priors, whether coarse correlated equilibria coincide with the unique equilibrium depends on informational assumptions on the equilibrium structure itself \cite{bergemann2017first} or on convexity of the priors \cite{ahunbay2024uniqueness}. This is in apparent contradiction with recent empirical work which suggest the equilibria of an even wider class of auctions are learnable when agents implement deep-learning or (with full feedback) gradient based no-regret learning algorithms \cite{BFHKS21,soda2023,martin2022finding}. 

This motivates the necessity of a more general notion of equilibrium analysis, stronger than coarse correlated equilibria for normal-form games and weaker than Nash equilibria, which nevertheless captures the guarantees of the above-mentioned settings and is tractable to approximate or reason about. For the case of auctions, one recent proposal has been that of \emph{mean-based learning algorithms} \cite{braverman2018selling}, but even under this stronger assumption on the learning dynamics, the convergence results obtained by \cite{deng2022nash,FGLMS21} are conditional.

There are two approaches towards the resolution of this question which, while not totally divorced in methodology, can be considered distinct in their philosophy. One approach has been to consider \emph{``game dynamics as the meaning of a game''} \cite{PP19}, inspecting the existence of dynamics which converge to Nash equilibria, and extending price of anarchy analysis to include whether an equilibrium is reached with high probability. The work on the former has demonstrated impossibility results; there are games such that any gradient dynamics have starting points which do not converge to a Nash equilibrium, and for a set of games of positive measure no game dynamics may guarantee convergence to $\epsilon$-Nash equilibria \cite{BHS12,MPPS22}. Meanwhile, \cite{panageas2016average,sakos2024beating} propose the \emph{average price of anarchy} as a refined performance metric accounting for the game dynamics. The average price of anarchy is defined as the expectation over the set of initial conditions for the welfare of reached Nash equilibria, for a fixed gradient-based learning dynamics for the game. 

Another approach has been to establish the computational complexity of \emph{local} notions of equilibria. This has attracted attention especially in the setting non-concave games, where the existence of Nash equilibria is no longer guaranteed, due to the success of recent practical advances in machine learning via embracing non-convexity \cite{daskalakis2021non}. However, approximate minmax optimisation is yet again \emph{PPAD}-complete \cite{daskalakis2021complexity}. As a consequence, unless \emph{PPAD} $\subseteq$ \emph{FP}, a tractably approximable local equilibrium concept for non-concave games with compact \& convex action sets must necessarily be \emph{correlated}. Towards this end, \cite{hazan2017efficient,hallak2021regret} define a notion of regret that is based on a sliding average of players' payoff functions. Meanwhile, a recent proposal\footnote{The nomenclature is from an earlier version of their paper, which we preserve.} by \cite{cai2024on} has been to define a local correlated equilibrium, a distribution over action profiles and a set of local deviations such that, approximately, no player may significantly improve their payoffs when they deviate locally pointwise in the support of the distribution. They are then able to show for two classes of local deviations with respect to which the time-average outcomes of projected gradient ascent provides an approximate local correlated equilibrium. 

Our goal in this paper is then to address the question of an equilibrium concept which is (1) also valid for non-concave games, (2) is stronger than coarse correlated equilibria for normal form games, (3) is tractable to approximate, and (4) admits a suitable extension of the usual primal-dual framework for bounding the expectation of quantities over the set of coarse correlated equilibria. The latter necessitates not only a distribution of play, but also incentive constraints which are specified only depending on the resulting distribution and not its time-ordered history. We remark that the framework of \cite{hazan2017efficient} falls short in the latter aspect when the cyclic behaviour of projected gradient ascent results in a non-vanishing regret bound. We thus turn our attention to generalising the work of \cite{cai2024on} on a local notion of $\Phi$-equilibrium \cite{greenwald2003general}.

Strikingly, in doing so, we demonstrate the intrinsic link between such local notions of coarse equilibria and the dynamical systems perspective on learning in games. In particular, the two local notions of $\Phi$-equilibria defined in \cite{cai2024on} are subclasses of what we dub \emph{local coarse correlated equilibria}, distribution of plays such that agents \emph{in aggregate} do not have any strict incentive for infinitesimal changes of the support of the distribution along \emph{any gradient field} over the set of action profiles. The history of play induced by online (projected) gradient ascent then approximates such an equilibrium, by virtue that it approximates a time-invariant distribution for the game's gradient dynamics. Extending the usual primal-dual scheme for price of anarchy bounds then reveals that any dual proof of performance bounds is necessarily of the form of a \emph{``generalised''} Lyapunov function for the quantity whose expectation is to be bounded. The usual LP framework for coarse correlated equilibria is in fact contained in this approach, its dual optimal solutions corresponding to a \emph{``best-fit''} quadratic Lyapunov function.

Our approach in proving our results combines insights previously explored in two previous works. The existence and uniqueness of solutions to projected dynamical systems over polyhedral sets is due to \cite{dupuis1993dynamical}, and in our approximation proofs we also define a history over a continuous time interval. However, our analysis differs as we are not interested in approximating the continuous time projected dynamical system itself over the entire time interval; an approach that would doom our endeavour for tractable approximations in the presence of chaotic behaviour, which is not ruled out under our assumptions \cite{CP19,AFP21}. Instead, we are interested in showing the approximate stationarity of expectations of quantities. Therefore, for projected gradient ascent, we suitably extrapolate the history of play into a piecewise differentiable curve, in the fashion of the projections curve studied by \cite{mortagy2020walking}. We then identify settings in which this curve moves approximately along the payoff gradients at each point in time via consideration of the properties of the boundary of the action sets in question.

Meanwhile, whereas Lyapunov-function based arguments is not new in analysis of convergence to equilibria in economic settings (e.g. \cite{nagurney2012projected}), in evolutionary game theory (c.f. \cite{sandholm2020evolutionary} for an in-depth discussion), and in learning in games (e.g. \cite{MZ19,zhou2018learning}), our perspective in bounding expectations of quantities appears to be relatively unexplored. Most Lyapunov-function based arguments in literature are concerned with pointwise (local) convergence to a unique Nash equilibrium, and work under the assumption of monotonicity or variational stability, or the existence of a potential function. The former two conditions imply the existence of a quadratic Lyapunov function for the game's projected gradient dynamics, from which Lyapunov functions for alternate learning processes may be constructed. One exception is \cite{glynn2008bounding}, which deals explicitly with the problem of bounding expectations of stable distributions of Markov processes. Moreover, it is there that the dual solution of the LP bounding the expectation of some function is dubbed a ``Lyapunov function''. The continuous time gradient dynamics we study is of course a Markov process, and a rather \emph{``boring''} one in the sense that it is fully deterministic; this motivates us to denote any of our dual solutions as a generalised Lyapunov function. However, the results of \cite{glynn2008bounding} do not include approximations of the Markov process and how Lyapunov-function based dual proofs extend to such approximations. Moreover, it is unclear how their results apply to projected gradient ascent, with closed and convex action sets. We are able to provide positive answers on both fronts.

In turn, our analysis of first-order correlated equilibria depends on the more established framework of $\Phi$-regret minimisation, but for vector fields that can prescribe an action deviation while \emph{``inspecting''} the actions to be played by other players. Our techniques are essentially those used in \cite{greenwald2006bounds,gordon2008no}. One key difference is our vector field formulation; usual swap-regret minimisation \cite{stoltz2007learning,gordon2008no,greenwald2006bounds} considers mappings of the action space onto itself, while we measure regret against its differential generators. The consequences are reminiscent of the result of \cite{hazan2007computational} on the equivalence of no regret learning and fixed point computation, insofar that finding an approximate first-order equilibrium with respect to a set of vector fields is possible only if it is possible to approximate fixed points for the dynamics of any vector field within the linear or conical span of the set. Thus, we require access to a fixed-point oracle for the linear (or conical) combinations of vector fields in question to implement our regret minimisation algorithms. On the other hand, our vector field formulation allows us to refine the notion of correlated equilibria to a family of vector fields for which fixed-point computation is tractable; a novel insight of our work in this paper.

\section{Game Theoretic Basics \& Notions of Equilibrium}\label{sec:prelim}

In what follows, $\mathbb{N}$ denotes the set of natural numbers\footnote{\emph{Including} $0$.}, and we identify also with each $N \in \mathbb{N}$ the set $\{ n \in \mathbb{N} | 1 \leq n \leq N\}$. Following game theoretic convention, for a given $N \in \mathbb{N}$, and any tuple $(X_j)_{j \in N}$ indexed by $N$\footnote{For instance, vectors or families of sets indexed by $N$.} and any $i \in N$, $X_{-i} \equiv (X_j)_{j \in N \setminus \{i\} }$ denotes the tuple with the $i$'th coordinate dropped. Meanwhile, for a tuple $x \equiv (x_j)_{j \in N}$, some $i \in N$, and $y_i$, $(y_i, x_{-i})$ is the tuple where $x_i$ is replaced by $y_i$ in $x$. We will extend this notation to subsets of $N$; for $\hatN \subseteq N$, $x_{-\hatN}$ is the tuple $x$ with the coordinates in $\hatN$ dropped. Meanwhile, $x_{\hatN}$ will denote the tuple restricted to $\hatN$, $x_{\hatN} = (x_i)_{i \in \hatN}$. Thus, $(y_{\hatN},x_{-\hatN})$ is the tuple $x$ where the entries in coordinates in $\hatN$ are replaced by the corresponding ones in $y$. 

In addition, given some $D \in \mathbb{N}$, for each $i \in D$ we will denote by $e_i$ the standard basis vector in $\mathbb{R}^D$ whose $i$'th component equals one and all others zero. Given a compact and convex set $X \in \mathbb{R}^D$, and some $x \in X$, we let $\tc{X}{x}$ and $\nc{X}{x}$ respectively denote the \emph{tangent} and \emph{normal} cones to $X$ at $x$, i.e. 
\begin{align}
    \tc{X}{x} & = \conv\{ t \cdot (y-x) \ | \ t \geq 0 \land y \in X \}, \label{def:tc} \tag{TC} \\
    \nc{X}{x} & = \conv\{ z \in \mathbb{R}^D \ | \ \forall \ y \in \tc{X}{x}, \bilin{y}{z} \leq 0 \}. \label{def:nc} \tag{NC} 
\end{align}
Here, $\conv$ denotes the \emph{convex closure} of a set, and $\bilin{x}{y}$ is the usual inner product of $x$ and $y$ in $\mathbb{R}^D$. In turn, for any $D \in \mathbb{N}$, and any $y \in \mathbb{R}^D$, we write $\|y\| = \sqrt{\bilin{y}{y}}$ for the standard Euclidean norm on $\mathbb{R}^D$. Then for any compact and convex set $X$, we write $d(X) = \max_{x,x' \in X} \|x-x'\|$ for its diamater and $\proj{X}{y} \equiv \argmin_{x \in X} \| x - y \|_2^2$ for the projection of $y$ onto $X \subseteq \mathbb{R}^D$. If $X$ is the product of a finite number $N$ of subsets of finite-dimensional Euclidean spaces, we shall indicate the individual spaces by Roman letters and coordinates within each individual by Greek letters; e.g. $X = \times_{i \in N} X_i$ where each $X_i \subseteq \mathbb{R}^{D_i}$, and for $\mu \in D_i$, $x_{i\mu}$ is the $\mu$'th coordinate of $x_i$. In this setting, for some $x \in X$, and a differentiable function $f : X \rightarrow \mathbb{R}$, $\nabla f(x)$ denotes the gradient of $f$, while $\nabla_i f(x)$ is its projection onto $\mathbb{R}^{D_i}$; i.e. the vector $\left(\partial f/\partial x_{i\mu} \right)_{\mu \in D_i} \in \mathbb{R}^{D_i}$. For the reader's convenience, our notation -- including other notation introduced throughout the paper -- is summarised in Appendix \ref{sec:notation-summarised}.

We will primarily concern ourselves with the class of games in which players can perform gradient ascent on their utilities. This is possible when, for instance, a player $i$ has a convex \& closed action set $\subseteq \mathbb{R}^{D_i}$, with a sufficiently smooth utility function defined over it. Thus, throughout the paper, we shall work with the following class of games.

\begin{definition}
    The data of a \textbf{game} $\Gamma$ is specified by a tuple $\left( N, (X_i)_{i \in N}, (u_i)_{i \in N} \right)$, where $N \in \mathbb{N}$ is the number (and by choice of notation, the set) of players, for each $i \in N$, $X_i$ is the action set of player $i$ and for each $i \in N$, $u_i : X \rightarrow \mathbb{R}$ is the utility function of player $i$. We shall call $\Gamma$ a \textbf{smooth game} for $\hat{N} \subseteq N$ if for player $i \in \hat{N}$, 
    \begin{enumerate}
        \item $X_i \subseteq \mathbb{R}^{D_i}$ is closed \& convex, and
        \item $u_i$ is continuously differentiable over the actions of the players in $\hat{N}$ with bounded, Lipschitz continuous gradients, i.e. there exists $G_i, L_i \in \mathbb{R}_+$ such that for any $x, x' \in X$, 
        \begin{align*}
            \| \nabla u_i(x) \| & \leq G_i, \\
            \left\| \nabla u_i(x) - \nabla u_i\left(x'_{\hat{N}},x_{-\hat{N}}\right) \right\| & \leq L_i \left\| x_{\hat{N}} - x'_{\hat{N}} \right\|.
        \end{align*}
    \end{enumerate} 
    We will denote $\vec{G} \equiv (G_i)_{i \in \hat{N}}$ and $\vec{L} \equiv (L_i)_{i \in \hat{N}}$, the full vector of the bounds on players' gradients and Lipschitz coefficients. If $\hat{N} = N$, we will call the game simply  smooth. We shall also denote by $D = \sum_{i \in \hat{N}} D_i$ the total dimension of the set of smooth play of interest. 
\end{definition}

Theoretic analysis of a game is often done by endowing it with an \emph{equilibrium concept}, which specifies the \emph{``expected, stable outcome''} of a given game. The standard equilibrium concept is that of a \textbf{Nash equilibrium} (NE), an outcome $x \in X$ of a game such that for any player $i \in N$, and any action $x'_i \in X_i$, $u_i(x) \geq u_i(x'_i, x_{-i})$. Whenever all $u_i$ are concave over $X_i$ given any fixed $x_{-i} \in X_{-i}$, such an equilibrium necessarily exists \cite{rosen1965existence}. However, for a generic smooth game, an NE need not exist \cite{daskalakis2021non}, which motivates the notion of a \textbf{first-order Nash equilibrium}; an outcome $x \in X$ is called a first-order NE if for every player $i$, $\nabla_i u_i(x) \in \nc{X_i}{x_i}$.

By fixed point arguments, a first-order NE necessarily exists for a smooth game; yet it is $PPAD$-hard to approximate in general \cite{daskalakis2013complexity}. As a consequence, \cite{daskalakis2021non} raises the question of whether smooth games admit a notion of equilibrium which is tractable to approximate. Of course, in the setting of normal-form games, correlated and coarse correlated equilibria are approximable by simply running a suitable vanishing regret algorithm and considering the time-average distribution of play. This motivates us to consider notions of equilibrium where the first-order NE condition holds \emph{in expectation}.

There are two alternate notions we may consider in this respect. First, if $x$ is a first-order NE and $v_i \in \tc{X_i}{x_i}$, then $\bilin{v_i}{\nabla_i u_i(x)} \leq 0$ for any player $i$. Therefore, given such a collection of vectors $(v_i)_{i \in N}$ for each player $i$, the sum over the players $\sum_{i \in N} \bilin{v_i}{\nabla_i u_i(x)} \leq 0$ is also non-positive. A choice of tangent vectors for each $x \in X$ can be obtained by considering a vector field $f : X \rightarrow \mathbb{R}^D$, and then projecting its value onto the tangent cone at each $x \in X$. Demanding the non-positivity to hold in expectation for some family $F$ of such vector fields, we obtain our first notion of equilibrium.

\begin{definition}
    Let $f : X \rightarrow \mathbb{R}^D$, then a non-decreasing function $\omega_f : [0,\infty) \rightarrow [0,\infty)$ is called the \textbf{modulus of continuity} of $f$ if $\lim_{z \downarrow 0} \omega(z) = \omega(0) = 0$ and for every $x,x' \in X, \|f(x) - f(x')\| \leq {\omega_f}(\|x-x'\|)$.
\end{definition}

\begin{definition}\label{def:eps-LCE}
    For $\epsilon > 0$, a distribution $\sigma$ over $X$ is said to be an $\epsilon$\textbf{-local CE} with respect to a family $F$ of continuous vector fields $X \rightarrow \mathbb{R}^D$ with modulus of continuity $L_f  \omega$ for $L_f \geq 0$, if for every $f \in F$,
    \begin{equation}\label{eq:eps-LCE}
        \sum_{i \in N} \mathbb{E}_{x \sim \sigma}\left[\bilin{\proj{\tc{X_i}{x_i}}{f_i(x)}}{\nabla_i u_i(x)}\right] \leq \epsilon \cdot \poly(\vec{G}, \vec{L}, G_f, L_f).
    \end{equation}
    Here, $G_f$ is a bound on the magnitude of $\|f(x)\|$, analogous to $G_i$. If $\epsilon = 0$, $\sigma$ is simply called a local CE.
\end{definition}

We remark that if a function $f$ with a modulus of continuity defined on $[0,\infty)$ is necessarily uniformly continuous. Moreover, by considering a suitable concave majorant, we shall restrict attention to the case when $\omega$ is concave, and infinitely differentiable on $(0,\infty)$ \cite{medvedev2001concave}. Note that this implies $\omega \geq C| \cdot|$ for some constant $C$. We shall absorb this constant into a scale factor, and thus consider the case when $\omega \geq |\cdot|$.

To obtain our second notion of equilibrium, we note that if $x$ is a first-order NE, then for every player $i$, $\proj{\tc{X_i}{x_i}}{\nabla_i u_i(x)} = 0$. Therefore, for any vector field $f : X \rightarrow \mathbb{R}^D$, the sum over the players $\sum_{i \in N} \langle f_i(x), \proj{\tc{X_i}{x_i}}{\nabla_i u_i(x)} \rangle$ is equal to $0$. We shall thus demand that this equality hold approximately in expectation.

\begin{definition}\label{def:SCE}
    For $\epsilon > 0$, a distribution $\sigma$ is said to be an $\epsilon$\textbf{-stationary CE} with respect to a family $F$ of continuous vector fields $X \rightarrow \mathbb{R}^D$ with modulus of continuity $L_f \omega$ for $L_f \geq 0$, if for every $f \in F$,
    \begin{equation}\label{eq:eps-SCE}
        \left| \sum_{i \in N} \mathbb{E}_{x \sim \sigma}\left[\bilin{\proj{\tc{X_i}{x_i}}{\nabla_i u_i(x)}}{f_i(x)}\right] \right| \leq \epsilon \cdot \poly(\vec{G}, \vec{L}, G_f, L_f).
    \end{equation}
\end{definition}

The nomenclature for the two equilibrium notions can be intuited as follows. For local equilibria, suppose we consider mapping each outcome $x \in X$ to $\proj{X}{x + \delta \cdot f(x)}$ for some small $\delta > 0$; that is, each player $i$, given the current play (by \emph{every} player), deviates $x_i \mapsto \proj{X_i}{x_i + \delta \cdot f_i(x)}$, taking a $\delta$ scaled step along $f_i(x)$ and then projecting down onto their set of actions. If $\sigma$ is any distribution over $X$, then the \emph{aggregate} gain of utility for all players under such deviations will be 
$\sum_{i \in N} \mathbb{E}_{x \sim \sigma} \left[u_i(\proj{X_i}{x_i + \delta f_i(x)},x_{-i}) - u_i(x)\right]$. If, for some $\Delta > 0$, this expectation is bounded $\epsilon\delta \cdot \poly(\vec{G}, \vec{L}, G_f, L_f) + o(\delta)$ for $\delta \in [0,\Delta]$, then taking the limit $\delta \downarrow 0$, the expectation is bounded above by $\epsilon$. However, since $u_i$ are differentiable and $X_i$ is closed and convex, at any $x \in X$, 
$$ \lim_{\delta \downarrow 0} u_i(\proj{X_i}{x_i + \delta f_i(x)},x_{-i}) - u_i(x) = \bilin{\proj{\tc{X_i}{x_i}}{f_i(x)}}{\nabla_i u_i(x)}.$$
Thus, local CE corresponds to the differential limit of a more robust correlated equilibrium notion. 

\begin{definition}\label{def:local-CE}
    For $\Delta > 0$ and $\epsilon \geq 0$, a distribution $\sigma$ over $X$ is said to be an $(\epsilon,\Delta)$-\textbf{local CE} with respect to a family $F$ of Lipschitz continuous vector fields $X \rightarrow \mathbb{R}^D$, if for every $f \in F$,
    \begin{equation}\label{eq:local-CE}
        \sum_{i \in N} \mathbb{E}_{x \sim \sigma}\left[u_i(\proj{X_i}{x_i + \delta f_i(x)},x_{-i}) - u_i(x)\right] \leq \epsilon\delta \cdot \poly(\vec{G}, \vec{L}, G_f, L_f) + o(\delta) \ \forall \ \delta \in [0,\Delta].
    \end{equation}
\end{definition}

In turn, stationary CE is named so due to its connections with actual stationary distributions under gradient dynamics. The gradient flow of the smooth game is given via the system of differential equations,
\begin{equation}\label{eq:grad-dyn}
    \frac{d x_i(\tau)}{d \tau} = \proj{\tc{X_i}{x_i}}{\nabla_i u_i(x(\tau))} \ \forall \ i \in N,
\end{equation}
with initial conditions $x_i(0) \in X_i$ for each player $i$. By \cite{cojocaru2004existence}, for any initial conditions $x(0) \in X$, there is a unique absolutely continuous solution $x(\tau)$ for $\tau \in [0,\infty)$ such that (\ref{eq:grad-dyn}) holds almost everywhere.

Now, suppose that a distribution $\sigma$ is invariant under time translation, that is to say, for any measurable set $A \subseteq X$ and any $\tau > 0$, the set $A^{-1}(\tau) = \{ x(0) \in X \ | \ x(\tau) \in A\}$ is measurable, and moreover, $\sigma(A^{-1}(\tau)) = \sigma(A)$. In this case, for any continuously differentiable function $h : X \rightarrow \mathbb{R}$, $\frac{d}{d\tau} \mathbb{E}_{x(0) \sim \sigma} [h(x(\tau))] = 0$ at $\tau = 0$. In particular, whenever the expectation and the time derivative commute,
$$ \sum_{i \in N} \mathbb{E}_{x \sim \sigma}\left[\bilin{\nabla_i h(x)}{\proj{\tc{X_i}{x_i}}{\nabla_i u_i(x)}}\right] = \sum_{i \in N} \mathbb{E}_{x(0) \sim \sigma}\left[\bilin{\nabla_i h(x)}{d x_i(\tau)/d\tau} \Bigg|_{\tau = 0}\right] = 0.$$
In this way, if the family of vector fields $F$ are in fact provided by the gradient fields of a family $H$ of suitably well-behaved test functions, the equilibrium constraints for stationary CE capture the fact that for a stationary distribution, the expectation of each test function $h \in H$ does not change over time. Our analysis leveraging such test functions will be central to our analysis, and we shall see that several families of known results on gradient-based learning are captured by them, meriting the following definition. 

\begin{definition}\label{def:coarse}
    An $\epsilon$-local or stationary equilibrium $\sigma$ with respect to a family $F$ of Lipschitz continuous vector fields will be called \textbf{coarse} if there exists a set $H$ of continuously differentiable, bounded functions $h : X \rightarrow \mathbb{R}$ with continuous \& bounded gradients, such that $F = \{ \nabla h \ | \ h \in H\}$.
\end{definition}

While analysing coarse equilibria, we shall ask that the polynomial term in the Definitions \ref{def:eps-LCE}, \ref{def:SCE} to scale with the parameters of $h$ instead, i.e. require the RHS to be $\epsilon \cdot \poly(\vec{G},\vec{L},h)$, such that $|\poly(\vec{G},\vec{L},\alpha h)| \leq \alpha|\poly(\vec{G},\vec{L},h)|$ for any $\alpha > 0$. There are three important classes of functions we will often refer to, for which we'll introduce shorthand to avoid repetition. For a set $X$, we will write $\cdl(X,\mathbb{R})$ to denote any $h : X \rightarrow \mathbb{R}$ that is continuously differentiable with bounded  gradients, such that $\nabla h$ admits $L_h \omega$ as its modulus of continuity. In this case, we will write $G_h$ for the magnitude bound on $\nabla h : X \rightarrow \mathbb{R}^D$. We then denote by $\coo_M(X,\mathbb{R})$ functions $h \in \cdl(X,\mathbb{R})$ which are also bounded, i.e. $|h(x)| \leq M$ for any $x \in X$.  We will also write $\coom_M(X,\mathbb{R})$ when we need $h$ to be merely bounded above by $M$ instead of bounded in absolute value.

\section{First-Order Coarse Equilibria and Projected Gradient Ascent}\label{sec:on-SCCE}

A primary objective of our work is to show that the above notions of $\epsilon$-local or $\epsilon$-stationary CCE are \emph{universally} approximable under some regularity conditions, which captures some \emph{``game theoretically relevant''} settings. Here, by universal we mean that in Definition \ref{def:coarse} we may take $H$ be simply $\coo_M(X,\mathbb{R})$ or $\coom_M(X,\mathbb{R})$, and nevertheless tractably compute approximate local or stationary equilibria -- where the approximation factor depends on the modulus of continuity $\omega$. We shall establish this by showing that projected gradient dynamics obtains an approximate stationary CCE, and such stationary CCE are necessarily also local CCE. Whereas for the purposes of practical algorithms we will need to investigate when all players take discrete steps in time, the form of analysis is motivated by how projected gradient dynamics yields our desired approximation.

To wit, let $h \in H$ be given, and consider sampling uniformly from the trajectory of the projected gradient dynamics given initial conditions $x(0)$. That is, we consider the distribution $\sigma$ on $X$ by drawing $\tau \in [0,\overline{\tau}]$ and sampling $x(\tau)$. In this case,
\begin{align*}
    \mathbb{E}_{x \sim \sigma}\left[ \sum_{i \in N} \bilin{\proj{\tc{X_i}{x_i}}{\nabla_i u_i(x)}}{\nabla_i h(x)} \right] & = \mathbb{E}_{\tau \sim U[0,\overline{\tau}]}\left[ \sum_{i \in N} \bilin{dx_i(\tau)/d\tau}{\nabla_i h(x(\tau))} \right] \\ 
    & = \int_0^{\overline{\tau}} \frac{1}{\overline{\tau}} \frac{d h(x(\tau))}{d\tau} \cdot d\tau = \frac{h(x(\overline{\tau})) - h(x(0))}{\overline{\tau}}.
\end{align*}
Then, since $|h| \leq M$ for some $M > 0$, we have $|h(x(\overline{\tau})) - h(x(0))| \leq 2M$, which immediately shows that we obtain an $\epsilon$-stationary CCE with $\epsilon = 2M/\overline{\tau}$ and $\poly(\vec{G}, \vec{L},G_h,L_h) = 1$. Interestingly, we do not yet need to invoke any bounds on the magnitude of the gradients nor any continuity moduli.

Now, suppose for a simple example that each $X_i$ is the $D_i$-dimensional $[0,1]$-hypercube. In this case, we will argue that 
$$\mathbb{E}_{x \sim \sigma}\left[ \sum_{i \in N} \bilin{\proj{\tc{X_i}{x_i}}{\nabla_i h(x)}}{\nabla_i u_i(x)}\right] \leq \mathbb{E}_{x \sim \sigma}\left[ \sum_{i \in N} \bilin{\proj{\tc{X_i}{x_i}}{\nabla_i u_i(x)}}{\nabla_i h(x)} \right], $$
which implies that the given distribution is an $\epsilon$-local CCE with respect to $H$ with the very same $\epsilon$. We shall do so by showing that, for each $i \in N$ and each $\tau \in [0,\overline{\tau}]$, 
\begin{equation}\label{eq:contradiction-1}\bilin{\proj{\tc{X_i}{x_i}}{\nabla_i h(x)}}{\nabla_i u_i(x)} \leq \bilin{\proj{\tc{X_i}{x_i}}{\nabla_i u_i(x)}}{\nabla_i h(x)}\end{equation}
except on a subset of measure zero of $[0,\overline{\tau}]$. Here, we supress the time-dependence for the purpose of parenthesis economy. In this case, first note that by Moreau's decomposition theorem
\begin{align*}
    \nabla_i h(x) & = \proj{\tc{X_i}{x_i}}{\nabla_i h(x)} + \proj{\nc{X_i}{x_i}}{\nabla_i h(x)}, \textnormal{ and} \\
    \nabla_i u_i(x) & = \proj{\tc{X_i}{x_i}}{\nabla_i u_i(x)} + \proj{\nc{X_i}{x_i}}{\nabla_i u_i(x)}.
\end{align*}
As a consequence, 
\begin{align*}
    \bilin{\proj{\tc{X_i}{x_i}}{\nabla_i h(x)}}{\nabla_i u_i(x)} & \leq \bilin{\proj{\tc{X_i}{x_i}}{\nabla_i u_i(x)}}{\nabla_i h(x)} \\
    \Leftrightarrow \bilin{\proj{\tc{X_i}{x_i}}{\nabla_i h(x)}}{\proj{\nc{X_i}{x_i}}{\nabla_i u_i(x)}} & \leq \bilin{\proj{\tc{X_i}{x_i}}{\nabla_i u_i(x)}}{\proj{\nc{X_i}{x_i}}{\nabla_i h(x)}}.
\end{align*}
Notice that, by the definition of the tangent and normal cones, on the second line, both the left-hand side and the right-hand side specify quantities which are non-positive. Therefore, it is sufficient to show that the right-hand side equals $0$ for almost every $\tau \in [0,\overline{\tau}]$.

For the case of the hypercube, we may compute projections coordinate-wise. So suppose that at time $\tau$, $\proj{\tc{X_i}{x_i(\tau)}}{\nabla_i u_i(x(\tau))}_j \cdot \proj{\nc{X_i}{x_i(\tau)}}{\nabla_i h(x(\tau))}_j < 0$. This can only be when $|\proj{\tc{X_i}{x_i(\tau)}}{\nabla_i u_i(x(\tau))}_j| > 0$, hence for some $\gamma(\tau) > 0$, on the interval $\tau' \in (\tau,\tau+\gamma(\tau))$, no hypercube constraint for coordinate $j$ binds. As a consequence, for $\tau' \in (\tau,\tau+\gamma(\tau))$, we have $\proj{\nc{X_i}{x_i(\tau')}}{\nabla_i h(x(\tau'))}_j = 0$. Now, denote by $\overline{T}_{ij}$ the set of $\tau$ in which we incur a projection loss, $\proj{\tc{X_i}{x_i(\tau)}}{\nabla_i u_i(x(\tau))}_j \cdot \proj{\nc{X_i}{x_i(\tau)}}{\nabla_i h(x(\tau))}_j < 0$. Then $\sum_{\tau \in \overline{T}_{ij}} \gamma(\tau) < \overline{\tau}$, as each interval $(\tau,\tau+\gamma(\tau))$ is disjoint. This in turn implies that $\overline{T}_{ij}$ is countable, as the sum of a set of strictly positive numbers of uncountable cardinality is unbounded. Therefore, $\cup_{i \in N, j \in D_i} \overline{T}_{ij}$ is countable. This is a superset (with potential equality) of $\tau \in [0,\overline{\tau}]$ for which (\ref{eq:contradiction-1}) fails to hold for some player $i$, which implies our desired result.

The above discussion hints that, if the actual gradient flow can be well-approximated, with respect to the expectation of the instantaneous rate of change of any suitable test function $h$, then we should expect stationary CCE to also be well-approximable. In this case, \emph{approximability of local CCE follows as a consequence of stationary equilibrium guarantees}. We shall show this to be the case under some regularity assumptions on action sets. After proving the approximability of a stationary CCE, that a local CCE itself is also approximable will follow in a straightforward manner by arguing that the zero-measure property above is maintained for the analogue of (\ref{eq:contradiction-1}). In the absence of regularity assumptions on action sets, the guarantees will follow on regularity assumptions on the test functions $h$ or the utilities $u_i$, in which case (respectively) either local or stationary CCE guarantees will follow -- but in an adversarial setting, not necessarily both.

\subsection{Projected gradient ascent \& approximate projected gradient dynamics}\label{sec:lin-dyn}

The arguments presented in Section \ref{sec:on-SCCE} form the basis of our proofs of tractable approximability of local or stationary CCE. There is yet a hiccup though; most learning algorithms in practice take discrete steps. Whereas \cite{dupuis1993dynamical} (Theorem 3) implies that a discrete analogue of (\ref{eq:grad-dyn}) would approximate the continuous dynamics uniformly over an interval $[0,T]$ for sufficiently small step sizes, their result does not come with approximation guarantees. And besides, our formulation of the gradient dynamics of a game is general enough that we cannot yet rule out chaotic behaviour (as in e.g. \cite{palaiopanos2017multiplicative}), which would doom any effort to remain arbitrarily close to the ``true'' dynamics $x(t)$ as $t \rightarrow \infty$.

Thus we want to avoid any attempts at an uniform approximation of a solution of (\ref{eq:grad-dyn}). Instead, we shall first consider the case when each player uses \emph{projected gradient ascent with equal learning rates} as their learning algorithm. This shall allow us to extend the resulting sequence of play to a piecewise-defined continuous curve as an approximation of the underlying projected gradient dynamics. This parametrised curve might actually diverge from actual solutions of (\ref{eq:grad-dyn}), and would significantly do so in the presence of chaos. But we shall see in the following subsections that this does not end up mattering for the purposes of vanishing regret bounds. 

We shall also be able to provide guarantees for the actual time average play, when all players use the same step sizes. In this case, under the regularity assumptions on our convex sets, we shall obtain guarantees that vanish at the usual rate of $O(\sqrt{T})$ for well-chosen step sizes whenever $\nabla h$ are Lipschitz continuous, though for the class of polyhedra we consider the bound will have a constant exponential in the number of dimensions. As for guarantees that can hold adversarially for the actual time-average play, we will need to impose regularity conditions on the gradients of either the utilities $u_i$ or the generator of deviations $h$; their discussion is deferred to Section \ref{sec:actual-regret}.

All in all, our discrete-time learning algorithm and the continuous curve of interest are defined:

\begin{definition}\label{def:pga}
Given a sequence of non-increasing step sizes $\eta_{it} > 0$ such that $\sum_{t = 0}^\infty \eta_{it} = \infty$, a player $i$ implements \textbf{projected gradient ascent} if they play $x_i^0$ at time $t = 0$, and update their strategies
$$ x_i^{t+1} = \proj{X_i}{x_i^t + \eta_{it} \nabla_i u_i(x^t)}.$$ 
We will refer to the \textbf{setting of a smooth game} $\Gamma$ for $\hat{N} \subseteq N$ if every player $i$ implements projected gradient ascent with the same step sizes $\eta_{it} = \eta_t$, whereas players $j \in N \setminus \hat{N}$ might choose their actions arbitrarily. After $T > 0$ time steps, the learning dynamics outputs a history $(x^t)_{t = 0}^{T}$ of play, allowing us to define two probability measures on $X$. The \textbf{time-average play} will assign probability $1/T$ to each $x_i^t$ for $0 \leq t < T$. We shall then extend the time-average play to an \textbf{approximate projected gradient dynamics} over $\hat{N}$ for the game, defining in piecewise fashion a curve $x(\tau) : \left[0,T\right) \rightarrow X$. For any $0 \leq t < T$, we let $x(\tau) = x^{t}$. Otherwise, we shall fix 
$x(\tau) = \proj{X_i}{x_i + \eta_{i\ftau}  \left( \tau - \ftau \right)  \nabla_i u_i(x(\lfloor \tau \rfloor))}$ for each player $i \in \hat{N}$, whereas players $i \in N \setminus \hat{N}$ shall have $x_i(\tau) = x^\ftau_i$.
\end{definition}

\noindent\textbf{Remark.} The setting of a smooth game for $\hat{N}$ is also a setting of a smooth game for any $\hat{N}' \subseteq \hat{N}$. Therefore, any guarantees stated for $\hat{N}$ apply also to a given subset of players, unless the conditions explicitly require $\hat{N} = N$. This covers the case when $\hat{N} = \{i\}$, a singleton. 

\vspace{8pt}Our notion of an approximate projected gradient dynamics is obtained, in a sense, by ``connecting the dots'' traced by projected gradient ascent. It is certainly not the only way in which one could obtain such a continuous curve -- one could, for instance, consider instead the piecewise linear extension. However, as it shall become apparent in the subsequent sections, our proofs of approximability depend crucially on whether the \emph{velocity} of the curve is sufficiently close to the tangent cone projection of the utility gradient, and our definition ensures so for the two classes of compact and convex action sets we consider.

\subsection{Decompositions for stationary or local equilibrium guarantees}

Our analysis will involve \emph{many} subcases; which are distinct in their assumptions, but may at times be assembled together. For instance, it is possible in the setting for a smooth game for $\hat{N}$ that some player $i$ have an action set of smooth boundary, whereas another player has a suitable polyhedral action set, and yet in another setting we assume that the utilities are \emph{``nice''}, and so on. To avoid rather annoying repetition, in this section we will demonstrate how the equilibrium guarantees decompose, such the guarantees follow whenever some regularity assumption is shown \emph{per player}.

Paralelling the arguments in Section \ref{sec:on-SCCE}, we will reduce whether the first-order equilibrium condition holds for a given function $h$ to a term encoding a weighted sum of the change its value at time $t$, 
\begin{equation}\label{eq:standard-part}
\sum_{t = 0}^{T-1} \frac{h(x^{t+1}_{\hat{N}}) - h(x^t_{\hat{N}})}{\eta_t},
\end{equation}
plus some Lipschitz correction term, and projection terms which we must show are bounded in sum. Overall, this shall provide us with an \textbf{order of convergence} of $$\epsilon^\omega(T) = \frac{1}{T} \max_{i \in \hat{N}} \left(\frac{1}{\eta_{T-1}} + \frac{1}{\eta_0} + \sum_{t=0}^{T-1} (\omega(\eta_t) + \eta_t )\right).$$
In particular, via standard arguments bounding ``quasi''-telescoping sums [e.g. Theorem 2.3, \cite{cesa-bianchiPredictionLearningGames2006}] $(\ref{eq:standard-part})/T \leq \epsilon^\omega(T) \cdot \poly{(\vec{G},\vec{L},h)}$ whenever (1) $\eta_t$ is decreasing and $h \in \coo_M(\times_{i \in \hat{N}} X_i, \mathbb{R})$, or (2) $\eta_t$ is constant and $h \in \coom_M(\times_{i \in \hat{N}} X_i, \mathbb{R})$. Then, if any additional term we obtain in our analysis incurs a bound $\leq \epsilon^\omega(T) \cdot \poly(\vec{G},\vec{L},h)$, we infer a first-order equilibrium guarantee. With this in mind, we first present the approximation bound for the approximate dynamics.

\begin{proposition}\label{thm:continuous-approx-general}
    In the setting of a smooth game for $\hat{N}$, for any $h \in \cdl(\times_{i \in \hat{N}} X_i, \mathbb{R})$,
    \begin{align}\label{eq:master-continuous-SCCE}
       & \frac{1}{T} \sum_{i \in \hat{N}} \int_0^{T} d\tau \cdot \bilin{\nabla_i h(x(\tau))}{\proj{\tc{X_i}{x_i(\tau)}}{\nabla_i u_i(x(\tau))}} \\ \leq \ & \frac{1}{T} \left[ \sum_{t = 0}^{T-1} \frac{h(x^{t+1}_{\hat{N}}) - h(x^t_{\hat{N}})}{\eta_t} + \frac{1}{2} G_h \sum_{i \in \hat{N}}
       \left(\eta_t L_i \sum_{j \in \hat{N}} G_j \right) + \frac{R_{it} G_h}{\eta_t}  \right], \nonumber
    \end{align}
    where $R_{it}$ is the cumulative difference of the tangent cone projection of the utility gradient at time $t$ with the time derivative of player $i$'s action $x_i$,
    $$ R_{it} = \int_t^{t+1} d\tau \cdot \left\| \eta_{t} \cdot \proj{\tc{X_i}{x_i(\tau)}}{\nabla_i u_i(x(\ftau))} - \frac{dx_i(\tau)}{d\tau} \right\|.$$
    Moreover, if for every player $i \in \hat{N}$, the inequality 
    $ \langle\proj{\nc{X_i}{x_i(\tau)}}{\nabla_i h(x(\tau))},{\frac{dx_i(\tau)}{d\tau}}\rangle \geq 0$ 
    holds for almost every $\tau \in [0,T)$, then the same upper bound applies also to
    \begin{equation}\label{eq:master-continuous-LCCE}
       \frac{1}{T} \sum_{i \in \hat{N}} \int_0^{T} d\tau \cdot \bilin{\proj{\tc{X_i}{x_i(\tau)}}{\nabla_i h(x(\tau))}}{\nabla_i u_i(x(\tau))}. 
    \end{equation}
\end{proposition}

\begin{proof}
    To see the first part of the proposition, by adding and subtracting some terms,
    \begin{align}
        & \frac{1}{T} \cdot  \int_0^{T} d\tau \cdot \sum_{i \in \hat{N}} \bilin{\nabla_i h(x(\tau))}{ \proj{\tc{X_i}{x_i(\tau)}}{\nabla_i u_i(x(\tau))}} \label{eq:another-intermediate}\\
        = \ & \frac{1}{T} \cdot  \sum_{i \in \hat{N}} \int_0^{T} d\tau \cdot \frac{1}{\eta_\ftau} \cdot \bilin{\nabla_i h(x(\tau))}{\frac{dx_i(\tau)}{d\tau}} \nonumber\\
        + \ & \frac{1}{T} \cdot \sum_{i \in \hat{N}} \int_{0}^{T} d\tau \cdot \bilin{\nabla_i h(x(\tau))}{\proj{\tc{X_i}{x_i(\tau)}}{\nabla_i u_i(x(\tau))} - \proj{\tc{X_i}{x_i(\ftau)}}{\nabla_i u_i(x(\ftau))}} \nonumber \\
        + \ & \frac{1}{T} \cdot \sum_{i \in \hat{N}} \int_0^{T} d\tau \cdot \frac{1}{\eta_{\ftau}} \cdot \bilin{\nabla_i h(x(\tau))}{\eta_{\ftau} \cdot \proj{\tc{X_i}{x_i(\ftau)}}{\nabla_i u_i(x(\ftau))} - \frac{dx_i(\tau)}{d\tau}} \nonumber.
    \end{align}
    To bound the first term, note that for any $0 \leq t < T$,
    \begin{align*}
    \int_t^{t+1} d\tau \cdot \sum_{i \in \hat{N}} \frac{1}{\eta_\ftau} \cdot \bilin{\nabla_i h(x(\tau))}{\frac{dx_i(\tau)}{d\tau}} & = \frac{h(x^{t+1}_{\hat{N}})-h(x^t_{\hat{N}})}{\eta_t}.
    \end{align*}
    
    The last term of (\ref{eq:another-intermediate}), meanwhile, is $ \leq \sum_{t=0}^{T-1} R_{it} G_h / \eta_t$ by the Cauchy-Schwarz inequality. For the second term, we apply the Cauchy-Schwarz inequality, and observe that if $t = \ftau$,
    \begin{align*}
        & \| \proj{\tc{X_i}{x_i(\tau)}}{\nabla_i u_i(x(\tau))} - \proj{\tc{X_i}{x_i(t)}}{\nabla_i u_i(x(t))} \| \\ 
        \leq \ & \| \nabla_i u_i(x(\tau)) - \nabla_i u_i(x(t)) \| \\ \leq \ & L_i \sum_{j \in \hat{N}} \| x_j(\tau) - x_j(t) \| \\
        = \ & L_i  \sum_{j \in \hat{N}} \| \proj{X_j}{x_j(t) + \eta_t (\tau - t) \nabla_j u_j(x(t))} - x_j(t) \| \\
        \leq \ & \eta_{t} (\tau-t) L_i \sum_{j \in \hat{N}} G_j,
    \end{align*}
    since the projection mapping is contractive, and by the boundedness \& Lipschitz continuity of the utility gradients. 
    
    Then to see the final statement, note that for any $\tau \in [0,T)$ and for every player $i$, 
    \begin{align}
        & \bilin{\proj{\tc{X_i}{x_i(\tau)}}{\nabla_i h(x(\tau))}}{\nabla_i u_i(x(\tau))} \nonumber \\
        \leq \ & \bilin{\proj{\tc{X_i}{x_i(\tau)}}{\nabla_i h(x(\tau))}}{\nabla_i u_i(x(\tau)) - \nabla_i u_i(x(\underline{\tau}))} \label{eq:s-curve-CCE-1}\\
        + \ & \bilin{\proj{\tc{X_i}{x_i(\tau)}}{\nabla_i h(x(\tau))}}{\proj{\nc{X_i}{x_i(\tau)}}{\nabla_i u_i(x(\underline{\tau}))} } \label{eq:s-curve-CCE-2}\\
         + \ & \bilin{\proj{\tc{X_i}{x_i(\tau)}}{\nabla_i h(x(\tau))}}{\proj{\tc{X_i}{x_i(\tau)}}{\nabla_i u_i(x(\underline{\tau}))}  - \frac{1}{\eta_{t}}\frac{dx_i(\tau)}{d\tau} } \label{eq:s-curve-CCE-3}\\
        - \ & \bilin{\proj{\nc{X_i}{x_i(\tau)}}{\nabla_i h(x(\tau))}}{\frac{1}{\eta_{t}}\frac{dx_i(\tau)}{d\tau}} \label{eq:s-curve-CCE-4} \\
        + \ & \bilin{\nabla_i h(x(\tau))}{\frac{1}{\eta_{t}}\frac{dx_i(\tau)}{d\tau}} \label{eq:s-curve-CCE-5}
    \end{align}
    Summing up over the players and integrating over $\tau$, the terms (\ref{eq:s-curve-CCE-1}), (\ref{eq:s-curve-CCE-3}), and (\ref{eq:s-curve-CCE-5}) together yield the same bound. Meanwhile, (\ref{eq:s-curve-CCE-2}) is $\leq 0$ by definition of the tangent and the normal cones. Thus, if (\ref{eq:s-curve-CCE-4}) is $\leq 0$ almost everywhere, we have the desired result.
\end{proof}

Proposition \ref{thm:continuous-approx-general} provides us with following general first-order guarantee for the approximate projected gradient dynamics of the game, independent of the modulus of continuity of $\nabla h$.

\begin{theorem}\label{thm:smooth-bounds-reduction}
    In the setting of a smooth game for $\hat{N}$, suppose $R_{it} \leq R_i \eta_t^2$ for each time period $0 \leq t < T$. Then sampling $x(\tau)$ where $\tau$ is drawn from the uniform distribution $U[0,T]$ provides both an $\epsilon^{|\cdot|}(T)$-stationary and an $\epsilon^{|\cdot|}(T)$-local CCE with respect to $\coo_M(\times_{i \in \hat{N}} X_i,\mathbb{R})$. Moreover, if $\eta_t$ is constant, we obtain in fact an $\epsilon^{|\cdot|}(T)$-local CCE with respect to $\coom_M(\times_{i \in \hat{N}} X_i,\mathbb{R})$.
\end{theorem}

We remark that each $R_i$ is in fact a property of $X_i$ only, allowing us to establish the first-order CCE property for the continuous curve we constructed by analysing how continous projections interact with steepest ascent along any vector. The bounds derived include adversarial guarantees for the smoothed curve; if the game is smooth for $\hat{N}$ then it is smooth for $\{i\}$ for any player $i \in \hat{N}$, which in turn implies that we may consider equilibrium constraints generated by the set of functions $\coom_M(X_i,\mathbb{R})$. Moreover, further regularity assumptions on $\coo_M(X_i,\mathbb{R}), \coom_M(X_i,\mathbb{R})$ we discuss in Section \ref{sec:actual-regret} will allow us an alternate derivation of regret guarantees that hold for actual time-average play.

Moreover, strikingly, for the continuous curve, the modulus of continuity of $h$ does not play a role in the first-order equilibrium guarantee we obtain. Obtaining stationarity guarantees, or more general local equilibrium guarantees for the time-average play will also be possible, however, in this case we shall acquire a dependence on $\omega$. 

The following proposition characterises when projected gradient ascent provides first-order CCE guarantees.

\begin{proposition}\label{prop:average-bounds-functional}
    In the setting of a smooth game for $\hat{N}$, for any $h \in \cdl(\times_{i \in \hat{N}} X_i,\mathbb{R})$, the stationary equilibrium guarantee is given,
    \begin{align}
        & \sum_{t = 0}^{T-1} \sum_{i \in \hatN} \bilin{\nabla_i h(\xhat{t})}{ \tanproju{t}} \label{eq:stationary-bound} \\
        & \leq \sum_{t = 0}^{T-1} \left[\frac{h(x^{t+1}_{\hat{N}}) - h(x^t_{\hat{N}})}{\eta_t} +  \frac{1}{2} L_h \Big( \sum_{i \in \hatN} G_i \Big)^2 \omega(\eta_t) \right] \nonumber \\ & + \sum_{i \in \hat{N}} \sum_{t = 0}^{T-1} \frac{1}{\eta_{t}}\bilin{\nabla_i h(\xhat{t})}{\x{t}_i + \eta_{t} \tanproju{t} - \x{t+1}_i}. \nonumber
    \end{align}
    Similarly, the local equilibrium guarantee is given,
    \begin{align}
        & \sum_{t = 0}^{T-1} \sum_{i \in \hatN} \bilin{\tanprojhhat{t}}{\nabla_i u_i(\x{t})} \label{eq:local-bound}\\ &
        \leq \sum_{t = 0}^{T-1} \left[\frac{h(x^{t+1}_{\hat{N}}) - h(x^t_{\hat{N}})}{\eta_t} +  \frac{1}{2} L_h \Big( \sum_{i \in \hatN} G_i \Big)^2 \omega(\eta_t) \right] \nonumber \\ & + \sum_{i \in \hat{N}} \sum_{t = 0}^{T-1} \frac{1}{\eta_{t}}\bilin{\nabla_i h(\xhat{t})}{\x{t}_i + \eta_{t} \nabla_i u_i(x^t) - \x{t+1}_i} - \bilin{\norprojhhat{t}}{\nabla_i u_i(x^t)}. \nonumber
    \end{align}
\end{proposition}

\begin{proof}
    The first two terms are well-known and are obtained via elementary arguments, for the sake of exposition we provide its derivation. Denote $x_{\hatN}(\lambda) = \xhat{t} (1-\lambda) + \lambda \xhat{t+1}$, then
    \begin{align*}
        h(\xhat{t+1}) - h(\xhat{t}) & = \sum_{i \in \hatN} \int_0^{1} d\lambda \cdot \bilin{\nabla_i h(x_{\hatN}(\lambda))}{x_i^{t+1} - x_i^t} \\
        & = \sum_{i \in \hatN} \int_0^{1} d\lambda \cdot \left[ \bilin{\nabla_i h(\xhat{t})}{x_i^{t+1}-x_i^t} +  \bilin{\nabla_i h(x_{\hatN}(\lambda))-\nabla_i h(\xhat{t})}{x_i^{t+1}-x_i^t} \right] \\
        & \geq \sum_{i \in \hatN} \bilin{\nabla_i h(\xhat{t})}{x_i^{t+1}-x_i^t} - \int_0^1 d\lambda \cdot \| \nabla_i h(x_{\hatN}(\lambda))-\nabla_i h(\xhat{t}) \| \|x_i^{t+1}-x_i^t\|.
    \end{align*}
    Now, the second term is bounded, 
    \begin{align*}
    &\int_0^1 d\lambda \cdot \| \nabla_i h(x_{\hatN}(\lambda))-\nabla_i h(\xhat{t}) \| \|x_i^{t+1}-x_i^t\| \\ 
     \leq \ & \int_0^{\sum_{i \in \hatN} G_i \eta_t} \omega(\lambda) \leq \Big( \sum_{i \in \hatN} G_i \eta_t \Big)  \cdot \omega\Big( \sum_{i \in \hatN} G_i \eta_t \Big) \leq \Big( \sum_{i \in \hatN} G_i \Big)^2 \omega( \eta_t ) \eta_t.
    \end{align*}
    by the monotonicity \& concavity of $\omega$, and as $\omega(0) = 0$. Now, to obtain the first guarantee, we add and subtract $\bilin{\nabla_i h(\xhat{t})}{x_i^{t+1}-x_i^{t}}/{\eta_t}$ for each $t$ to the bilinear terms in the theorem statement, and rearrange the sums. For the second guarantee, invoke also Moreau's decomposition theorem, by which 
    $$ \nabla_i h(\xhat{t}) = \tanprojhhat{t} + \norprojhhat{t}.$$
\end{proof}

\begin{theorem}\label{thm:avg-bounds-reduction}
    In the setting of a smooth game for $\hat{N}$, if for any $h \in H \subseteq \coo_M(\times_{i \in \hat{N}} X_i, \mathbb{R})$ and for each player $i \in \hat{N}$, we have
        $$ \sum_{t = 0}^{T-1} \frac{1}{\eta_{t}}\bilin{\nabla_i h(\xhat{t})}{\x{t}_i + \eta_{t} \tanproju{t} - \x{t+1}_i} \leq \epsilon^\omega(T) \cdot \poly(\vec{G},\vec{L},h)$$
        then the time average play provides an $\epsilon^\omega(T)$-stationary CCE with respect to $H$. Likewise, if for any $h \in H \subseteq \coo_M(\times_{i \in \hat{N}} X_i, \mathbb{R})$ and each $i \in \hat{N}$, we have 
        $$ \sum_{t = 0}^{T-1} \frac{1}{\eta_{t}}\bilin{\nabla_i h(\xhat{t})}{\x{t}_i + \eta_{t} \nabla_i u_i(x^t) - \x{t+1}_i} - \bilin{\norprojhhat{t}}{\nabla_i u_i(x^t)}\leq \epsilon^\omega(T) \cdot \poly(\vec{G},\vec{L},h),$$
        then the time average play provides an $\epsilon^\omega(T)$-local CCE with respect to $H$. Moreover, if the step size $\eta_t$ is constant, we may consider $H \subseteq \coom_M(\times_{i \in \hat{N}} X_i, \mathbb{R})$ instead.
\end{theorem}

Thus again, whether we obtain first-order equilibrium guarantees depends on conditions we can check per-player. Unfortunately, absent regularity assumptions on either the utility gradients or the deviation generating test functions, the conditions of Theorem \ref{thm:avg-bounds-reduction} will fail to hold in general. We will show this by means of example now, but in Section \ref{sec:actual-regret} we will prove in Proposition \ref{prop:impossibility} that Proposition \ref{prop:average-bounds-functional} cannot translate to meaningful guarantees in adversarial settings.

\begin{example}\label{ex:impossibility}
    Consider a two player game, where both players have action set $X_1 = X_2 = [-1,1]$. Let $x_1^0 = 1$, and suppose player $1$ implements gradient ascent with step sizes $\eta_t = 1/\sqrt{t+1}$. Let $u_1(x) = x_1 x_2$, and suppose player $2$ plays for some small $\epsilon > 0$,
    $$ x_2^t = \begin{cases}
        -\epsilon \cdot \sqrt{\frac{t+1}{t+2}} & t \textit{ is even, and}\\
        1 & \textit{otherwise.}
    \end{cases}$$
    Then if $t$ is odd, $x_1^t = 1 - \epsilon / \sqrt{t+2}$, else if $t$ is even then $x_1^t = 1$. Let $h(x_1) = x_1$, then 
    \begin{align*}
        \sum_{t = 0}^{T-1} \bilin{\nabla_1 h(x_1^t)}{\proj{\tc{X_1}{x_1^t}}{\nabla_i u_i(x)}} \simeq \frac{T}{2} \left(1 - \epsilon \sqrt{\frac{t+1}{t+2}}\right),
    \end{align*}
    as player $2$ plays $1$ only when player $1$ plays an action $< 1$. Furthermore, the tangent cone projection of $\nabla_1 h(x_1)$ is equal to $0$ if and only if $x_1 = 1$, which implies that 
    \begin{align*}
        \sum_{t = 0}^{T-1} \bilin{\nabla_1 u_1(x^t)}{\proj{\tc{X_1}{x_1^t}}{\nabla_i h(x_1)}} = \frac{T}{2}.
    \end{align*}
\end{example}

However, note that the result of Example \ref{ex:impossibility} is really driven by the large variation in player $2$'s actions. When all players implement projected gradient ascent with the same step sizes $\eta_{t}$, then for any player $i$, $\|x_i^{t+1} - x_i^t\| \leq \eta_{t} G_i$, which implies that $\|\nabla_i u_i(x^{t+1}) - \nabla_i u_i(x^t) \| \leq \eta_t L_i \sum_{j \in N} G_j$. Or in words, the utility gradient of a given player $i$ must be a sequence of slowly changing vectors. This allows us to reduce whether the equilibrium condition holds to properties of each player's action set $X_i$, as it was the case for the continuous curves derived in Proposition \ref{thm:continuous-approx-general}.

\begin{definition}\label{def:binding-losses}
    For stepsizes $(\eta_t)_{t \in \mathbb{N}}$ and $G,C > 0$, the \textbf{maximum binding loss} of $X$ at time $T$ is 
    \begin{align*}
    B^T_+[X,G,C,(\eta_t)_{t \in \mathbb{N}}] = \sup_{x^0 \in X, g^t} & \sum_{t = 0}^{T-1} \frac{\| x^{t+1} - x^t - \eta_t \proj{\tc{X}{x^t}}{g^t} \|}{\eta_t}  \textnormal{ subject to }&& \\  \| g^{t+1} - g^t \| & \leq \eta_t C && \ \forall \ 0 \leq t < T-1, \\
    x^{t+1} - \proj{X}{x^t + \eta_t g^t} & = 0 &&\ \forall \ 0 \leq t < T, \\
    \| g^t \| & \leq G&& \ \forall \ 0 \leq t < T.
    \end{align*}
    Also, the \textbf{maximum unbinding loss} of $X$ at time $T$ is given,
    \begin{align*}
    B^T_-[X,G,C,(\eta_t)_{t \in \mathbb{N}}] = \sup_{x^0\in X, g^t} & \sum_{t = 0}^{T-1} \left\| \proj{\nc{X}{x^t}}{-\proj{\tc{X}{x^t}}{g^t}} \right\|  \textnormal{ subject to } && \\ \| g^{t+1} - g^t \| & \leq \eta_t C && \ \forall \ 0 \leq t < T-1, \\
    x^{t+1} & = \proj{X}{x^t + \eta_t g^t} &&\ \forall \ 0 \leq t < T, \\
    \| g^t \| & \leq G &&\ \forall \ 0 \leq t < T.
    \end{align*}
\end{definition}

To see gain intuition on why these terms are named (un)binding losses, and to see that they can indeed be small, it is helpful to consider the case of the half-space; which for all practical purposes is a one dimensional problem.

\begin{example}\label{ex:half-space}
    Suppose that $X = \mathbb{R}_+$, and $|g^{t+1} - g^t| < \eta$ for constant step sizes $\eta$, and $|g^t|<G$ for all time periods. To see that binding losses should be small, suppose that at time $t$, $\| x^{t+1} - x^t - \eta_t \proj{\tc{X}{x^t}}{g^t} \| = C \eta > 0$. Then $x^{t+1} = 0$ (because we required a projection), and $g^t \leq -C$. Therefore, for at least $C/\eta$ rounds, $g^t \leq 0$ and $\proj{\tc{X}{0}}{g^{t+\Delta}} = 0 = x^{t+\Delta +1}-x^{t+\Delta}$ for any $0 \leq \Delta \leq \lfloor C/\eta \rfloor$. Therefore, the average binding loss over these $\lceil C/\eta \rceil$ rounds is simply $C \eta / \lceil C/\eta \rceil \leq \eta^2$. If for some reason we incur a loss that we cannot smooth out over the last few rounds, the greatest amount of loss we can incur is $G \eta$ by the bound on $g^t$. We conclude that 
    $$ B^T_+[R_+,G,1,\eta] = G + T\eta.$$
    Meanwhile, for an unbinding loss, it is possible to have $x^{t} = 0$ and $g^t = G$ when $t = 0$. Else, if $x^t = 0$ for $t > 0$, it must have been that $x^{t-1} \geq 0$. Thus, $g^{t-1} \leq 0$ and $g^t \geq 0$, so by the slow change assumption, $g^t \leq \eta$. Assuming that such a loss occurs every single round, we get 
    $$ B^T_-[R_-,G,1,\eta] = G + T\eta.$$
    A choice of stepsize $\eta = 1/\sqrt{T}$ thus renders $B^T_\pm[R_+,G,1,\eta]/T = O(1/\sqrt{T})$. 
\end{example}

We are now ready to provide our last reduction to conclude this section, moving onwards to analysing action sets which ensure the bounds we derive in Theorems \ref{thm:smooth-bounds-reduction} and \ref{thm:avg-bounds-reduction} are small. 

\begin{proposition}\label{thm:avg-approx-general}
    Suppose, for a smooth game, that all players $i \in N$ implement projected gradient ascent with the same step sizes $\eta_t$. Then for any function $h \in \cdl(X,\mathbb{R})$ and any player $i \in N$,
    \begin{align}\label{eq:master-avg-SCCE}
        & \sum_{t = 0}^{T-1} \frac{1}{\eta_{t}}\bilin{\nabla_i h(\xhat{t})}{\x{t}_i + \eta_{t} \tanproju{t} - \x{t+1}_i} \\ \leq \ & G_h \cdot B^T_+\left[X_i,G_i,L_i\sum_{j \in N} G_j,(\eta_{t})_{t \in \mathbb{N}}\right] . \nonumber
    \end{align}
    Moreover, 
    \begin{align}\label{eq:master-avg-LCCE}
       & \sum_{t = 0}^{T-1} \frac{1}{\eta_{t}}\bilin{\nabla_i h(\xhat{t})}{\x{t}_i + \eta_{t} \nabla_i u_i(x^t) - \x{t+1}_i} - \bilin{\norprojhhat{t}}{\nabla_i u_i(x^t)} \\ \leq \ & G_h \left( B^T_+\left[X_i,G_i,L_i\sum_{j \in N} G_j,(\eta_{t})_{t \in \mathbb{N}}\right] + B^T_-\left[X_i,G_i, L_i\sum_{j \in N}  G_j,(\eta_{t})_{t \in \mathbb{N}}\right]\right). \nonumber
    \end{align}
\end{proposition}

\begin{proof}
    The first part of the proof follows from the Lipschitz continuity of $\nabla_i u_i$, the bounds on the magnitudes of $\nabla h$ and $\nabla_j u_j$, and the definition of the maximum binding loss. For the second part of the proof, recall that for any $x \in X$ and any player $i$,  
    \begin{align*}
        & \bilin{\proj{\tc{X_i}{x_i}}{\nabla_i h(x)}}{\nabla_i u_i(x)} - \bilin{\nabla_i h(x)}{\proj{\tc{X_i}{x_i}}{\nabla_i u_i(x)}} \\
        = \ & \bilin{\proj{\tc{X_i}{x_i}}{\nabla_i h(x)}}{\proj{\nc{X_i}{x_i}}{\nabla_i u_i(x)}} - \bilin{\proj{\nc{X_i}{x_i}}{\nabla_i h(x)}}{\proj{\tc{X_i}{x_i}}{\nabla_i u_i(x)}} \\
        \leq \ & \bilin{\proj{\nc{X_i}{x_i}}{\nabla_i h(x)}}{-\proj{\tc{X_i}{x_i}}{\nabla_i u_i(x)}}.
    \end{align*}
    Suppose the term does not equal zero at a given point $x$, then neither vector equals zero. Denote $\alpha = \proj{\nc{X_i}{x_i}}{\nabla_i h(x)}$, $\mu = -\proj{\tc{X_i}{x_i}}{\nabla_i u_i(x)}$. Let $\nu = \proj{\nc{X}{x}}{\mu}$, then $\proj{\nc{X}{x}}{\mu / \|\nu\|} = \hat{\nu}$ such that $\|\hat{\nu}\| = 1$. Similarly, write $\hat{\alpha} = \alpha / \| \alpha \|$, then $\alpha^T \mu = \| \alpha \| \| \nu \| \bilin{\hat{\alpha}}{\mu / \|\nu\|} \leq G_h \| \nu \| \bilin{\hat{\alpha}}{\mu / \|\nu\|}$. Now, by Moreau's decomposition theorem, $\mu / \|\nu\| = \hat{\nu} + \gamma$ for some $\gamma \in \tc{X_i}{x_i}$. Therefore,
    $$ \bilin{\hat{\alpha}}{\mu / \|\nu\|} = \bilin{\hat{\alpha}}{\hat{\nu} + \gamma} \leq \bilin{\hat\alpha}{\hat\nu} \leq 1,$$
    where the first inequality is because $\hat{\alpha} \in \nc{X_i}{x_i}$, whereas the second inequality is simply by the Cauchy-Schwarz inequality. Since if $x = x^t$, then $\| \nu\|$ is the term of $B^T_-[X_i,\ldots]$ at time $t$, we conclude the second part of the statement.
\end{proof}

\section{First-Order Coarse Equilibria for Well-Behaved Action Sets}\label{sec:action-set-conditions}

\subsection{Closed \& convex action sets of smooth boundary}\label{sec:smooth-bound}

Towards deriving approximation bounds, the first case we consider is when the boundary of $X_i$ is a \emph{regular hypersurface} in $\mathbb{R}^{D_i}$, where all of its principal curvatures are bounded by some $K > 0$. This setting will turn out to be straightforward, for the simple reason that its analysis is not so different from that of a half-space\footnote{Since the boundary of the half-space is flat, the setting of Example \ref{ex:half-space} is in fact a subcase.}. In other words, as long as the step-size is small compared to the curvature of the boundary, we effectively deal with one linear constraint. For a reader without a background in differential geometry, we provide a \emph{very} brief account of the facts on hypersurfaces in Euclidean space that we will need for our analysis.

\begin{definition}[Follows from \cite{lee2012} Proposition 5.16, mentioned in e.g. \cite{lima1988}]
    A \textbf{regular hypersurface} in $\mathbb{R}^{D}$ is a set $S$ such that for every point $p \in S$, there exists an open set $U \ni p$ and a smooth function $F : U \rightarrow \mathbb{R}$, such that $S \cap U = F^{-1}(0)$ and $\nabla F(p) \neq 0$. 
\end{definition}

Namely, a regular hypersurface $S$ is at each point defined locally via an equation $F(x) = 0$ for some function $F$. The plane tangent to $S$ at $p$ is given by the equation $x^T \nabla F(p) = p^T \nabla F(p)$. Working with the basis $(e_i)_{1\leq \ell \leq D}$ in $\mathbb{R}^D$ such that $\nabla F(p)$ has only its $D$'th coordinate non-zero, $F(x)$ admits a Taylor expansion
$$ F(x) = (x_D - p_D) \cdot \|\nabla F(p) \| + \frac{1}{2} \cdot \sum_{\ell, k = 1}^D \frac{\partial^2 F(p)}{\partial x_\ell \partial x_k} \cdot (x_\ell-p_\ell)(x_k - p_k) + \ldots $$
The implicit function theorem allows us to write $x_D$ as a function of other $x_\ell$ in a neighbourhood of $p$ on $S$. This dependence is dominated by the quadratic terms involving $x_1, \ldots, x_{D-1}$, thus for some $(D-1) \times (D-1)$ symmetric matrix $A$,
$$ (x_D - p_D) \simeq \sum_{\ell, k = 1}^{D-1} \frac{1}{2}  A_{\ell k} \cdot (x_\ell-p_\ell)(x_k - p_k) + o(\|x-p\|^2).$$
This matrix $A$ is called the \textbf{(scalar) second fundamental form}. As $A$ is symmetric, it has an orthonormal basis of eigenvectors with real eigenvalues. Each such eigenvector corresponds to a \textbf{principal direction}, and the corresponding eigenvalue is the \textbf{principal curvature} in that direction.

Thus, if the boundary of each action set $X_i$ is a regular hypersurface of bounded principal curvature, then the boundary of $X_i$ can be approximated by the level set of a quadratic function about each point. We also remark that the convexity assumption on $X_i$ implies that, whenever $-\nabla F(x) \in \nc{X_i}{x}$, all principal curvatures are bounded below by $0$; we shall orient our surface so that this holds. With this in mind, we proceed with our proofs of approximability. The first step of the analysis comprises of bounding the difference between $\frac{dx_i(\tau)}{d\tau}$ and the tangent cone projection $\proj{\tc{X_i}{x_i(\tau)}}{\nabla_i u_i(x^\ftau)}$. This turns out to be a straightforward matter in this setting, thanks to the following observation.

\begin{proposition}\label{prop:curvature-loss}
    Let $X_i$ be a closed and convex set of smooth boundary, and suppose its boundary $\delta X_i$ has non-negative principal curvature at most $K$. Then for any $\tau \in [0,T)$, 
    $$ \left\|\frac{dx_i(\tau)}{d\tau} - \eta_{\ftau} \cdot \proj{\tc{X_i}{x_i(\tau)}}{\nabla_i u_i(x^\ftau)}\right\| \leq \frac{K G_i \cdot (\tau - \ftau) \eta_{\ftau}^2}{1+ K G_i \cdot (\tau-\ftau) \eta_{\ftau}} \cdot \left\|\proj{\tc{X_i}{x_i(\tau)}}{\nabla_i u_i(x^\ftau)}\right\|.$$
\end{proposition}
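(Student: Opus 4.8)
The plan is to reduce the claim to a statement about how the nearest-point projection of a point moving along a straight line slides along the boundary $\partial X_i$, and to control this motion through the second fundamental form. First I would record a chain-rule reduction. Writing $p = x_i(\underline{\tau})$, $v = \nabla_i u_i(x(\underline{\tau}))$ and $s = (\tau-\underline{\tau})/\mu_\tau$, the curve on the relevant segment is $x_i(\tau) = \proj{X_i}{p + s v}$, so $\frac{dx_i(\tau)}{d\tau} = \frac{1}{\mu_\tau}\frac{d}{ds}\proj{X_i}{p+sv}$; since the comparison term $\frac{1}{\mu_\tau}\proj{\tc{X_i}{x_i(\tau)}}{v}$ carries the same factor $\frac{1}{\mu_\tau}$, it suffices to bound $\big\| \frac{d}{ds}\proj{X_i}{p+sv} - \proj{\tc{X_i}{x_i(\tau)}}{v}\big\|$ by $\frac{KG_i s}{1+KG_i s}\|\proj{\tc{X_i}{x_i(\tau)}}{v}\|$. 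If $p+sv$ lies in the interior of $X_i$ the projection is the identity and the tangent cone is all of $\mathbb{R}^{D_i}$, so both terms equal $v$ and there is nothing to prove; I therefore restrict to the regime where $q := \proj{X_i}{p+sv}$ lies on $\partial X_i$, which is where the derivative is defined almost everywhere.

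In this regime I would use the optimality characterization of the projection: with $y = p + sv$, one has $q \in \partial X_i$ and $y - q = \rho\, n(q)$ for the outward unit normal $n(q)$ and $\rho = \|y-q\| \ge 0$. Treating $F(q)=0$ and $y - q - \rho\, n(q) = 0$ as a system, an application of the implicit function theorem (whose Jacobian in $(q,\rho)$ is invertible because $I + \rho\, Dn(q)$ is invertible on the tangent plane and $\nabla F$ is normal) shows $q(s)$ and $\rho(s)$ are differentiable, with $\dot{q}$ tangent to $\partial X_i$. Differentiating $y - q = \rho\, n(q)$ in $s$ and projecting onto the tangent plane annihilates $\dot{\rho}\, n(q)$ and the normal part of $v$, leaving $(I + \rho S_q)\dot{q} = \proj{\tc{X_i}{q}}{v}$, where $S_q = Dn(q)$ is the shape operator. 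Here I use that $\proj{\tc{X_i}{q}}{v}$ is exactly the tangential component $v - \bilin{v}{n(q)} n(q)$: since $p \in X_i$, the function $s\mapsto \dist(p+sv, X_i)$ is convex and vanishes at $s=0$, hence nondecreasing, so $\bilin{v}{n(q)}\ge 0$ and $v$ points outward. Inverting gives $\dot{q} = (I+\rho S_q)^{-1}\proj{\tc{X_i}{q}}{v}$, whence $\dot{q} - \proj{\tc{X_i}{q}}{v} = -\rho(I+\rho S_q)^{-1}S_q\,\proj{\tc{X_i}{q}}{v}$.

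To finish I would bound the operator norm. The shape operator $S_q$ is symmetric with eigenvalues equal to the principal curvatures, which by convexity together with the hypothesis lie in $[0,K]$; thus $\rho(I+\rho S_q)^{-1}S_q$ is symmetric with eigenvalues $\frac{\rho\kappa}{1+\rho\kappa}\in[0,\frac{\rho K}{1+\rho K}]$, using that $x\mapsto \frac{\rho x}{1+\rho x}$ is increasing on $[0,\infty)$. This yields $\|\dot{q} - \proj{\tc{X_i}{q}}{v}\| \le \frac{\rho K}{1+\rho K}\|\proj{\tc{X_i}{q}}{v}\|$. Next I bound $\rho$: since $p \in X_i$, we have $\rho = \dist(y, X_i) \le \|y - p\| = s\|v\| \le s G_i$, and monotonicity of $\frac{\rho K}{1+\rho K}$ in $\rho$ upgrades the estimate to $\frac{sG_iK}{1+sG_iK}$. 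Reinstating the factor $\frac{1}{\mu_\tau}$ and substituting $s=(\tau-\underline{\tau})/\mu_\tau$ gives exactly the claimed inequality.

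The main obstacle I anticipate is the differential-geometric bookkeeping in the middle step: justifying differentiability of the projection in the boundary regime via the implicit function theorem, correctly identifying $Dn(q)$ restricted to the tangent plane with the second fundamental form whose eigenvalues are the principal curvatures, and fixing the sign convention so that convexity yields nonnegative curvatures bounded by $K$. The algebraic error bound, the identification of $\proj{\tc{X_i}{q}}{v}$ with the tangential component, and the distance estimate $\rho \le sG_i$ are then routine.
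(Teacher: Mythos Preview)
Your proposal is correct and follows essentially the same approach as the paper: both linearize the boundary projection via the second fundamental form to obtain $\dot q = (I+\rho S_q)^{-1}\proj{\tc{X_i}{q}}{v}$, then bound the resulting error operator using the curvature hypothesis and the distance estimate $\rho \le sG_i$. The paper carries this out by fixing coordinates aligned with the principal directions at $x_i(\tau)$, replacing $\partial X_i$ by its osculating paraboloid, and solving the resulting quadratic projection problem to first order in $\epsilon$, whereas you phrase the identical computation coordinate-free via the shape operator and implicit differentiation of $y-q=\rho\,n(q)$; the eigenvalue estimates $\frac{\rho k_\ell}{1+\rho k_\ell}\le\frac{\rho K}{1+\rho K}$ are the same in both.
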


\begin{proof}
    Fix $\tau \in [0,T)$ arbitrarily. We shall work in coordinates where $x_i(\tau)$ is at the origin\footnote{i.e. we work in coordinates where $x_i(\tau) = 0$, which can be assured by some translation. This ensures that we don't need to write out the $-x_i(\tau)$ for $\Delta x_i(\tau)$, the difference between $x_i(\tau)$ and the point that defines $x_i(\tau)$ pre-projection, which in turn makes the specification of the minimisation problem a bit easier to read.}, and
    $$\Delta x_i(\tau) \equiv x_i^\ftau + (\tau-\ftau) \cdot \eta_{\ftau} \cdot \nabla_i u_i(x^\ftau) = h \cdot e_1$$ 
    for some $h \geq 0$, where each $e_2, e_3, ..., e_{D_i}$ are principal directions of curvature for $\delta X_i$ at $x_i(\tau) = 0$. In this case by the smoothness of the boundary, for a small neighbourhood about $x_i(\tau)$, $X_i$ is well-approximated (up to second order in $(w_\ell)_{\ell = 2}^{D_i}$) by the convex body,
    $$ \tilde{X}_i = \left\{ w \in \mathbb{R}^{D_i} \ | \ w_1 \leq \sum_{\ell = 2}^{D_i} -\frac{1}{2}k_\ell w_\ell^2 \right\}, $$
    where $k_\ell$ is the (principal) curvature in direction $e_\ell$ for the surface $\delta X_i$ at point $x_i(\tau)$. Note that by assumption, $k_\ell \leq K$ for every $2 \leq \ell \leq D_i$. As a consequence, at time $\tau + \epsilon$ for small $\epsilon > 0$, $x_i(\tau+\epsilon)$ and the solution to the projection problem 
    \begin{align*}
        \min_w \quad (w_1 - h - \epsilon g_1 \eta_{\ftau})^2 + \sum_{\ell = 2}^{D_i} (w_\ell - \epsilon g_\ell \eta_{\ftau})^2 \textnormal{ subject to } w_1 \leq \sum_{\ell = 2}^{D_i} -\frac{1}{2}k_\ell w_\ell^2 
    \end{align*}
    agree up to first-order terms in $\epsilon$. At any solution, the constraint will bind, leading us to the unconstrained optimisation problem 
    \begin{align*}
        \min_w \quad \left(- h - \epsilon g_1 \eta_{\ftau} - \sum_{\ell = 2}^{D_i} \frac{1}{2} k_\ell w_\ell^2 \right)^2 + \sum_{\ell = 2}^{D_i} \left(w_\ell - \epsilon g_\ell \eta_{\ftau}\right)^2.
    \end{align*}
    Now, the first-order optimality conditions are 
    $$ \forall \ 2 \leq \ell \leq D_i, k_\ell w_\ell \left(- h - \epsilon g_1 \eta_{\ftau} - \sum_{\ell = 2}^{D_i} \frac{1}{2}k_\ell w_\ell^2 \right) + w_\ell - \epsilon g_\ell \eta_{\ftau} = 0.$$
    For sufficiently small $\epsilon > 0$, at an optimal solution $w_\ell = O(\epsilon)$ for each $\ell \geq 2$ and $\sum_{\ell = 2}^{D_i} k_\ell w_\ell^2 = O(\epsilon^2)$. Therefore, up to first-order in $\epsilon$, for each $\ell \geq 2$, 
    $$ w_\ell = \frac{\epsilon g_\ell \eta_{\ftau}}{1 + k_\ell h}.$$
    This in turn implies that 
    $$\frac{dx_{i\ell}(\tau)}{d\tau} = \begin{cases}
        0 & \ell = 1, \\
        \frac{g_\ell \eta_{\ftau}}{1 + k_\ell h} & \ell \neq 1.
    \end{cases}$$
    Therefore, the difference between the velocity of motion and $\eta_{\ftau}$ scaled projection to the tangent cone of the utility gradient is, for any principal direction $\ell$,
    $$ \left( \frac{dx_i(\tau)}{d\tau} - \eta_{\ftau} \cdot \proj{\tc{X_i}{x_i(\tau)}}{\nabla_i u_i(x^\ftau)} \right)_\ell = -g_\ell \eta_{\ftau} \cdot \left(\frac{k_\ell h}{1 + k_\ell h}\right).$$
    By the bound on the magnitude of $\nabla_i u_i(x^\ftau)$ and the feasibility of $x_i^\ftau$, $h \leq G_i \cdot (\tau - \ftau) \eta_{\ftau}$. In turn, $\frac{k_\ell h}{1 + k_\ell h} \leq \frac{K h}{1 + K h}$, by which we have the desired result.
\end{proof}

In particular, our bound on the curvature loss from Proposition \ref{prop:curvature-loss} shows that at each step $t$, the cumulative magnitude of the difference between the velocities for the curve implied by the true projected gradient dynamics of the game and that of the approximate projected gradient dynamics are of order $O(\eta_{it}^2)$. As a consequence, we infer that if $X_i$ has a smooth boundary of bounded principal curvature, the correction term $R_{it}$ in Proposition \ref{thm:smooth-bounds-reduction} must be small.

\begin{theorem}\label{thm:step-size-diff}
    In the setting of a smooth game for $\hat{N} \ni i$, whenever $X_i$ is a closed and convex set with smooth boundary of bounded principal curvature $K_i$, 
    $$ R_{it} = \int_t^{t+1} d\tau \cdot \left\| \eta_{t} \cdot \proj{\tc{X_i}{x_i(\tau)}}{\nabla_i u_i(x(\ftau))} - \frac{dx_i(\tau)}{d\tau} \right\| \leq \frac{1}{2} K_i G_i^2 \eta_{t}^2.$$
\end{theorem}

\begin{proof}
    Integrate from $t$ to $t+1$ the inequality in Proposition \ref{prop:curvature-loss}, noting that fixing the denominator on the RHS to $1$ only degrades the bound.
\end{proof}

It remains to show the approximability of an $\epsilon$-local CCE with respect to the set of all differentiable functions $h : X \rightarrow \mathbb{R}$ with Lipschitz gradients. For a set with a smooth boundary, at any time step, there is effectively one ``co-moving'' linear constraint that binds; the normal cone at any $x_i \in X_i$ is either a half-line or is equal to $\{0\}$. This essentially allows us to repeat the tangency arguments we employed for the toy example of a hypercube in Section \ref{sec:on-SCCE}. 

\begin{theorem}\label{thm:stationary-curved-LCCE}
    Whenever $X_i$ is a closed and convex set with smooth boundary of bounded principal curvature $K$, in the setting of Proposition \ref{thm:smooth-bounds-reduction},
    $$ \bilin{\proj{\tc{X_i}{x_i(\tau)}}{\nabla_i u_i(x(\tau))}}{\proj{\nc{X_i}{x_i(\tau)}}{\nabla_i h(x(\tau))}} = 0$$
    for almost every $\tau \in [0,T)$.
\end{theorem}

\begin{proof}
    The term is positive if and only if the normal vector component of $\frac{dx_i(\tau)}{d\tau}$ is strictly less than $0$; i.e. $\frac{dx_i(\tau)}{d\tau}$ points strictly inwards to $X_i$. Therefore, $\frac{dx_i(\tau)}{d\tau}$ is strictly in the relative interior of $\tc{X_i}{x_i(\tau)}$ (which is a half line), and thus for some positive $\gamma > 0$, $x_i(\tau')$ is in the interior of $X_i$ for $\tau' \in (\tau,\tau+\gamma)$. We conclude that the set of such $\tau$ has measure zero by the cardinality arguments made in the case of hypercubes in Section \ref{sec:on-SCCE}.
\end{proof}

We shall now proceed to bound binding and unbinding losses in our setting. Towards establishing these bounds, we shall require two intermediate results. The first implies that on the boundary $\delta X_i$ of a player's action set $X_i$, taking a step along the tangent cone projection of the utility gradient incurs negligible projection error; this is in the spirit of Proposition \ref{prop:curvature-loss}, but covers also the case when the utility gradient may point towards the interior of $X_i$. The second, in turn, shows that the normal cone vector aligns itself with the utility gradient after a projected gradient ascent step.

\begin{lemma}\label{lem:smooth-tangent-proj}
    Suppose that $X_i \subseteq \mathbb{R}^{D_i}$ is a closed \& convex set of bounded curvature $K_i$, $x_i \in \delta X_i$, and $g \in \tc{X_i}{x_i}$. Then for any $\eta > 0$,
    $\left\| x_i + \eta \cdot g - \proj{X_i}{x_i + \eta \cdot g} \right\| \leq \eta^2 \|g\|^2/2K_i$.
\end{lemma}

\begin{proof}
    Without loss of generality, we shall work with the orthonormal basis $\{e_1,e_2, \ldots e_{D_i}\}$, and coordinates such that $x_i = K_i \cdot e_1$, $\nc{X_i}{x_i} = \{ c \cdot e_1 \ | \ c \in \mathbb{R}_+ \}$ and thus $e_2, e_3, \ldots, e_{d_i}$ span the tangent space to $X_i$ at $x_i$. We remark that, by Blaschke's rolling ball theorem \cite{blaschke1916kreis,brooks1989blaschke}, the bounded positive curvature property implies that $X_i$ contains the ball $\mathbb{B}_{K_i}(0)$ of radius $K_i$ centred at the origin. Moreover, since $g \in \tc{X_i}{x_i}$, we have the normal space component of $g$, $g_n = g^T e_1 \leq 0$. Therefore, denoting $g_t = g - g_n \cdot e_1$ as the projection of $g$ onto the tangent space to $X_i$ at $x_i$, we have 
    \begin{align*}
        \| x_i + \eta \cdot g \| & = \sqrt{\left({K_i} + g_n \eta \right)^2 + g_t^2} = {K_i} \cdot \sqrt{1 + \frac{2\eta g_n}{K_i} + \frac{\eta^2 (g_n^2 + g_t^2)}{{K_i}^2}}  = {K_i} \cdot \sqrt{1 + \frac{2\eta g_n}{K_i} + \frac{\eta^2 \|g\|^2}{{K_i}^2}}.
    \end{align*} 
    Now, if the term in the square root is $\leq 1$, then $x_i + \eta g \in \mathbb{B}_{K_i}(0) \subseteq X_i$, and thus the projection error is exactly zero. If it is greater than $1$, then using the inequality $\sqrt{1+\alpha} \leq 1 + \alpha/2$ and that $g_n \leq 0$, we get 
    $$\| x_i + \eta \cdot g \| \leq {K_i} \cdot \left(1 + \frac{\eta^2 \|g\|^2}{2{K_i}^2} \right).$$
    Meanwhile, ${K_i}(x_i + \eta \cdot g) / \| x_i + \eta \cdot g\| \in \mathbb{B}_{K_i}(0)$. Therefore, by the distance minimality of the projection, and as we have reduced to the case when $\| x_i + \eta \cdot g\| > {K_i}$, 
    $$ \left\| x_i + \eta \cdot g - \proj{X_i}{x_i + \eta \cdot g} \right\| \leq  \left\| x_i + \eta \cdot g \right\|\cdot \left|1 - \frac{K_i}{\|x_i + \eta \cdot g \|} \right| = \left\| x_i + \eta \cdot g \right\| - {K_i}.$$
\end{proof}

\begin{lemma}\label{lem:normal-reinforce}
    Let $X \subseteq \mathbb{R}^{D}$ be a closed and convex set, $y \in X$ and $g \in \mathbb{R}^D$. Denote $z = \proj{X}{y + g}$, $\proj{\nc{X}{y}}{g} = g_y$, and $\proj{\nc{X}{z}}{g} = g_z$. Then $g_y^T g \leq g_z^T g$.
\end{lemma}

\begin{proof}
    The result is immediate if $g_y = 0$ so suppose not. Consider the polyhedron $P$, defined by the inequalities $x^T g_y \leq y^T g_y$ and $x^T (y + g - z) \leq z^T (y + g - z)$. Then $P \supseteq X$, and moreover, the projections $\proj{P}{y+g} = \proj{X}{y+g}$, $\proj{\nc{P}{y}}{g} = g_y$ are also preserved. We will show that $g_y^T g \leq \proj{\nc{P}{z}}{g}^T g$, which will imply the desired result as $\nc{P}{z} \subseteq \nc{X}{z}$.

    Now, the result follows if $g_y$ and $y + g - z$ are collinear, so suppose not. Furthermore, if $z^T g_y = y^T g_y$, then $\nc{P}{z} \supseteq \nc{P}{y}$ and the result follows, so suppose that $z^T g_y < y^T g_y$. Then we may restrict attention to the intersection of $P$ with the two dimensional subspace spanned by $g_y$ and $y + g - z$, as the projections in directions orthogonal to them can be quotiented out for the purposes of normal cone projections on $P$. Without loss of generality, we shall thus assume that $z - y$ and $g$ both lie in the span of $g_y, y + g - z$. Denote the corresponding unit vectors $n_1$ and $n_2$.  
    
    Now, in two dimensions, for any $t \geq 0$, $d/dt [\proj{X}{y+ t \cdot g}]$ equals either $\proj{\tc{X}{y}}{g}$ or $0$ by the results of \cite{mortagy2020walking}. It cannot be that $\proj{\tc{X}{y}}{g}^T n_2 \leq 0$, or this would hold for any $t \in \mathbb{R}_+$, but $\proj{X}{y+g} \neq y + \proj{\tc{X}{y}}{g}$. Furthermore, it cannot be that $n_1$ and $n_2$ contain $g$ in their conic span, as in that case at some $t > 0$, $y + \delta \cdot \proj{\tc{X}{y}}{g}$ would have both inequalities binding, a point after which $\proj{X}{y + t \cdot g}$ would remain constant for any greater $\delta$. Therefore, $n_2$ must lie in the acute angle between $g$ and $n_1$, from which the result follows.
\end{proof}

These two lemmata together imply that in the setting of a smooth game, the binding losses for player $i$ in Proposition \ref{thm:avg-bounds-reduction} are indeed small.

\begin{theorem}\label{thm:binding-losses-smooth}
    Suppose that $X_i$ has a smooth boundary of bounded curvature $K_i$. Then for any non-increasing stepsizes $(\eta_t)_{t \in \mathbb{N}}$,
    \begin{align*}
    B^T_+\left[X_i,G,C,(\eta_t)_{t \in \mathbb{N}}\right] &\leq G + \left(\sum_{t = 0}^{T-1} \eta_t\right) \left(C + \frac{G^2}{2K_i}\right), \\
    B^T_-\left[X_i,G,C,(\eta_t)_{t \in \mathbb{N}}\right] &\leq G + C \left(\sum_{t = 0}^{T-1} \eta_t\right).
    \end{align*} 
\end{theorem}

\begin{proof}
    Take any sequence $(g^t)_{0 \leq t < T}$ such that $\|g^t\| \leq G, \|g^{t+1} - g^t\| < C$ for any $t$, and any initial point $x_i^0$. Given the sequence $x_i^{t+1} = \proj{X_i}{x_i^t + \eta_t g_t}$, denote the outwards unit normal at time $t$ to $\delta X_i$ at $x_i^t$ as $\hat{n}^t$, while writing $\hat{n}^t = 0$ if $x_i^t$ is in the interior of $X_i$. Now, consider the vector encoding the step size scaled binding loss at time $t$,
    $$ \mu^t = \frac{x_i^{t} - x_i^{t+1}}{\eta_t} + \proj{\tc{X_i}{x_i^t}}{g^t}.$$
    First suppose that $x_i^t \in \delta X_i$, then by Lemma \ref{lem:smooth-tangent-proj}, the binding loss at time $t$ is at most $\| \mu^t \| \leq \eta_t G^2 / 2 K_i$. So it is sufficient to focus on the case when $x_i^t \notin \delta X_i$, in which case $\proj{\tc{X_i}{x_i^t}}{g^t} = g^t$. Therefore, $\mu^t$ is precisely the projection difference, implying that $\mu^t \in \nc{X_i}{x_i^{t+1}}$. In particular, $\mu^t = \| \mu^t \| \hat{n}^{t+1}$. In this case, $x_i^{t+1}$ maximises the linear function defined on $X_i$ via $\hat{n}^{t+1}$, which implies that 
    $$ \|\mu_t\| \leq \bilin{g_t}{\hat{n}^{t+1}} = \|\proj{\nc{X_i}{x_i^{t+1}}}{g_t}\|.$$
    By Lemma \ref{lem:normal-reinforce}, if $g^{t+\Delta} \in \nc{X_i}{x_i^t}$ and $x_i^{t+\Delta} \in \delta X_i$, then $x_i^{t+\Delta+1} \in \delta X_i$; that is, to enter the interior of $X_i$ at time $t + \Delta + 1$, $g^{t+\Delta}$ must be in the tangent cone to $X_i$ at $x_i^{t+\Delta}$. Moreover, the change in the inner product $\bilin{g^{t+\Delta}}{\hat{n}^{t+\Delta}}$ is bounded below, 
    \begin{align*}
    \bilin{g^{t+\Delta}}{\hat{n}^{t+\Delta}} - \bilin{g^{t+\Delta-1}}{\hat{n}^{t+\Delta-1}} & \geq \bilin{g^{t+\Delta} - g^{t+\Delta-1}}{\hat{n}^{t+\Delta}} + \bilin{g^{t+\Delta-1}}{\hat{n}^{t+\Delta} - \hat{n}^{t+\Delta-1}} \geq -C,
    \end{align*} 
    by Lemma \ref{lem:normal-reinforce}, the bound on the change in $g^t$, and the Cauchy-Schwarz inequality. Thus, by induction, we conclude that $x_i^{t+1}, x_i^{t+2}, \ldots, x_i^{t+\Delta-1}$ remain on the boundary of $X_i$ for the minimum positive integer $\Delta$ which satisfies 
    $$ \bilin{g_t}{\hat{n}^{t+1}} \leq C \cdot \sum_{t' = t}^{t+\Delta-1} \eta_t.$$
    In particular, $\bilin{g_t}{\hat{n}^{t+1}} / \Delta \leq C \cdot \eta_t$ by the assumption that $\eta_t$ is decreasing. This allows us to distribute the loss incurred at time $t$ to periods $t, t+1, \ldots, t+\Delta-1$, so long as $t+\Delta \leq T$. Only one binding loss event may occur such that this does not hold; on a final stretch of the iterates, in which case we may use the upper bound $\|\mu_t\| \leq \|g^t\| \leq G$. As a consequence, 
    $$B^T_+\left[X_i,G,C,(\eta_t)_{t \in \mathbb{N}}\right] \leq G + \left(\sum_{t = 0}^{T-1} \eta_t\right) \left(C + \frac{G^2}{2K_i}\right).$$

    For the unbinding loss, note that the only possible non-zero contributions to the sum occur at times $t$ such that $x_i^t \in \delta X_i$. At the very first round, we can have $x_i^0 \in \delta X_i$ and $g^0 = -G \hat{n}^0$, implying a constant contribution of $G$. Else, if $t > 0$ and $x_i^t \in \delta X_i$, it must have been that $\bilin{g^{t-1}}{\hat{n}^t} \geq 0$ by Lemma \ref{lem:normal-reinforce}. Therefore, $\bilin{-g^t}{\hat{n}^t} \leq \bilin{g^t - g^{t-1}}{\hat{n}^t} \leq C\eta_t$. We conclude that 
    $$B^T_-\left[X_i,G,C,(\eta_t)_{t \in \mathbb{N}}\right] \leq G + \sum_{t = 0}^{T-1} C \eta_t.$$
\end{proof}

\subsection{Polyhedral action sets}\label{sec:polyhedral}

We now turn our attention to the approximability of stationary and local CCE for smooth games, when the action set $X_i$ of player $i$ is polyhedral; that is to say, for player $i$, there exists $A_i \in \mathbb{R}^{m_i \times D_i}$ and $b_i \in \mathbb{R}^{m_i}$ such that $X_i = \{ x_i \in \mathbb{R}^{D_i} \ | \ A_i x_i \leq b_i \}$. We shall assume, without loss of generality, that the polyhedron $X_i$ has volume in $\mathbb{R}^{D_i}$, and each $A_i x_i \leq b_i$ has no redundant inequalities. Moreover, the rows $a_{ij}$ of each $A_i$ will be assumed to be normalised, $\|a_{ij}\| = 1 \ \forall \ i \in N, j \in m_i$. With that in mind, we begin by defining the class of polyhedra of interest, over which approximation of stationary or local CCE is \emph{``easy''}.

\begin{definition}\label{def:acute}
    A polyhedron $\{ x \in \mathbb{R}^d \ | \ \bilin{a_j}{x} \leq b_j \ \forall \ j \in m \}$ is said to be \textbf{acute} if for every distinct $j, j' \in m$, $\bilin{a_j}{a_{j'}} \leq 0$. 
\end{definition}

Examples of acute polyhedra include potentially the most \emph{``game-theoretically relevant''} polyhedra, the hypercube and the simplex. That the hypercube is acute in this sense is straightforward: for each $\ell \in D_i$, the constraints $-x_{i\ell} \leq 0$ and $x_{i\ell} \leq 1$ have negative inner product for the corresponding rows of $A$, while any other distinct pairs of rows of $A$ are orthogonal. In turn for the simplex, factoring out the constraint $\sum_{\ell = 1}^{D_i} x_i = 1$, we are left with the set of inequalities,
$$ \forall \ \ell \in D_i, x_{i \ell} - \sum_{\ell' = 1}^{D_i} \frac{1}{n} x_{i\ell'} \geq -\frac{1}{n}.$$
For any distinct $\ell, \ell''$, the inner product of the vectors associated with the left-hand side of these constraints is then 
$$\bilin{e_{i \ell} - \sum_{\ell' = 1}^{D_i} \frac{1}{n} e_{i\ell'}}{e_{i \ell''} - \sum_{\ell' = 1}^{D_i} \frac{1}{n} e_{i\ell'}} = -\frac{1}{n} -\frac{1}{n} +\frac{1}{n} < 0.$$

As mentioned, the importance of acute polyhedra is that linear optimisation  over them is trivial; a greedy algorithm suffices. Moreover, over such polyhedra $x_i(\tau)$ always follows the tangent cone projection of $\eta_t \nabla_i u_i(x(\ftau))$, rendering the approximate projected gradient dynamics \emph{faithful}. The latter statement is the one we need for the desired approximation bounds for the continuous curve, which is proven in the following proposition. 

\begin{proposition}\label{lem:trivial}
    Suppose that $X = \{ x \in \mathbb{R}^d \ | \ \bilin{a_j}{x} \leq b_j \ \forall \ j \in m \}$ is an acute polyhedron in $\mathbb{R}^d$, $x^* \in X$, and $g \in \mathbb{R}^d$. Then for $x(\tau) = \proj{X}{x^* + \tau \cdot g}$, 
    $$ \frac{dx(\tau)}{d\tau} = \proj{\tc{X}{x(\tau)}}{g}.$$
    As a consequence, in Proposition \ref{thm:continuous-approx-general}, if $X_i$ is an acute polyhedron then $R_{it} = 0$ for every $t$.
\end{proposition}

\begin{proof}
    Without loss of generality, we will work in coordinates such that $x^* = 0$. Consider the projection problem for any $\tau' \geq 0$, 
    \begin{align*}
        \min_{x} \frac{1}{2}\| x - \tau' \cdot g \|^2 \textnormal{ subject to } A x \leq b.
    \end{align*}
    Its dual problem is given,
    \begin{align*}
        \max_{\mu \geq 0} -\frac{1}{2} \| A^T \mu \|^2 + \bilin{\mu}{\tau' \cdot Ag - b}.
    \end{align*}
    Now, given a sequence $\tau' > \tau \geq 0$ decreasing to $\tau' \downarrow \tau$, there is an infinitely repeated set of active rows $J = \{ j \in m \ | \ \mu_j(\tau') > 0\}$ for the optimal solutions $\mu(\tau')$ to the dual projection problem. This is necessarily the set of active and relevant constraints at time $\tau$, and for the corresponding $|J| \times d$ submatrix $B$ of $A$,
    $$\mu_j(\tau') = ((BB^T)^{-1} [\tau' \cdot B g - b])_j,$$
    for $\tau' \geq \tau$ sufficiently close to $\tau$. In particular, $d\mu(\tau)_j /d\tau = [(BB^T)^{-1} B g]_j$ whenever $j \in J$ and zero otherwise, while $dx(\tau)/d\tau = g - B^T (BB^T)^{-1} B g$.

    Now, the acuteness condition implies that the off-diagonal elements of $B B^T$ are non-positive, whereas it is both positive semi-definite and invertible; implying it is positive definite. Therefore, by \cite{fiedler1962matrices} (Theorem 4.3, $9^\circ$ and $11^\circ$), $(B B^T)^{-1}$ has all of its entries non-negative. Meanwhile, by feasibility of $x^*$ ($=0$ in our choice of coordinates), $b$ has all of its entries non-negative, which implies that $(BB^T)^{-1} b$ also has only non-negative entries. Therefore, $d\mu_j (\tau)/d\tau = \lim_{\tau' \downarrow \tau} (\mu_j(\tau') + [(BB^T)^{-1} b]_j)/\tau' > 0$ for every row $j$ which is active, and zero otherwise.

    Thus all that remains to show is that $\proj{\tc{X}{x}}{g} = g - B^T (BB^T)^{-1} B g$. Let $I \subseteq m$ be the set of constraints which bind at $x(\tau)$ (but potentially, for $j \in I, d\mu(\tau)_j/d\tau = 0$), and $C$ the associated $|I| \times m$ submatrix of $A$. Then consider the tangent cone projection problem
    \begin{align*}
        \min_{x} \frac{1}{2}\| x - g \|^2 \textnormal{ subject to } C x \leq 0.
    \end{align*}
    The dual projection problem is then,
    \begin{align*}
        \max_{\nu \geq 0} -\frac{1}{2} \| C^T \nu \|^2 + \bilin{\nu}{Cg}.
    \end{align*}
    We need to show that $\nu = (BB^T)^{-1} B g$ is an optimal solution. But this is immediate now, as the feasibility of $x(\tau) + \Delta \tau \cdot (g - B^T (BB^T)^{-1} B g)$ for small $\Delta \tau > 0$ implies that $x = g - B^T (BB^T)^{-1} B g$ is a feasible solution to the tangent cone projection problem, with solution value $g^T B^T (BB^T)^{-1} B g / 2$. Meanwhile, the given $\nu = d\mu(\tau)/d\tau$ is dual feasible, with the same solution value. By weak duality, both solutions are necessarily optimal.
\end{proof}

\noindent\textbf{Remark.} The implications of Proposition \ref{lem:trivial} were already proven for the hypercube and the simplex in \cite{mortagy2020walking}, in which the authors study approximations of projected gradient dynamics. Indeed, the analysis here is in a similar vein in that we track the time evolution of the projection curve, though we identify and exploit features of the polyhedron which makes linear optimisation over it trivial.

\vspace{8pt}Unfortunately, that the utility gradients are followed perfectly does \emph{not} translate to reasonably small bounds for time-average guarantees for polyhedra; at least not with the methods we have at our disposal. In particular, we will obtain bounds on the binding losses which are of order $O(m_i^{2(D_i+1)}/\sqrt{T})$ for the usual choices of stepsizes, which are nevertheless vanishing in time, but exponentially deteriorating in the number of dimensions of player $i$'s action set. Our general technique will be to try replicating the analysis in Section \ref{sec:smooth-bound}, except instead of effectively dealing with a single linear constraint, we will need to account for binding events across \emph{every pair of faces} of the polyhedron.

\begin{definition}
    Let $X = \{ x \in \mathbb{R}^d \ | \ \bilin{a_j}{x} \leq b_j \ \forall \ j \in m \}$ be a polyhedron. Let $I \subseteq m$ be an index set of size $d-k$, and let 
    $$F = \{ x \in \mathbb{R}^d \ | \ \bilin{a_j}{x} = b_j \ \forall \ j \in I, \bilin{a_j}{x} < b_j \ \forall \ j \in m\setminus I \}.$$
    Then if $F \neq \emptyset$, $F$ is called a $k$-\textbf{face} of $X$, and we will denote $F(I) \equiv I$. 
\end{definition}

We note that the acuteness property is downwards closed with respect to faces, in the sense that the closure $\bar{F}$ of each face $F$ of $X$, whenever non-empty, is an acute polyhedron itself as a consequence of the following lemma.

\begin{lemma}\label{lem:projection-acute-1}
    Let $X = \{ x \in \mathbb{R}^d \ | \ \bilin{a_j}{x} \leq b_j \ \forall \ j \in m \}$ be an acute polyhedron, let $I \subseteq m$, and denote $S = \textnormal{span}\{a_k \ | \ k \in I\}^\perp$. Then for any pair $i \neq j$ in $m \setminus I$, $\bilin{\proj{S}{a_i}}{\proj{S}{a_j}} \leq 0$. Moreover, $\proj{S}{a_i} = a_i + \sum_{k \in I} \mu_k a_k$ for non-negative weights $\mu_k$.
\end{lemma}

\begin{proof}
    We shall consider the case when $|I| = 1$, since the result then follows by inductively adding in projections of $a_k$ for $k \in I$. In this base case, $a'_i = a_i - a_k \bilin{a_i}{a_k}$ and $a'_j = a_j - a_k \bilin{a_j}{a_k}$. The inner product terms are non-negative by the acuteness condition, showing that $\mu_k = -\bilin{a_i}{a_k} \geq 0$. Therefore, 
    $\langle a'_i, a'_j \rangle = \bilin{a_i}{a_j} - \bilin{a_i}{a_k} \bilin{a_j}{a_k} \leq \bilin{a_i}{a_j} \leq 0$.
\end{proof}

Moreover, tangent and normal cones are constant as functions of $x$ when evaluated at a given face. As a consequence, it makes sense to refer to \emph{the} tangent and normal cones to $X$ at a given face $F$, as well as to define a \emph{relevant} tangent cone where we quotient out linear subspaces the tangent cone may contain.

\begin{definition}
    Let $X$ be a polyhedron, and $F$ a face of $X$. Then the \textbf{tangent} and \textbf{normal cones} to $X$ at $F$ are defined, for an arbitrary choice of $x \in F$, $\tc{X}{F} = \tc{X}{x}$ and $\nc{X}{F} = \nc{X}{F}$. Moreover, the \textbf{relevant tangent cone} to $X$ at $F$ is defined, $\rtc{X}{F} = \tc{X}{F} \cap \textnormal{span}[\nc{X}{F}]$.
\end{definition}

To bound the binding losses, we will need to invoke the greedy tangent projection property of acute polyhedra, which follows from the proof of Lemma \ref{lem:projection-acute-1}. That is to say, if $X \subseteq \mathbb{R}^d$ is an acute polyhedron, a simple algorithm using the Gram-Schmidt process on binding constraints with positive inner product iteratively is sufficient to project onto the tangent cone.

\begin{algorithm}
    \caption{Tangent cone projection on an acute polyhedron}\label{alg:acute-proj}
    \SetKwComment{Comment}{/* }{ */}
    \SetKwInput{Input}{Input}
    \SetKwInput{Output}{Output}
    \Input{$X = \{ x \in \mathbb{R}^d \ | \ \bilin{a_j}{x} \leq b_j \ \forall \ j \in m \}$ an acute polyhedron, $g \in \mathbb{R}^d$, $x \in X$}
    \Output{$\proj{\tc{X}{x}}{g}$}
    Initialise $g' \leftarrow g$\;
    Let $I = \{i \in m \ | \ \bilin{a_i}{x} = b_i \}$\;
    Initialise $a'_i \leftarrow a_i$ for any $i \in I$\;
    \While{$\exists i \in I, \bilin{a'_i}{g'} \geq 0$}{
        Pick any $i \in I, \bilin{a'_i}{g'} \geq 0$\;
        Let $g' \leftarrow g' - \bilin{a'_i}{g'} a'_i$\;
        \For{any $j \in I, j \neq i$}{
            Set $a'_j \leftarrow a'_j - \bilin{a'_i}{a'_j}a'_i$\;
            Renormalise $a'_j \leftarrow a'_j / \|a'_j\|$\;
        }
    }
    Return $g'$\;
\end{algorithm}

\begin{lemma}
    Algorithm \ref{alg:acute-proj} returns $g' = \proj{\tc{X}{x}}{g}$.
\end{lemma}

\begin{proof}
    The renormalisation step is valid, since $\bilin{a_i}{a_j} \neq -1$  for any pair of constraints $i \neq j$ that might bind at $x$ by the full volume assumption on $X$ which ensures that $\|a'_i\| > 0$ at every iteration. The output of the algorithm satisfies $g = g' + \sum_{i \in I} \mu_i a_i$ for some positive multipliers $\mu_i$, hence $\sum_{i \in I} \mu_i a_i \in \nc{X}{x}$. Meanwhile, $g'$ satisfies, by the stopping condition, $\bilin{a_i}{g'} \leq 0$ for any $i \in I$, with equality if and only if $\mu_i 
    > 0$. Therefore, $g' \in \tc{X}{x}$ with $\langle g', \sum_{i \in I} \mu_i a_i \rangle = 0$, which by Moreau's decomposition theorem implies that $g' = \proj{\tc{X}{x}}{g}$.
\end{proof}



Now, binding losses that are contained in the closure of a face $\bar{F}$ of $X$ are upper bounded by the binding loss if we were to transition from the interior of $X$. Since the closure of every $k$-face $F'$ of $X$ is an acute polyhedron of full volume contained in the affine space $\{ x \in \mathbb{R}^d \ | \ \forall i \in F'(I), \bilin{a_i}{x} = b_i\} \simeq \mathbb{R}^k$, we can tighten the upper bound to projections to the tangent cone \underline{to $\bar{F'}$} at $F$.

\begin{lemma}\label{lem:effective-motion}
    Suppose that $X = \{ x \in \mathbb{R}^d \ | \ \bilin{a_j}{x} \leq b_j \ \forall \ j \in m \}$ is an acute polyhedron, $g \in \mathbb{R}^d$, $x \in F$ for a face $F$ of $X$, and $x^+ = \proj{X}{x + \eta \proj{\tc{X}{F}}{g}}$. Then 
    $$ \| x^+ - x - \eta \proj{\tc{X}{F}}{g} \| \leq \| x^+ - x - \eta g \|.$$
    Moreover, if both of $x^+, x + \eta \proj{\tc{X}{F}}{g}$ are contained in the closure of a face $F'$ of $X$, then 
    $$ \| x^+ - x - \eta \proj{\tc{X}{F}}{g} \| \leq \| x^+ - x - \eta \proj{\tc{\bar{F'}}{F'}}{g} \|.$$
\end{lemma}

\begin{proof}
    Without loss of generality, suppose that for any $i \in F(I)$, $\bilin{a_i}{\proj{\tc{X}{F}}{g}} = 0$, i.e. we restrict attention to the relevant inequalities that bind. Then $\proj{\tc{X}{F}}{g} = g + \sum_{i \in F(I)} \mu_i a_i$ for some positive multipliers $\mu_i$. Moreover, 
    $$ \bilin{\sum_{i \in F(I)} \mu_i a_i}{x^+ - x - \eta g} = 0,$$
    since in this case both $x^+, x + \eta g$ are contained in the affine span of $F$. This proves the first inequality; the second follows from noting that since $\bar{F'} \supseteq F, F'$, and since $\proj{\tc{X}{F}}{g} \in \tc{X}{F'}$, by Algorithm \ref{alg:acute-proj} we have $\proj{\tc{X}{F}}{g} = \proj{\tc{\bar{F'}}{F}}{g}$ and $\proj{\tc{X}{F'}}{g} = \proj{\tc{\bar{F'}}{F'}}{g}$, by simply projecting onto the binding constraints of $F'$ first.
\end{proof}

For both binding and unbinding losses, we will require also the following lemma, which characterises when a constraint may unbind.

\begin{lemma}\label{lem:rel-tangent}
    Suppose that $X = \{ x \in \mathbb{R}^d \ | \ \bilin{a_j}{x} \leq b_j \ \forall \ j \in m \}$ is an acute polyhedron, and $g \in \mathbb{R}^d$. Define $x(\eta) = \proj{X}{x+\eta g}$. Then for any $\eta \geq 0$, $\proj{\tc{X}{x(\eta)}}{g} = \proj{\tc{X}{x(\eta)}}{\proj{\tc{X}{x}}{g}}$. As a consequence, if $F$ is any face of $X$, if there exists $\eta \geq 0$ such that $x(\eta) \notin \bar{F}$, then $\proj{\rtc{X}{F}}{g} \neq 0$.
\end{lemma}

\begin{proof}
    Let $I'$ be the set of $k \in I$ such that $\mu_k > 0$ while applying Algorithm \ref{alg:acute-proj} for $\proj{\tc{X}{x}}{g}$. By Proposition \ref{lem:trivial}, at every $\eta$, the constraints $I'$ bind. Conclude the result since while applying Algorithm \ref{alg:acute-proj} for either $\proj{\tc{X}{x(\eta)}}{\ldots}$ problems, we may simply pick all $i \in I'$ first, which ensures that the output of the algorithms will be identical.

    Now, if $\proj{\rtc{X}{F}}{g} = 0$, then $\bilin{a_i}{\proj{\tc{X}{F}}{g}} = 0$ for any $i \in F(I)$. For any $\eta$, by Proposition \ref{lem:trivial}, $dx(\eta)/d\eta = \proj{\tc{X}{x(\eta)}}{\proj{\tc{X}{F}}{g}}$, and applying Algorithm \ref{alg:acute-proj} we see that $x(\eta) \in \bar{F}$ for any $\eta \geq 0$.
\end{proof}

The final piece of the puzzle is formulating a condition number, which will help us determine a bound on how long it can take for $g \in \nc{X}{F}$ to have negative inner product with every unit normal $a_i$ for $i \in F(I)$.

\begin{definition}\label{def:normal-cond-number}
    Let $X = \{ x \in \mathbb{R}^d \ | \ \bilin{a_j}{x} \leq b_j \ \forall \ j \in m \}$ is an acute polyhedron. The \textbf{normal condition number} $\nu(X)$ is defined, over every non-empty pair of faces $F \neq F'$ of $X$ such that $F \subseteq \bar{F'}$, 
    $$\nu(X) = \min_{F \neq F', F \subseteq \bar{F'}} \ \ \max_{i \in F(I), \mu \in \nc{\bar{F'}}{F}, \|\mu\| = 1} \bilin{a'_i}{\mu},$$
    where $\bar{F}$ is interpreted as an acute polyhedron $\{ x \in \mathbb{R}^k \ | \ \bilin{a'_i}{x} \leq b'_i \ \forall \ i \in m'\}$ of full volume, where the constraint vectors $\|a'_i\| = 1$ are normalised for each $i$.
\end{definition}

The normal condition number is finite and strictly positive for any acute polyhedron; if $\bilin{a'_i}{\mu} \leq 0$ for every constraint $a'_i$ of $F$ in $\bar{F'}$, then $\mu \in \tc{\bar{F'}}{F}$. Since by assumption $\mu \in \nc{\bar{F'}}{F}$, this would necessitate $\| \mu \| = 0$, a contradiction. There are finitely many pairs $F, F'$ of faces, which establishes the finiteness of $\nu(X)$.

\begin{example}
    The probability simplex $\Delta_n = \{ x \in \mathbb{R}^d \ | \ \sum_i x_i = 1, x \geq 0 \}$ attains the minimum when we choose $F' = \textnormal{int}(\Delta_d)$ and $F = \{x^*\}$ for a vertex $x^*$ of $\Delta_d$ (say, $x^*_i = \delta_{i1}$). In this case, if $\mu = \sqrt{d/(d-1)} \cdot (x^* - 1/d \sum_{\ell = 1}^d e_{i \ell})$, then $\| \mu \| = 1$, and $\bilin{\mu}{a'_i} = 1/(d-1)$ for any $a'_i$ which binds at $x^*_i$. We thus observe that $\nu(\Delta_d) = 1/(d-1)$. 
\end{example}

\begin{example}
    Similarly, for the hypercube $[0,1]^d$, the condition number is again attained for $\bar{F'} = [0,1]^d$ and $F'$ a singleton containing a vertex, in which case, $\nu([0,1]^d) = 1/\sqrt{d}$.
\end{example}

We are finally ready to assemble our proof on the bounds for binding and unbinding losses if the action set of player $i$ is some acute polyhedron.

\begin{theorem}\label{thm:binding-losses-polyhedral}
    Suppose that $X_i = \{ x_i \in \mathbb{R}^{D_i} \ | \ \bilin{a_j}{x_i} \leq b_j \ \forall \ j \in m \}$ is an acute polyhedron. Then for any non-increasing stepsizes $(\eta_t)_{t \in \mathbb{N}}$,
    \begin{align*}
    B^T_+\left[X_i,G,C,(\eta_t)_{t \in \mathbb{N}}\right] &\leq \left( \sum_{d = 0}^{D_i} \binom{m}{d}\right)^2 \left[ G + \frac{C}{\nu(X_i)} \left(\sum_{t = 0}^{T-1} \eta_t \right)\right], \\
    B^T_-\left[X_i,G,C,(\eta_t)_{t \in \mathbb{N}}\right] &\leq G + C\left(\sum_{t = 0}^{T-1} \eta_t\right).
    \end{align*} 
\end{theorem}

\begin{proof}
    Take any sequence $(g^t)_{0 \leq t < T}$ such that $\|g^t\| \leq G, \|g^{t+1} - g^t\| < C$ for any $t$, and any initial point $x_i^0$. Given the sequence $x_i^{t+1} = \proj{X_i}{x_i^t + \eta_t g_t}$, we will consider transition events $F' \rightarrow F$ where $F$ is tightly contained in the closure of $F'$. That is, suppose that at time $t$, $x_i^t \in \bar{F'} \setminus F$, with $\bilin{g^t}{a_i} \geq 0$ and $\bilin{x_i^t}{a_i} = b_i$ only for the coordinates $i \in F'$, and $x_i^{t+1} \in F$. Denote
    $$ \mu^t = \frac{x_i^{t}-x_i^{t+1}}{\eta_t} + \proj{\tc{X_i}{x_i^t}}{g^t}.$$
    Suppose that $\Delta$ periods pass until another transition event $F' \rightarrow F$ happens. Let $F''$ be the face such that $x_i^{t}, x_i^{t+1}, \ldots, x_i^{t+\Delta}$ are all contained in $\bar{F''}$, that is, $F''(I)$ is the set of constraints which are binding for every time period $t, t+1, \ldots, t+\Delta$. By Lemma \ref{lem:effective-motion}, we can bound $\|\mu^t\|$ above by 
    $$ \frac{1}{\eta_t}\|x_i^{t+1} - x_i^t - \eta_t \proj{\tc{\bar{F''}}{F''}}{g^t}\| \equiv \| \mu'^t\|.$$
    In other words, we will work without loss of generality that player $i$'s action set is $\bar{F''}$ over this period.

    Then, by Lemma \ref{lem:rel-tangent}, for every $i \in F(I) \setminus F''(I)$, there exists $1 \leq k \leq \Delta$ where $\bilin{a'_i}{g^{t+k}} \leq 0$, where $a'_i$ is the corresponding constraint for $\bar{F''}$ interpreted as an acute polyhedron $\mathbb{R}^{\dim F''}$ of full volume. By the definition of the normal condition number, this implies that $\Delta$ must satisfy
    \begin{align*}\left(\sum_{t' = t}^{t+\Delta} \eta_{t'} \right) C & \geq \frac{\| \mu'^t\|}{\nu(X_i)} && \Rightarrow & \frac{\| \mu_t \|}{\Delta} & \leq \frac{\| \mu'_t \|}{\Delta} \leq \frac{C}{\nu(X_i)} \left(\sum_{t' = t}^{t+\Delta} \eta_{t'} \right).\end{align*}
    Thus again, we may distribute the binding losses across the $\Delta$ time periods, so long as $t + \Delta \leq T$. For each transition event $F' \rightarrow F$, this can only happen once, in which case we incur a binding loss of $G$. There are at most $\sum_{d = 0}^{D_i} \binom{m}{d}$ many faces of $X_i$, which implies that 
    $$B^T_+\left[X_i,G,C,(\eta_t)_{t \in \mathbb{N}}\right] \leq \left( \sum_{d = 0}^{D_i} \binom{m}{d}\right)^2 \left[ G + \frac{C}{\nu(X_i)} \left(\sum_{t = 0}^{T-1} \eta_t \right)\right].$$

    For the unbinding loss, note that the only possible non-zero contributions to the sum occur at times $t$ such that $x_i^t \in \delta X_i$. Again at the very first round, we can have $x_i^0 \in \delta X_i$ and $g^0 = -G \hat{n}^0$, implying a constant contribution of $G$. Else, suppose that $x_i^t \in F$ for a face $F$ of $X_i$, but $x_i^{t+1} \notin \bar{F}$. By Lemma \ref{lem:trivial} specifying the greedy property of the continuous curve and Lemma \ref{lem:rel-tangent}, at time $t-1$, $\proj{\rtc{X}{F}}{g^{t-1}} = 0$. Therefore, by the slow change assumption on $g^t$, $\|\proj{\rtc{X}{F}}{g^t}\| \leq C \eta_t$. At each time period, $x^t$ can be contained only in one face. We conclude that 
    $$ B^T_-\left[X_i,G,C,(\eta_t)_{t \in \mathbb{N}}\right] \leq G + C\left(\sum_{t = 0}^{T-1} \eta_t\right).$$ 
\end{proof}

All in all, our analysis suggests that the continuous motion and smooth boundary can be much better behaved (at least in terms of its analysis) compared to polyhedral sets. Nevertheless, despite the combinatorially large constant for the binding losses, we obtain an $O(1/\sqrt{T})$ guarantee for both approximate stationary and local CCE when some players may also have (acute) polyhedral action sets. We suspect that a tighter analysis might be possible, at least for more specific instances. 

\begin{example}
    Suppose that $X_i = [0,1]^{D_i}$ is the $D_i$ dimensional hypercube. Then projections over each coordinate is independent of one another, implying that it is sufficient to track binding events for each individual coordinate. In this case, 
    $$B^T_+\left[X_i,G,C,(\eta_t)_{t \in \mathbb{N}}\right] \leq D_i \cdot \left[ G + C \sqrt{D_i}\left(\sum_{t' = 0}^{T-1} \eta_t \right)\right] \Rightarrow B^T_+ = D_i  G_i +  D_i^{3/2} L_i \sum_{j \in N} G_j.$$
\end{example}

We also remark that the results for the approximate projected gradient dynamics of Section \ref{sec:smooth-bound} and this section are somewhat counter-intuitive; acuteness in a sense implies that smooth boundary approximations of the convex body would have very high curvature, a condition under which the differential regret guarantees of approximate projected gradient dynamics \emph{deteriorates}. However, acute polytopes in turn enjoy better convergence guarantees for the continuous curve we construct, incurring no projection loss. Meanwhile, the case for general polyhedra remains open.

\section{Tangency, Well-Tangency, and Adversarial (Im)possibilities}\label{sec:actual-regret}

Up so far, our results hinged on whether the action sets are sufficiently well-behaved, such that the outcomes of projected gradient ascent can be assembled into a continuous curve which mimics the incentive guarantees obtained by sampling the actual underlying gradient flow of the game. However, the assumption of equal learning rates is strong, and perhaps unsatisfying from a learning perspective. Moreover, even when all players might be employing projected gradient ascent, different learning rates do not in general allow us to assemble the data $(x^t)_{t \in \mathbb{N}}$ into a single continuous curve over which the guarantees of Theorem \ref{thm:continuous-approx-general} hold for every subset of players $\hat{N}$ whose step sizes are equal. 

There are two exceptional cases, however, whose discussion necessitates introducing the concepts of \emph{tangency} and \emph{well-tangency} of functions. In short, the solution to bounding the projection losses is making sufficiently strong assumptions to not have to deal with them in the first place! This would be guaranteed if, for $f : X \rightarrow \mathbb{R}^D$ a vector field, we would have that $\|x + \delta \cdot f(x) - \proj{X}{x + \delta \cdot f(x)}\|$ is guaranteed to be small, e.g. $O(\delta^2)$. When this is the case, note that $\lim_{\delta \downarrow 0} \|x + \delta \cdot f(x) - \proj{X}{x + \delta \cdot f(x)}\| / \delta = 0$, which implies that $f(x) \in \tc{X}{x}$.

\begin{definition}\label{def:tangent}
    A vector field $f : X \rightarrow \mathbb{R}^D$ is called \textbf{tangent} to a convex set $X$ if $f(x) \in \tc{X}{x}$ for every $x \in X$. It is in turn called $g$-\textbf{well-tangent} if there exists an continuous increasing function $g : \mathbb{R}_+ \rightarrow \mathbb{R}_+$ such that $g(0) = 0$, and for every $\delta > 0$, and every $x \in X$,
    $$ \|x + \delta \cdot f(x) - \proj{X}{x + \delta \cdot f(x)}\| \leq \delta \cdot g(\delta).$$
    If $f = \nabla h$ for some function $h : X \rightarrow \mathbb{R}$, we shall say $h$ is \textbf{($g$-well-)tangential}. We will note the set of $\cdl$, $\coo_M$, $\coom_M$ functions which are tangential by $\cotdl, \cotb_M, \cotm_M$ respectively. 
\end{definition}

By the discussion in the last paragraph, if a vector field $f$ is $g$-well-tangent for some function $g$, then it is necessarily tangent. In fact, the converse is also true if the vector field $f$ is sufficiently smooth.

\begin{proposition}\label{prop:well-tangent}
    Suppose that $f : X \rightarrow \mathbb{R}^D$ is a tangent vector field of bounded magnitude admitting a modulus of continuity $L\omega$; that is, $\exists L,G \in \mathbb{R}$ such that for any $x,y \in X$, $\| f(x) \| \leq G$ and $\| f(x) - f(y) \| \leq L \cdot \omega(\| x - y \|)$. Then $f$ is $(LG^2)\omega$-well-tangent.
\end{proposition}

\begin{proof}
    Denote $y = \proj{X}{x+\delta \cdot f(x)}$, let $\mu = x + \delta \cdot f(x) - y$, and suppose that $\mu \neq 0$. Then $\mu \in \nc{X}{y}$, whereas $x-y, f(y) \in \tc{X}{y}$. Therefore, $(x-y)^T \mu, f(y)^T \mu \leq 0$, which implies that
    \begin{align*}
        \|\mu\|^2 & = \mu^T (x - y) + \delta f(x)^T \mu \leq \delta f(x)^T \mu \\
        & = \delta(f(x) - f(y))^T \mu + \delta f(y)^T \mu \\
        & \leq \delta \|f(x) - f(y) \| \|\mu\| .
    \end{align*} 
    The projection operator is contractive, hence $\| x - y\| \leq \delta G$. Therefore, $\|f(x) - f(y) \| \leq L G^2 \omega(\delta)$.
\end{proof}

As a corollary, we infer that if $f$ is a Lipschitz continuous vector field, then the projection error $\|x + \delta \cdot f(x) - \proj{X}{x + \delta \cdot f(x)}\|$ is indeed of order $O(\delta^2)$ for any $\delta > 0$. We remark that regularity assumptions on $X$ can imply stronger bounds; \cite{ahunbay2025semicoarse} shows (Lemma C.5), for instance, that if $X$ is a polyhedral set and $f$ is Lipschitz continuous, the projection error is exactly $0$ for any $\delta \in [0,1/\chi(X)L]$, where $\chi(X)$ is a condition number of $X$ and $L$ is the Lipschitz modulus of $f$.

The following two examples highlight some important facts about tangency.

\begin{example}
    Whether a function $h : X \rightarrow \mathbb{R}$ is tangential or not has little to do with its concavity. Indeed, if $X = [0,1]$ and $h(x) = -(x-2)^2$, then $f$ is strongly concave, but fails the tangency condition at $x = 1$. In turn, for any integer $k$, $h(x) = \cos(2\pi k x)$ is a tangent function, but is neither convex nor concave.
\end{example}

\begin{example}
    Whereas Lipschitz continuity combined with the tangency of $f$ guarantees $O(\delta^2)$ projection error, the converse implication does not necessarily hold. Consider the case when $X = [0,1]$ and $f(x) = -\sqrt{x}$, then $f$ is only $1/2$-Hölder continuous. However, for any $\delta > 0$, $x - \delta \sqrt{x}$ is minimised at $x = (\delta/2)^2$, with projection error equal to $\delta^2/4$.
\end{example}

We had remarked, in Example \ref{ex:impossibility}, that in general projected gradient ascent can fail to provide stationary or local equilibria guarantees when the utility gradients are adversarially revealed. However, we remark that our analysis in Sections \ref{sec:on-SCCE} and \ref{sec:action-set-conditions} of the continuous curve can be converted adversarial (or \emph{partially} adversarial) equilibrium guarantees for the actual time-average play, for the classes of action sets we consider, if we restrict attention to \emph{tangential} $\coo_M$, $\coom_M$ functions, via a simple ``bootstrap argument''.

\begin{theorem}\label{thm:smooth-to-avg-reduction}
    In the setting of a smooth game for $\hatN$, suppose that $R_{it} \leq R_i \eta_t^2$ in Proposition \ref{thm:smooth-bounds-reduction} for each time period $0 \leq t < T$ and any $i \in \hatN$. Then for any $h \in \cotdl(\times_{i \in \hat{N}} X_i, \mathbb{R})$,
    \begin{align}\label{eq:master-bootstrap}
       & \frac{1}{T} \sum_{i \in \hat{N}} \sum_{t=0}^{T-1} \bilin{\nabla_i h(\xhat{t})}{\nabla_i u_i(x(\tau))} \\ \leq \ & \frac{1}{T} \Bigg[ \sum_{t = 0}^{T-1} \frac{h(x^{t+1}_{\hat{N}}) - h(x^t_{\hat{N}})}{\eta_t} + \frac{1}{2} G_h \sum_{i \in \hat{N}}
       \left(\eta_t L_i \sum_{j \in \hat{N}} G_j \right)  + {R_{i} \eta_t G_h}  \nonumber \\ & + \omega(\eta_t) \left(\sum_{i \in \hat{N}} G_i\right) \left(L_h \sum_{i \in \hat{N}} G_i + G_h \sum_{i \in \hat{N}} L_i \right)\Bigg], \nonumber
    \end{align}
\end{theorem}

\begin{proof}
    We know that $x_i : [0,\tau) \rightarrow X$ is continuous for every $i \in \hat{N}$, and in fact pointwise forward differentiable. Then note that for any $h \in \cotdl(\times_{i \in \hat{N}} X_i,\mathbb{R})$, the function 
    $$ q(x) = \sum_{i \in \hat{N}} \bilin{\nabla_i h(x)}{\nabla_i u_i(x)}$$
    is continuous on $\times_{i \in \hat{N}} X_i$ with modulus of continuity $\propto \omega$ if $x_{-\hat{N}}$ is fixed, where for any $x, y \in \times_{j \in \hat{N}} X_j$,
    \begin{align*}
        |q(x,x_{-\hat{N}}) - q(y,x_{-\hat{N}}) | & = \sum_{i \in \hat{N}} \bilin{\nabla_i h(x)}{\nabla_i u_i(x)} - \bilin{\nabla_i h(y)}{\nabla_i u_i(y)} \\
        & = \sum_{i \in \hat{N}} \bilin{\nabla_i h(x) - \nabla_i h(y)}{\nabla_i u_i(x)} - \bilin{\nabla_i h(y)}{\nabla_i u_i(x) - \nabla_i u_i(y)} \\
        & \leq L_h \Big(\sum_{i \in \hat{N}} G_i \Big) \omega(\|x-y\|) + G_h \Big( \sum_{i \in \hat{N}} L_i \Big) \| x - y \| \\
        & \leq \left( L_h \Big(\sum_{i \in \hat{N}} G_i \Big) + G_h \Big( \sum_{i \in \hat{N}} L_i \Big) \right) \omega(\|x-y\|),
    \end{align*}
    where in the last line we used that $\omega \geq |\cdot |$. Meanwhile, for any $\tau \in [0,T)$, $\|x(\tau) - x(\ftau)\| \leq \eta_\ftau \sum_{i \in \hat{N}} G_i$. Thus, again invoking the monotonicity, normalisation \& concavity of $\omega$, we see that for any $0 \leq t < T$,
    \begin{align*}
        \left|\sum_{i \in \hat{N}} q(x^t) - \int_t^{t+1} d\tau \cdot q(x(\tau)) \right| \leq \omega(\eta_t) \left(\sum_{i \in \hat{N}} G_i\right) \left(L_h \sum_{i \in \hat{N}} G_i + G_h \sum_{i \in \hat{N}} L_i \right).
    \end{align*}
    From here, we invoke Proposition \ref{thm:continuous-approx-general}.
\end{proof}

The reader may wonder how it is possible that we achieve any adversarial guarantees in this setting, but not in the setting of Example \ref{ex:impossibility}. The problem in the latter case turns out to be that if the utility gradients can be revealed adversarially from any superset which contains the open ball $\mathbb{B}_r(0)$ in the affine space containing $X_i$, then tangency of $h$ is \emph{necessary} if ``fully'' adversarial regret against $\nabla h$ is to be attained.

\begin{proposition}\label{prop:impossibility}
    In the setting of a smooth game for $\{i\}$, suppose that players $-i$ choose their strategies adversarially such that $g^t \equiv \nabla_i u_i(x^t)$ can be picked from any vector in $\mathbb{B}_r(0)$. Then for every function $h \in \coo_M(X_i,\mathbb{R}) \setminus \cotb_M(X_i,\mathbb{R})$ and every $x_0 \in X$, there exists a constant $\alpha_h \geq 0$ such that for some  sequence $g^t$, and for a $\Delta > 0$ which depends on $T$, $x_0$ and the stepsizes $(\eta_t)$,  
    $$\sum_{t = 0}^{T-1} \bilin{\proj{\tc{X_i}{x_i^t}}{\nabla_i h(x_i^t)}}{g^t} \geq (T-\Delta) \cdot \alpha_h - \Delta \cdot G_h.$$
    In particular, $\Delta$ need only satisfy 
    $\sum_{t = 0}^{\Delta-1} \eta_t \geq \frac{\|x_i^0 - x^*_i\|}{r}$ for some point $x^*_i$ where $\nabla_i h$ fails to be tangent, and thus $\Delta = o(T)$ for either $\eta_t \propto 1/\sqrt{T}, 1/\sqrt{t+1}$.
\end{proposition}

\begin{proof}
    Since $h$ is not tangential, then there exists a point $x^*_i \in X_i$ such that $g = \proj{\nc{X_i}{x^*_i}}{h(x^*_i)} \neq 0$. Now, note that $x^*_i$ maximises the linear function on $X_i$ induced by $g$. Given $x_0$, we will spend the first $\Delta$ rounds to reach $x^*_i$. By the assumption on $g^t$, we may pick 
    $$g^t = \min\{r, \|x^*_i - x^t_i\|/\eta_t\} \cdot \frac{x^*_i - x^t_i}{\|x^*_i - x^t_i\|},$$
    from which we infer the bound on $\Delta$. Now suppose that $x_i^t = x^*_t$. Then there exists a vector $v^t$ in the relative interior of $\rtc{X_i}{x_i}$ such that 
    \begin{align*}
    \proj{X_i}{x^*_i - v^t \eta_t + r \frac{g}{\|g\|} \eta_{t+1}} & = x^*_i\textnormal{, and} \\
    \bilin{g-\nabla_i h(x^*_i - v^t \eta_t)}{g} & \leq \|g\|^2/2.
    \end{align*}
    Both first and the second condition can be assured by picking $\|v_t\|$ small enough, after which the first condition follows from $x^*_i$ being a maximiser of $\bilin{g}{x_i}$, as $v^t$ is chosen in the span of the normal cone at $x^*_i$. Then, alternating between revealing $g^t = v^t$ followed by $g^{t+1} = rg / \|g\|$, we get $\sum_{t = \Delta}^{T-1} \langle{\proj{\tc{X_i}{x_i^t}}{\nabla_i h(x_i)}}, {\nabla_i u_i(x^t)}\rangle \gtrsim r\|g\| \cdot (T-\Delta)/2$.
\end{proof}

Another impossibility of note is that, for adversarial guarantees for gradient ascent, we absolutely require the regret generating vector field to be a gradient field. That is, if we replace $\nabla_i h$ with some non-conservative  continuous vector field $f_i : X_i \rightarrow \mathbb{R}^{D_i}$, then in the adversarial setting, player $i$ cannot guarantee the local equilibrium condition against $f_i$ with the usual choice of constant step size $\eta \propto 1/\sqrt{T}$. Intuitively, that $f_i$ is non-conversative implies that there exists a piecewise smooth loop along which the path integral of $f_i$ is positive; the adversary then simply needs to ensure that player $i$'s updates cycle along this loop.

\begin{proposition}\label{prop:gradient-impossibility}
    In the setting of a smooth game for $\{i\}$, suppose that players $-i$ choose their strategies adversarially such that $g^t \equiv \nabla_i u_i(x^t)$ can be picked from any vector in $\mathbb{B}_r(0)$, and that $f_i : X_i \rightarrow \mathbb{R}^{D_i}$ is non-conversative vector field with a bound $G_f$ on its magnitude and a concave modulus of continuity $L_f \omega \geq L_f | \cdot |$. Then for every $x_0 \in X_i$, there exists $\alpha > 0$ such that for any $T > 0$, there exists a sequence $g^t$ and a $\Delta > 0$ which depends on $T, x_0$, and $(\eta_t)$, such that 
    $$\sum_{t = 0}^{T-1} \bilin{f_i(x^t_i)}{g^t} \geq \alpha (T - \Delta) - O(\Delta)  - O\left(\sum_{t = 0}^{T-1} r \omega(\eta_t)\right).$$
    In particular, $\Delta$ need only satisfy 
    $\sum_{t = 0}^{\Delta-1} \eta_t \geq \frac{\|x_i^0 - x^*_i\|}{r}$ for some point $x^*_i$ where $\nabla_i h$ fails to be tangent, and thus $\Delta = o(T)$ for either $\eta_t \propto 1/\sqrt{T}, 1/\sqrt{t+1}$.
\end{proposition}

\begin{proof}
    As $f_i$ is non-conservative, there exists a closed, piecewise smooth curve $x_i : [0,P] \rightarrow X_i$ such that $\|dx_i(\ell)/d\ell\| = 1$ for any $\ell \in [0,P]$, and $\int_0^L d\ell \langle f_i(x_i(\ell)), dx_i(\ell)/d\ell \rangle = \alpha' > 0$, and thus $\Delta = O(\sqrt{T})$. As in Proposition \ref{prop:impossibility}, we shall take $\Delta$ rounds such that $x_i^\Delta = x_i^* \equiv x_i(0)$. Then we shall extend $x_i$ to a cyclic curve $x_i : \mathbb{R}_+ \rightarrow X_i$, and at each round $t > \Delta$, we shall update $x^{t}_i = x_i(r \sum_{t' = \Delta}^{t-1} \eta_t)$. This is equivalent to choosing $g^t = x_i(r \sum_{t' = \Delta}^{t} \eta_t) - x_i^t$, which ensures that $\|g^t\| \leq r$. 

    Now, since $x_i$ is piecewise smooth, it has finitely points on $[0,L]$ at which its derivative is discontinuous. Let $B$ be the number of its breakpoints. In this case, the function 
    $$q(\ell) = \bilin{f_i(x(\ell))}{\frac{dx_i(\ell)}{d\ell}}$$
    is a piecewise continuous funtion, and $q(\ell)-\alpha'/L$ has integral $0$ over $[0,kL]$ for any $k \in \mathbb{N}$. Define 
    $$ p(\ell) = \int_0^{\ell} d\ell' \cdot (q(\ell)-\alpha'/L),$$
    then $p$ is a periodic function which is differentiable, and its derivative is locally continuous whenever $q(\ell)$ is continuous. In particular, at any interval $(a,b)$ on which $x_i$ is continuously differentiable, $q(x)$ satisfies 
    $$q(\ell) - q(\ell') \leq (L_f + L_{x} G_f) \omega(|\ell - \ell'|) \ \forall \ \ell, \ell' \in (a,b),$$
    where $L_{x}$ is a bound on the second derivative of $x_i$ whereever it exists. Now, let $\ell^t = r \sum_{t' = \Delta}^{t-1} \eta_t$. Then, if the interval $(\ell^t,\ell^{t+1})$ contains no breakpoints, then $p$ has a locally continuous derivative with modulus of continuity $\propto \omega$ over the interval, and 
    $$ p(\ell^{t+1}) - p(\ell^t) \leq r\eta_t (q(\ell^t) - \alpha' / L) + (L_f + L_x G_f) r^2 \eta_t \omega(\eta_t).$$
    Else, if $(\ell^t,\ell^t+1)$ does contain a breakpoint, then 
    \begin{align*}
        p(\ell^{t+1}) - p(\ell^t) & \leq r\eta_t (G_f-\alpha'/L) \leq r\eta_t (q(\ell^t) - \alpha'/L) + 2r\eta_t G_f,
    \end{align*}
    where the first inequality is because $(G_f - \alpha'/L)$ is an upper bound on the forward derivative of $p$, and the second inequality follows since $q(\ell^t) \leq G_f$. Putting these together, 
    \begin{align*}
        & \sum_{t = \Delta}^{T-1} \bilin{f_i(x_i^t)}{g^t} - \frac{\alpha'}{L}  =  \sum_{t = \Delta}^{T-1} q(\ell^t) - \frac{\alpha'}{L} \\
         \geq \ &  \sum_{t = \Delta}^{T-1} \frac{p(\ell^{t+1})-p(\ell^t)}{r \eta_t} - (L_f + L_{x} G_f) r \omega(\eta_t) - 2 G_f \mathbb{I}[(\ell^t,\ell^{t+1})\textnormal{ contains a breakpoint}].
    \end{align*}

    To conclude our argument, we note that $p$ is bounded in magnitude by $L G_f$, and between periods $\Delta$ and $T-1$, we hit at most $B \cdot \lceil \sum_{t = \Delta}^{T-1} r\eta_t / L \rceil \leq B \cdot \lceil \sum_{t = 0}^{T-1} r\eta_t / L \rceil$ breakpoints. Analogously to Proposition \ref{prop:impossibility}, $\Delta$ is such that $\sum_{t = 0}^{\Delta-1} r \eta_t \simeq \| x^0_i - x_i^*\| $, and hence $\Delta = o(T)$. Therefore,
    \begin{align*}
    \sum_{t = 0}^{T-1} \bilin{f_i(x_i^t)}{g^t} & = \sum_{t = 0}^{\Delta-1} \bilin{f_i(x_i^t)}{g^t} + \frac{\alpha'}{L} (T - \Delta) + \sum_{t = \Delta}^{T-1} (\bilin{f_i(x_i^t)}{g^t} - \alpha' / L) \\
    & = \frac{\alpha'}{L} (T - \Delta) -\Delta r G_f - (L_f + L_x G_f) \sum_{t = 0}^{T-1} \omega(\eta_t) - 2 G_f B \lceil \sum_{t = 0}^{T-1} r \eta_t / L \rceil. 
    \end{align*}
\end{proof}

We conclude that, to guarantee vanishing adversarial regret in first-order for gradient ascent, we must (1) restrict attention to deviations generated by gradient fields, and moreover (2) either further restrict attention to gradient fields of tangential functions, or verify that the utility gradients ensure a scenario as in Proposition \ref{prop:impossibility} cannot occur for the guarantees of stationary or local equilibria. For local equilibria we require tangency of $h$, whereas when a player $i$'s utility function $u_i$ is tangential it instead helps establish stationary equilibrium guarantees via Proposition \ref{prop:average-bounds-functional}. We note that this does not rule out adversarial \emph{``play''} of $x_{-i}$; rather, it limits what utility gradients player $i$ might observe given their chosen action $x_i^t$. In this case, Proposition \ref{prop:well-tangent} implies that $u_i$ is well-tangent, meaning that if player $i$ implements projected gradient ascent, any projection back onto their action set is bounded in magnitude by $\eta_t^2 L_i G_i^2$. This implies that adversarial guarantees for stationary equilibria are possible when players' action sets are closed \& convex.

\begin{proposition}\label{prop:tangent-util}
    In the setting of a smooth game for $\hat{N}$, suppose that the utilities $u_i$ of player $i$ are tangential. Then for any function $h \in \cotm_M(\times_{i \in \hat{N}} X_i, \mathbb{R})$ and any time period $t$,
    $$ \|x_i^{t+1} - x_i^t - \eta_{t} \proj{\tc{X_i}{x_i^t}}{\nabla_i u_i(x_i^t)} \| \leq \eta_{t}^2 L_i G_i^2.$$
\end{proposition}

\section{Approximability of First-Order Correlated Equilibria}\label{sec:on-local-CE}

One question that remains is whether the gradient field assumption can be dropped if we were to use a more sophisticated algorithm, with the hope of generalising the notion of a correlated equilibrium for e.g. normal- or extensive-form games. Unfortunately, unlike the case for local coarse correlated equilibria, in this setting we cannot expect a tractable ``universal approximation scheme'' (unless $PPAD \subset P$). Indeed, if $F$ contains all Lipschitz continuous vector fields with Lipschitz modulus $\leq L$, then for each player $i$, $F$ contains the vector field $f$ defined as 
$$f_i = (L / L_i) \cdot \nabla_i u_i, \textnormal{ and } \forall \  j \neq i, f_j = 0.$$
As a consequence, any such $\epsilon$-local (or stationary) CE $\sigma$ with respect to $F$ satisfies the inequalities
$$ \mathbb{E}_{x \sim \sigma} \left[ \bilin{\proj{\tc{X_i}{x_i}}{\nabla_i u_i(x)}}{\nabla_i u_i(x)} \right] \leq \epsilon \cdot \poly(\vec{G},\vec{L})  \ \forall \ i \in N.$$
The LHS is always non-negative. Thus, for small enough $\epsilon$, the support of the approximate local CE contains an approximate first-order Nash equilibrium. However, for normal-form games, an approximate first-order Nash equilibrium is an approximate Nash equilibrium (via the multilinearity of the utilities), which is known to be $PPAD$-complete \cite{CD06,DGP09,daskalakis2013complexity}. We remark that this is a manifestation of the equivalence between no regret learning and fixed-point computation \cite{hazan2007computational}.

However, we will see in Section \ref{sec:lyapunov} that $\epsilon$-CE for normal form games is equivalent to an $\epsilon$-local CE with respect to a set of vector fields $ F = \{x_i(a_i) \cdot (e_{ia'_i} - e_{i a_i}) \ | \ i \in N, a_i, a'_i \in A_i\}$. As efficient algorithms for computing approximate correlated equilibria exist, we would suspect that there might exist tractable algorithms to compute local or stationary CE for appropriately chosen families of vector fields $F$. This turns out to be the case for at least one family of local correlated equilibria, when the action sets are convex \& compact. 

\subsection{Approximate local equilibrium with finite $|F|$}\label{sec:approx-local-CE}

We proceed by adopting the standard $\Phi$/swap-regret minimisation perspective, e.g. \cite{stoltz2007learning,hazan2007computational,gordon2008no,greenwald2006bounds}, and leverage the Lagrangian Hedging framework of \cite{gordon2005no} (c.f. also ``regret matching'' in \cite{greenwald2006bounds}) to exhibit an algorithm by which to approximate an $\epsilon$-local equilibrium. Whenever the family of tangent vector fields $F$ has finitely many elements, we show that such algorithms are applicable to obtain $O(1/\sqrt{T})$-local correlated equilibria after $T$ iterations, provided we have access to an approximate fixed-point oracle for every conic combination over $F$. 

The proofs of convergence for our algorithms is an immediate consequence of \cite{greenwald2006bounds}; we nevertheless provide the algorithms and the associated proofs for the sake of a complete exposition. To wit, for a fixed smooth game and a family of tangent vector fields $F$, recall that an $\epsilon$-local CE is a distribution $\sigma$ on $X$ such that 
$$ \forall \ f \in F,  \sum_{i \in N} \int_{X} d\sigma(x) \cdot \bilin{\nabla_i u_i(x)}{f_i(x)} \leq \epsilon \cdot \poly(\vec{G},\vec{L},G_f,L_f),$$
where the $\poly(\vec{G},\vec{L},G_f,L_f)$ factor is fixed in advance. In our  analysis of $\epsilon$-local (and also $\epsilon$-stationary) CE, we shall fix attention to finite $|F|$, and fix the poly-factor to $1$; absorbing the bounds to $\epsilon$. 

We consider the history of our algorithm to be a sequence of actions for each player, $(x_t)_{t \in \mathbb{N}}$. At each time period $t$, we denote by $\sigma_t \in \Delta(X)$ the uniform distribution over $\{x_{t'}\}_{0 \leq t' \leq t}$. The \textbf{cumulative local regret} with respect to $f \in F$ at iteration $t$ is then denoted, 
$$ \mu_{f}^t = (t+1) \cdot \int_X d\sigma_t(x) \cdot \sum_{i \in N} \bilin{\nabla_i u_i(x)}{f_i(x)}.$$
Algorithm \ref{alg:eps-LCE} then leverages a convex potential $V : \mathbb{R}^{|F|} \rightarrow \mathbb{R}$, a vector field $\psi : \mathbb{R}^{|F|} \rightarrow \mathbb{R}^{|F|}$ (generally taken to be a subgradient of $G$ at each point), and a function $\gamma : \mathbb{R}^{|F|} \rightarrow \mathbb{R}$ such that $(V,\psi,\gamma)$ forms a \textbf{Gordon triple} \cite{greenwald2006bounds}, i.e. for every $x, y \in \mathbb{R}^{|F|}$, 
$$ V(x+y) \leq V(x) + \bilin{\psi(x)}{y} + \gamma(y).$$
At time $t$, the algorithm finds a $\delta$-approximate fixed-point of \underline{\emph{the vector field}} $\sum_{f \in F} \psi(\mu^t)_f f$, that is, a point $x^t \in X$ such that $\| \proj{\tc{X}{x^t}}{\sum_{f \in F} \psi(\mu^t)_f f(x^t)} \| \leq \delta$. Such a fixed-point necessarily exists whenever all $f \in F$ are continuous vector fields through standard fixed-point arguments, e.g. by appealing to the continuity of the map $x \mapsto \proj{X}{x + \sum_{f \in F} \psi(\mu^t)_f f(x^t)}$ on $X$ and applying the Brouwer fixed point theorem.

\begin{algorithm}[H]
    \caption{Regret matching for $\epsilon$-local CE}\label{alg:eps-LCE}
    \SetKwComment{Comment}{/* }{ */}
    \SetKwInput{Input}{Input}
    \SetKwInput{Output}{Output}
    \Input{Smooth game $\left(N,(X_i)_{i \in N}, (u_i)_{i \in N}\right)$ with all $X_i$ convex \& compact, tangential set of vector fields $F$, $\delta > 0$, termination time $T$}
    \Output{$\sigma_{T-1}$, an approximate local CE}
    $t \gets -1$\;
    $\mu_{(-1)f} \gets 0 \ \forall \ f \in F$\;
    \While{$t < T-1$}{
        $t \gets t+1$\;
        $x^{t} \gets$ an $\delta$-approximate fixed point of $\sum_{f \in F} \psi(\mu^t)_f f$ in $X$\;
        $\sigma_{t} \gets U\{x^0,...,x^{t}\}$\;
        $\mu_{tf} \gets \mu_{(t-1)f}  + \sum_{i \in N} \bilin{f_i(x^t)}{\nabla_i u_i(x^t)} \ \forall \ f \in F$\;
    }
\end{algorithm}

\begin{theorem}\label{thm:eps-LCE-approx}
    During the execution of Algorithm \ref{alg:eps-LCE}, for each $t \geq 0$, 
    $$V(\mu^t) \leq V(0) + (t+1)  \left[\delta \sqrt{\sum_{i \in N} G_i^2} + \max_{x \in X} \gamma( \Delta \mu(x) ) \right],$$
    where $\Delta \mu(x)_f = \sum_{i \in N} \bilin{f_i(x)}{\nabla_i u_i(x)}$. Denoting $\max_{f \in F} G_f = G_F$, updating for $T$ iterations with  
    \begin{align*}
        \eta & = \sqrt{\frac{2 \ln |F|}{TG_F^2 \sum_{i \in N}G_i^2}}, 
        V(\mu)  = \frac{1}{\eta} \ln \left[ \sum_{f \in F} e^{\eta \mu_f} \right], 
        \psi(\mu)  = \nabla V(\mu), \textit{ and } \gamma(\Delta \mu) = \frac{\eta}{2} \|\Delta \mu\|^2_\infty,
    \end{align*} 
    Algorithm \ref{alg:eps-LCE} outputs an $(G_F\sqrt{\sum_{i \in N}G_i^2 \ln |F| / T} + \delta \sqrt{\sum_{i \in N} G_i^2})$-local CE with respect to $F$.
\end{theorem}

\begin{proof}
    The first part of the statement is identical to \cite{greenwald2006bounds}, Theorem 6. In particular, we proceed by induction on $t$, where the base case is trivial. Then for any $t \geq -1$, 
    \begin{align*}
        G(\mu_{t+1}) & \leq G(\mu_t) + \sum_{i \in N, f \in F} \psi(\mu_t)_f \bilin{f_i(x^{t+1})}{\nabla_i u_i(x^{t+1})} + \gamma(\Delta\mu(x^t)) \\
        & \leq G(\mu_t) + \delta \sqrt{\sum_{i \in N} G_i^2} + \max_{x \in X} \gamma( \Delta \mu(x) ),
    \end{align*}
    where the second line follows from the $\delta$-approximate fixed-point property of $x^{t+1}$ and the Cauchy-Schwarz inequality. The second part of the theorem follows from Lemma 14 \& Theorem 15 in \cite{greenwald2006bounds}, and as $\|\cdot\|_\infty$ is a lower bound on the Euclidean norm. 
\end{proof}

The question that remains is, then, whether there exists a family of vector fields $F$ such that an $\epsilon$-local CE are tractably computable. The answer is easily shown to be affirmative for the case of $\epsilon$-local CE when each $f$ is affine-linear in each component, in which case the fixed-point computation reduces to solving a convex quadratic minimisation problem.

\begin{proposition}\label{prop:affine-lin}
    Suppose that $F$ is a family of tangential vector fields, such that for each $f \in F$, $f(x) = P_f x + q_f$ for some matrix $P_f \in \mathbb{R}^{d \times d}$ and some vector $q_f \in \mathbb{R}^d$. Then for any $\mu : F \rightarrow \mathbb{R}_+$, $ \| \sum_{f \in F} \mu_f f(x) \|^2$ is a convex quadratic function. Moreover, any $\argmin_{x \in X} \| \sum_{f \in F} \mu_f f(x) \|^2$ is a fixed point of $\sum_{f \in F} \mu_f f(x)$, and an $\delta^2$-approximate solution $x'$ to the minimisation problem is an $\delta$-approximate fixed-point, i.e. satisfies $\| \proj{\tc{X}{x'}}{\sum_{f \in F} \mu_f f(x')} \| \leq \delta$.
\end{proposition}

\begin{proof}
    As $\sum_{f \in F} \mu_f f(x)$ is linear in $x$, $ \| \sum_{f \in F} \mu_f f(x) \|^2$ is a sum of squares of linear polynomials, and is thus a convex quadratic. Since $F$ is tangential, at any fixed point $x^* \in X$ of $\sum_{f \in F} \mu_f f$, $\sum_{f \in F} \mu_f f(x^*) \in \tc{X}{x^*} \cap \nc{X}{x^*} = \{0\}$. Thus $\min_{x \in X} \| \sum_{f \in F} \mu_f f(x) \|^2 = 0$, which implies the result.
\end{proof}

\begin{corollary}\label{cor:affine-lin}
    For a family of tangential affine-linear vector fields $F$, a $O(\epsilon+\delta)$-local CE with respect to $F$ can be computed via ($\delta$-approximately) solving $O(\ln |F|/\epsilon^2)$ convex quadratic minimisation problems.
\end{corollary}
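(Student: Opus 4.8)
The plan is to derive the corollary directly from Theorem~\ref{thm:eps-LCE-approx} together with the Proposition immediately preceding it, by observing that each fixed-point computation demanded by Algorithm~\ref{alg:eps-LCE} can be implemented as a single convex quadratic minimisation over $X$. First I would recall that Theorem~\ref{thm:eps-LCE-approx} guarantees that Algorithm~\ref{alg:eps-LCE} terminates after $O(|F|/\epsilon^2)$ iterations with an $\epsilon$-local CE, and that the only non-elementary step performed per iteration is the computation in line~4 of a fixed point of $\sum_{f \in F} \max\{\mu_{tf},0\} f$. Setting $\mu_f := \max\{\mu_{tf},0\} \geq 0$, this is exactly a fixed point of $\sum_{f\in F}\mu_f f$ for a non-negative weight vector $\mu \in \mathbb{R}_+^F$, which is the regime addressed by the Proposition.

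Next I would invoke the Proposition: since $F$ is a tangential family of affine-linear vector fields with $f(x) = P_f x + q_f$, the combination $\sum_{f\in F}\mu_f f(x)$ is affine-linear in $x$, so $\|\sum_{f\in F}\mu_f f(x)\|^2$ is a convex quadratic, and any minimiser of it over the compact convex set $X$ is a fixed point of $\sum_{f\in F}\mu_f f$. Hence line~4 of Algorithm~\ref{alg:eps-LCE} may be realised by solving the single convex quadratic program $\min_{x\in X}\|\sum_{f\in F}\mu_f f(x)\|^2$. Substituting this implementation into each of the $O(|F|/\epsilon^2)$ iterations yields the claimed bound of $O(|F|/\epsilon^2)$ convex quadratic minimisation problems, completing the argument.

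The one point that warrants care — and which is already supplied by the Proposition — is verifying that the quadratic minimiser is a genuine fixed point rather than merely an approximate one. For tangential $F$ every $f(x)$, and thus $\sum_{f\in F}\mu_f f(x)$, lies in $\tc{X}{x}$; therefore a fixed point of the projected dynamics coincides with a zero of the field, and the minimiser of $\|\sum_{f\in F}\mu_f f(x)\|^2$ is such a fixed point precisely when the optimal value is $0$. This holds because a fixed point always exists on the compact convex $X$ (a stationary/first-order point of the projected dynamics) and attains objective value $0$, so the minimum equals $0$ and the $\argmin$ is a valid oracle answer. I would finally remark that each such program is a convex QP over the product $X = \times_{i\in N} X_i$ of the convex action sets, hence tractable whenever $X$ admits a tractable (e.g. polyhedral) description; this is the sense in which the fixed-point oracle of Theorem~\ref{thm:eps-LCE-approx} becomes genuinely efficient in the affine-linear case.
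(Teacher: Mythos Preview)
Your proposal is correct and matches the paper's intended argument exactly: the corollary is stated without proof in the paper because it follows immediately by combining Theorem~\ref{thm:eps-LCE-approx} (which gives the $O(|F|/\epsilon^2)$ iteration bound assuming a fixed-point oracle) with the preceding Proposition (which shows that for tangential affine-linear $F$ the oracle call reduces to minimising a convex quadratic over $X$). Your additional remark justifying why the minimum value is $0$ is precisely the content of that Proposition's proof.
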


By the remark at the beginning of Section \ref{sec:on-local-CE}, local CE with respect to the set of affine-linear vector fields correspond the set of correlated equilibria of normal form games; but such equilibria remain tractably approximable even for non-concave games. And as the following example demonstrates, local CE with respect to this family of vector fields can exhibit concentration about the equilibria of zero-sum games; a guarantee that is strictly stronger than those for $\phi$-regret minimising algorithms, and which are usually obtained via incorporation ``rotational corrections'' (e.g. \cite{DISZ18,LBRMFTG19}).

\begin{example}[Matching Pennies]\label{ex:pennies}
    Consider the canonical example for a game in which the gradient dynamics necessarily cycle, \emph{matching pennies}. Up to reparametrisation of the strategy space, there are two players, whose utility functions $X_1 \times X_2 = [-1,1]^2 \rightarrow \mathbb{R}$ are given 
    \begin{align*}
        u_1(x_1,x_2) & = -x_1 x_2, \\
        u_2(x_1,x_2) & = x_1 x_2.
    \end{align*}
    We remark that the uniform distribution on any circle of radius $< 1$ is a time-invariant distribution for the game's gradient dynamics. In particular, the probability distribution over $[-1,1]^2$ induced by drawing $\theta \sim U[0,2\pi]$ and outputting $(x_1,x_2) = 1/2 \cdot (\cos(\theta), \sin(\theta))$ is both an exact local and stationary CCE.

    Moreover, it is nevertheless an exact $\Phi$-equilibrium for any family of strategy modifications $X_i \rightarrow X_i$ that prescribes deviations to a single player $i$, conditional on their own action. Explicitly, conditional on $x_1 \in [-1/2,1/2]$ (i.e. for any $x_1$ which does not have zero probability density), $x_2 = \pm \sin( \arccos (x_1))$ with probability $1/2$ each. Therefore, for any $\phi : X_1 \rightarrow X_1$, 
    \begin{align*}
        & \mathbb{E}_{\theta \sim U[0,2\pi]}[u_1(\phi(cos(\theta)/2),\sin(\theta)/2) - u_1(\cos(\theta)/2,\sin(\theta)/2)] \\
        = \ & \int_0^{2\pi} d\theta \cdot \frac{1}{2}\left( \phi(\cos(\theta)/2) \sin(\theta) - \frac{\cos(\theta)\sin(\theta)}{2} \right) = 0.
    \end{align*}
    The symmetry of the game implies that a similar statement holds also for player $2$. 
    
    Now, the usual correlated equilibrium constraints are provided by vector fields 
    $$ f^{1+} = \begin{pmatrix}
        1-x_1 \\
        0
    \end{pmatrix}, f^{1-} = \begin{pmatrix}
        -1-x_1 \\
        0
    \end{pmatrix}, f^{2+} = \begin{pmatrix}
        0 \\
        1-x_2
    \end{pmatrix}, \textnormal{ and } f^{2-} = \begin{pmatrix}
        0 \\
        -1-x_2
    \end{pmatrix}.$$
    Note that these vector fields are all conservative, and as a result they are gradient fields; this is because in normal-form games where players only have two actions, the set of CE and CCE coincide. We therefore consider extending our set of vector fields by considering 
    $$ g^{1-} = \begin{pmatrix}
        -x_1-x_2 \\
        0
    \end{pmatrix}, g^{1+} = \begin{pmatrix}
        x_2-x_1 \\
        0
    \end{pmatrix}, g^{2-} = \begin{pmatrix}
        0 \\
        -x_1-x_2
    \end{pmatrix}, \textnormal{ and } g^{2+} = \begin{pmatrix}
        0 \\
        x_1-x_2
    \end{pmatrix}.$$
    Each of the vector fields $g$ is tangent to $[-1,1]^2$. Moreover, each such $g$ has non-zero curl on $[-1,1]^2$, and thus, none of them arise as the gradient field of a quadratic function\footnote{Although, they are coordinate projections of gradients of suitable quadratic functions.}. As a consequence, no vector field $g^{i\pm}$ may be expressed as a conical combination of the vector fields $f^{i\pm}$. Setting $F = \{ f^{ij}, g^{ij} \ | \ i \in \{1,2\}, j \in \{+,-\} \}$ thus provides us a refinement of the usual correlated equilibria of $2 \times 2$ normal form games. 
    
    The refinement can be strict when looking at the history of play as distributions; in the matching pennies game, we see that
    \begin{align*}
        \sum_{i\in N} \bilin{g^{1-}(x) + g^{2+}(x)}{\nabla_i u_i(x)} = x_1^2 + x_2^2 \geq 0 \ \forall (x_1, x_2) \in [-1,1]^2,
    \end{align*}  
    whereas the local CE condition necessitates 
    $$ \int_X d\sigma(x) \cdot \bilin{g^{1-}(x) + g^{2+}(x)}{\nabla_i u_i(x)} \leq 0.$$
    We conclude that limiting attention to vector fields $f^{i\pm}$ cannot rule out cycles, e.g. $(x_1,x_2) = (\sin (\theta), \cos (\theta))$ where $\theta$ is drawn from the uniform distribution on $[0,2\pi]$. On the other hand, the only local CE with respect to $F$ is the unique Nash equilibrium at $x = (0,0)$. Moreover, by a Markov bound, we see that if $\sigma$ is an $\epsilon$-local CE with respect to $|F|$ as above, then 
    $\mathbb{P}[x_{1}^2 + x_{2}^2 \geq r^2] \leq \epsilon/r^2$; i.e. any $\epsilon$-local CE must assign probability $\geq 1-\epsilon/r^2$ to players' actions being in a disc of radius $r$ about the game's unique Nash equilibrium.
\end{example}

We shall conclude this section by showing that the unique local CE of the mixed-extension of a normal-form game can place probability $1$ on its (unique) equilibrium outside of zero-sum games. Strikingly, this can be the case even if the equilibrium is unstable with respect to the gradient dynamics of the game. 

\begin{example}\label{ex:jordan}
    Consider Jordan's matching pennies \cite{jordan1993three,hart2003uncoupled}, reparametrised such that for each player $i \in \{1,2,3\}$, $X_i = [-1,1]$, and $u_i(x) = -x_{1+(i+1 \textnormal{ mod } 3)}$. Then note that each of the following vector fields are tangent to $[-1,1]^3$,
    $$ g^{12-} = \begin{pmatrix}
        -x_1-x_2 \\
        0 \\
        0
    \end{pmatrix}, g^{23-} = \begin{pmatrix}
        0 \\
        -x_2-x_3 \\
        0
    \end{pmatrix}, 
    g^{31-} = \begin{pmatrix}
        0 \\
        0 \\
        -x_3-x_1
    \end{pmatrix}.$$
    Now, any local CE with respect to the set of tangential affine-linear vector fields must incur zero regret against the set of vector fields $\{g^{12-}, g^{23-}, g^{31-}\}$. However, for any $x \in [-1,1]^3$,  
    \begin{align*}
        & \bilin{g^{12-}(x)}{-x_2} + \bilin{g^{23-}(x)}{-x_3} + \bilin{g^{31-}(x)}{-x_1} \\
        = \ & x_1^2 + x_2^2 + x_3^2 + x_1x_2 + x_2x_3 + x_3x_1 \geq 0,
    \end{align*} 
    with equality if and only if $x = (0,0,0)$. The minimum eigenvalue of the matrix 
    $$ \begin{pmatrix}
        1 & 1/2 & 1/2 \\
        1/2 & 1 & 1/2 \\
        1/2 & 1/2 & 1
    \end{pmatrix} $$
    equals $1/2$, which implies that any $\epsilon$-local CE with respect to $F \supseteq \{g^{12-}, g^{23-}, g^{31-}\}$ must assign probability $\geq 1 - 2\epsilon/r^2$ to the set $\{x \in [-1,1]^3 \ | \ \|x\| \leq r\}$.
\end{example}

\subsection{On approximate stationary equilibria}

Similarly to the discussion on local CE in Section \ref{sec:approx-local-CE}, it is possible to approximate stationary CE with access to a suitable approximate fixed-point oracle. One key difference is that, instead of requiring fixed-points of conic combinations of vector fields, we will need them for arbitrary linear combinations of them. However, we shall observe that that stationarity is a stronger concept; and even with respect to the set of tangential affine-linear vector fields over the set of action profiles, computing an approximate stationary CE is $PPAD$-hard for mixed-extensions of two player normal-form games. 

For a fixed smooth game and a finite family of vector fields $F$, recall that an $\epsilon$-stationary CE is a distribution $\sigma$ on $X$ such that 
$$ \forall \ f \in F, \left| \sum_{i \in N} \int_{X} d\sigma(x) \cdot \bilin{\proj{\tc{X_i}{x_i}}{\nabla_i u_i(x)}}{f_i(x)} \right| \leq \epsilon,$$
where we again fix the factor $\poly(\vec{G},\vec{L},G_f,L_f)$ to $1$. Then, given the history of our algorithm $(x_t)_{0 \leq t \leq T}$, we denote by $\sigma_t$ the uniform distribution over $\{x_{t'}\}_{0 \leq t'\leq t}$. Then the \textbf{differential stationarity regret} with respect to $f \in F$ at iteration $t$ is 
$$ \mu_{tf} = \int_X d\sigma_t(x) \cdot \sum_{i \in N} \bilin{f_i(x)}{\proj{\tc{X_i}{x_i}}{\nabla_i u_i(x)}}.$$
Much like Algorithm \ref{alg:eps-LCE}, Algorithm \ref{alg:eps-SCE} leverages an appropriate Gordon triple $(V,\psi,\gamma)$ and finds a fixed point of the vector field $\sum_{f \in F} \psi(\mu_t)_f f$. However, we want the stationarity regret to be close to zero instead of simply (approximately) non-positive, which will be reflected in the choice of potential $V$. The two differences, minor, are emphasised by lines:

\begin{algorithm}[H]
    \caption{Regret matching for $\epsilon$-stationary CE}\label{alg:eps-SCE}
    \SetKwComment{Comment}{/* }{ */}
    \SetKwInput{Input}{Input}
    \SetKwInput{Output}{Output}
    \Input{Smooth game $\left(N,(X_i)_{i \in N}, (u_i)_{i \in N}\right)$ with all $X_i$ convex \& compact, \st{tangential} set of vector fields $F$, $\delta > 0$, termination time $T$}
    \Output{$\sigma_{T-1}$, an approximate stationary CE}
    $t \gets -1$\;
    $\mu_{(-1)f} \gets 0 \ \forall \ f \in F$\;
    \While{$t < T-1$}{
        $t \gets t+1$\;
        $x^{t} \gets$ an $\delta$-approximate fixed point of $\sum_{f \in F} \psi(\mu^t)_f f$ in $X$\;
        $\sigma_{t} \gets U\{x^0,...,x^{t}\}$\;
        $\mu_{tf} \gets \mu_{(t-1)f}  + \sum_{i \in N} \bilin{f_i(x^t)}{\underline{\proj{\tc{X_i}{x_i^t}}{\nabla_i u_i(x^t)}}} \ \forall \ f \in F$\;
    }
\end{algorithm}

\begin{theorem}\label{thm:eps-SCE-approx}
    During the execution of Algorithm \ref{alg:eps-SCE}, for each $t \geq 0$, 
    $$V(\mu^t) \leq V(0) + (t+1)  \left[\delta \sum_{i \in N} G_i + \max_{x \in X} \gamma( \Delta \mu(x) ) \right],$$
    where $\Delta \mu(x)_f = \sum_{i \in N} \bilin{f_i(x)}{\proj{\tc{X_i}{x_i}}{\nabla_i u_i(x^t)}}$. Again denote $\max_{f \in F} G_f = G_F$. Then, by a choice of step sizes and a Gordon triple of   
    \begin{align*}
        \eta & = \sqrt{\frac{ \ln(2|F|) }{T G_F^2 \sum_{i \in N} G_i^2}}, 
        V(\mu)  = \frac{1}{\eta} \ln \left[ \sum_{f \in F} 2\cosh(\eta \mu_f) \right], 
        \psi(\mu)  = \nabla V(\mu), \text{ and } \gamma(\Delta \mu) = \eta\|\Delta \mu\|^2_\infty,
    \end{align*} 
    after $T$ iterations, Algorithm \ref{alg:eps-SCE} outputs a $(2 G_F \sqrt{\ln (2|F|) \sum_{i \in N} G_i^2  / T} + \delta \sqrt{\sum_{i \in N} G_i^2})$-stationary CE with respect to $F$.
\end{theorem}

\begin{proof}
    The first part of the statement proceeds analogously to the proof of Theorem \ref{thm:eps-LCE-approx}, where for any $t \geq -1$, 
    \begin{align*}
        G(\mu_{t+1}) & \leq G(\mu_t) + \sum_{i \in N, f \in F} \psi(\mu_t)_f \bilin{f_i(x^{t+1})}{\proj{\tc{X_i}{x_i^{t+1}}}{\nabla_i u_i(x^{t+1})}} + \gamma(\Delta\mu(x^t)) \\
        & \leq G(\mu_t) + \delta \sum_{i \in N} G_i + \max_{x \in X} \gamma( \Delta \mu(x) ),
    \end{align*}
    where in the second line, we also use the fact that the normal-cone component of $\sum_{f \in F} \psi(\mu^t)_f f$ has non-positive inner product with the tangent cone projection of $\nabla_i u_i$ for each $i \in N$. 
    
    For the second part, we first need to verify that $(V,\psi,\gamma)$ form a Gordon triple. Towards this end, mimicking the proof of Lemma 14 in \cite{greenwald2006bounds}, we see that for any $\mu, \Delta\mu \in \mathbb{R}^{|F|}$, 
    \begin{align*}
        V(\mu+\Delta\mu) & = V(\mu) + \bilin{\Delta \mu}{\psi(\mu)} + \frac{1}{2} \sum_{f,f'\in F} \frac{\partial^2 V(\hat{\mu})}{ \partial \mu_f \partial \mu_{f'}} \Delta\mu_f \Delta\mu_{f'} \textnormal{ for some } \hat{\mu}.
    \end{align*}
    Then, evaluating the term quadratic in $\Delta\mu$, we see that 
    \begin{align*}
        \sum_{f,f'\in F} \frac{\partial^2 V(\hat{\mu})}{ \partial \mu_f \partial \mu_{f'}} \Delta\mu_f \Delta\mu_{f'} & = \sum_{f,f'\in F} \eta\left( \frac{\cosh(\eta \hat{\mu}_f) \delta_{f f'}}{\sum_{f'' \in F} \cosh(\eta \hat{\mu}_{f''})} + \frac{\cosh(\eta \hat{\mu}_f) \cosh(\eta \hat{\mu}_{f'})}{\left( \sum_{f'' \in F} \cosh(\eta \hat{\mu}_{f''})\right)^2}\right) \Delta\mu_f \Delta\mu_{f'} \\
        & = \sum_{f \in F} \frac{\Delta \mu_f^2 \cosh(\eta \hat{\mu}_f)}{\sum_{f'' \in F} \cosh(\eta \hat{\mu}_{f''})} + \left( \sum_{f \in F} \frac{\Delta \mu_f \cosh(\eta \hat{\mu}_f)}{\sum_{f'' \in F} \cosh(\eta \hat{\mu}_{f''})} \right)^2 \\
        & \leq 2 \|\Delta \mu_f \|_\infty^2.
    \end{align*}
    Then to bound $\max_{f \in F} |\mu^t_f|$, we note that $2\cosh(z) = e^z + e^{-z}$, and thus 
    $$ \max_{f \in F} |\mu^t_f| = \frac{1}{\eta} \ln (\max_{f \in F} e^{\eta|\mu^t_f|}) \leq \frac{1}{\eta} \ln (\max_{f \in F} 2\cosh(\eta\mu^t_f)) \leq V(\mu^t).$$
\end{proof}

For first-order coarse equilibrium, the algorithm to obtain an approximate local or stationary equilibrium was the same; simply let each player run projected gradient ascent, and observe the resulting history of play (or the continuous curve constructed from it). This raises the question of whether the complexity of computing a local or stationary equilibrium remains the same for a given set of vector fields $F$. The answer turns out to be negative for the family of tangential affine-linear vector fields we considered in Section \ref{sec:approx-local-CE} for mixed-extensions of two player normal-form games.

For intuition, note that when we consider linear (instead of conical) combinations of vector fields $f \in F$ in Example \ref{ex:pennies}, we may obtain any arbitrary affine-linear vector field on $[-1,1]$; for each player $i$, $\{f^{i\pm}, g^{i\pm} \}$ spans the set of degree $1$ polynomials on coordinate $i$. That is to say, 
$$ \frac{f^{1+}(x) - f^{1-}(x)}{2} = \begin{pmatrix}
    1 \\
    0
\end{pmatrix}, -\frac{f^{1+}(x) + f^{1-}(x)}{2} = \begin{pmatrix}
    x_1 \\
    0
\end{pmatrix}, \frac{g^{1+}(x) - g^{1-}(x)}{2} = \begin{pmatrix}
    x_2 \\
    0
\end{pmatrix},$$
and likewise for the elements of the second coordinate. However, in a $2 \times 2$ normal form game, players' utility gradients $\nabla_i u_i$ are also affine-linear. We then note that the stationary CE constraints are equality constraints, 
$$ \sum_{i \in N} \mathbb{E}_{x \sim \sigma}[\bilin{f_i(x)}{\proj{\tc{X_i}{x_i}}{\nabla_i u_i(x)}}] = 0,$$
and thus any linear combination of the LHS must equal zero. Therefore, by taking a suitable linear combination of the vector fields in consideration, we conclude that any stationary CE with respect to $F$ satisfies
$$ \sum_{i \in N} \mathbb{E}_{x \sim \sigma}[\bilin{\nabla_i u_i(x)}{\proj{\tc{X_i}{x_i}}{\nabla_i u_i(x)}}] = 0.$$
This can only hold when the support of $\sigma$ contains only first-order NE. We have thus proven:

\begin{proposition}\label{prop:two-by-two-eq-CE}
    For the mixed-extension of a $2\times 2$ normal-form game, any stationary CE with respect to the set of affine-linear vector fields over its set of strategy profiles is necessarily a convex combination of its Nash equilibria.
\end{proposition}

Proposition \ref{prop:two-by-two-eq-CE} can in fact be generalised to arbitrary two player normal-form games. The proof uses a result from \cite{Fujii2023} noted by \cite{zhang2025expected}, which provides an explicit characterisation of the set of tangential affine-linear vector fields over the product of two probability simplices. For this reason, we defer it to Appendix \ref{sec:EVI-stuff}.

\begin{theorem}\label{thm:sce-lin-hard}
    Let $\Gamma$ be the mixed-extension of a two player normal-form game, and let $\flin$ be the set of tangent affine-linear vector fields over the set of players' mixed strategy profiles, $\Delta(A_1) \times \Delta(A_2)$. Then any exact stationary CE of the game with respect to $F$ is a probability distribution over the game's set of Nash equilibria. As a consequence, it is $PPAD$-hard to approximate a stationary CE with respect to $F$.
\end{theorem}

We conclude that stationary CE is a much stronger solution concept than local CE in general. And whether there exists any feasible to approximate set of vector fields for stationary CE, besides its coarse variant, remains an open problem. 

\section{On Equilibria and its Analysis}\label{sec:lyapunov}

Given our notions of first-order coarse correlated equilibria are approximable, and that first-order correlated equilibria are also sometimes so, two natural questions to ask are \emph{``how can we quantitatively reason about such equilibria?''}, and \emph{``what do these equilibria actually look like?''}. We initiate the study of this question in this section. 

For the first question, we shall adopt the usual primal-dual framework for price of anarchy analysis. Intuitively, the true time-average guarantee for a performance criterion $q : X \rightarrow \mathbb{R}$ is the value of an optimisation problem, and a solution to its dual provides a bound on the time-average expectation of $q$. Our observation then is that such a dual solution is a \emph{generalised Lyapunov function}, a potential function which encodes guarantees for the continuous time gradient dynamics of the game. In fact, when the associated performance metric is distance to a unique equilibrium, and the dual solution certifies that projected gradient ascent necessarily converges to it, this dual solution is a Lyapunov function in the usual sense. However, the guarantees for projected gradient ascent extend well beyond these settings, even adversarially, and the first-order formulation renders it straightforward to deduce what these guarantees are. On the other hand, for first-order (non-coarse) correlated equilibria, the dual problem instead takes the form of a vector field fit problem, searching for a linear or conic combination of vector fields in $F$ which are aligned with the utility gradients.

As for the second question, we inspect the form of our equilibria in normal-form games. Our observation is that our first-order framework covers a wide array of settings and concepts for which guarantees are established in prior literature. First-order coarse equilibrium guarantees include the usual notion of a coarse correlated equilibrium, as well as the smoothness framework of \cite{roughgarden2015intrinsic} via its equivalence to an ``average-CCE'' \cite{nadav2010limits}. Meanwhile, the correlated equilibrium of a normal-form game admits an equivalent first-order notion of local CE, which is demonstrably non-coarse.   

Overall, our discussion suggests the following ``recipe'' to extend equilibrium analysis concerning time-average guarantees, either for the projected gradient dynamics of a given game or for the set of its Nash equilibrium: (1) identify a tractable to analyse set of vector fields $F : X \rightarrow \mathbb{R}^D$ for the game in question, and (2) provide a solution to the associated dual problem.

\subsection{Duality \& time average guarantees}\label{sec:duality}

We proceed with analysing the appropriate primal-dual framework for an $\epsilon$-approximate first-order CCE. Our approach is motivated by the possibility of strengthening primal-dual efficiency analysis for the outcomes of learning algorithms. Often, in the analysis of games, the object of interest is performance guarantees attached to an equilibrium concept and not necessarily the exact form of equilibrium. That is to say, given an equilibrium concept (in full abstraction, a subset $E \subseteq {\Delta(X)}$), and a continuous ``welfare'' function $q : X \rightarrow \mathbb{R}$, one may consider a bound on the worst case performance 
$$ \frac{\inf_{\sigma \in E} \mathbb{E}_{x \sim \sigma} [q(x)]}{\max_{x \in X} q(x)}.$$
This quantity is often referred to as the \textbf{price of anarchy} of the game, while the related notion of \textbf{price of stability} bounds the best case performance in equilibria, 
$$ \frac{\sup_{\sigma \in E} \mathbb{E}_{x \sim \sigma} [q(x)]}{\max_{x \in X} q(x)}.$$

Most methods of bounding such expectations argue directly via a primal problem over the set of equilibria\footnote{Several exceptions, of course, exist. For instance, the distinct approach of \cite{nguyen:tel-02192531} places the equilibrium utilities and prices in the solution of the dual of a welfare maximisation problem. Another line of work can be interpreted to provide ``valid inequalities'' that a (worst-case) Nash equilibrium must satisfy and then solves the resulting convex dual problem, e.g. \cite{christodoulou2016tight,ahunbay2023price,jin2023first}.}. Here, the primal LP has as its variables probability distributions over $X$, and is subject to \emph{equilibrium constraints}. The objective is to minimise or maximise $q(x)$, corresponding respectively to computing the price of anarchy or stability up to a factor of $\max_{x \in X} q(x)$. For instance, (by the arguments in \cite{nadav2010limits}) the smoothness framework of \cite{roughgarden2015intrinsic} as well as the price of anarchy bounds for congestion games by \cite{bilo2018unifying} both fall under this umbrella. 

We show that such arguments can be extended to first-order equilibria as well. For an illustrative example, given the function $q$, consider the following measure valued (infinite dimensional) LP, which seeks the worst-case value of the expectation of $q$ amongst the set of stationary correlated equilibria of the game with respect to a set $F$ of vector fields $x \mapsto f(x)$,
\begin{align}
    \inf_{\sigma \geq 0} \int_X d\sigma(x) \cdot q(x) \ & \textnormal{subject to} \label{opt:primal-SCCE} \\
    \int_X d\sigma(x) & = 1 \tag{$\gamma$} \\
   \forall \ f \in F, -\sum_{i \in N} \int_X d\sigma(x) \cdot \bilin{f_i(x)}{\proj{\tc{X_i}{x_i}}{\nabla_i u_i(x)}} & = 0. \tag{$\mu(h)$}
\end{align}
The Lagrangian dual of (\ref{opt:primal-SCCE}) may then be naively written,
\begin{align}
    \sup_{\gamma, \mu} \gamma \ & \textnormal{subject to} \label{opt:dual-SCCE} \\
    \forall \ x \in X, \gamma - \int_F d\mu(h) \cdot \sum_{i \in N} \bilin{f_i(x)}{\proj{\tc{X_i}{x_i}}{\nabla_i u_i(x)}} & \leq q(x). \tag{$\sigma(x)$}
\end{align}
Here, $\gamma$ is some real number, while $\mu$ is assumed to be ``some signed measure'' on $F$. Of course, we may simply pick a dual solution $\mu$ which places probability $1$ on some element $f \in F$. Under such a restriction, the dual problem is then 
\begin{align}
    \sup_{\gamma \in \mathbb{R}, f \in F} \gamma \ & \textnormal{subject to} \label{opt:dual-SCCE-2} \\
    \forall \ x \in X, \gamma - \sum_{i \in N} \bilin{f_i(x)}{\proj{\tc{X_i}{x_i}}{\nabla_i u_i(x)}} & \leq q(x). \tag{$\sigma(x)$}
\end{align}
Now, the dual is always feasible -- let $(\gamma,f)$ be a solution. Then if $\sigma$ is an $\epsilon$-stationary CCE with respect to $F$, by dual feasibility, we see that  
\begin{align*}
    \mathbb{E}_{x \sim \sigma}[q(x)] \geq \gamma - \int_X d\sigma(x) \cdot \bilin{f_i(x)}{\proj{\tc{X_i}{x_i}}{\nabla_i u_i(x)}}  \geq \gamma - \epsilon \cdot \poly(\vec{G},\vec{L},G_f,L_f).
\end{align*}

The argument above generalises straightforwardly to the following theorem:

\begin{theorem}\label{thm:general-perf-bounds}
    Suppose that $\sigma = U\{x_t\}_{0 \leq t \leq T-1}$ is an $\epsilon$-stationary CE with respect to the set of vector fields $F$, with bounded gradients and a modulus of continuity of $\omega$, and that $q(x) : X \rightarrow \mathbb{R}$ is any bounded function. Assume that $\gamma \in \mathbb{R}, f \in F$ satisfy 
    \begin{equation}\label{eq:sce-cond}
    \gamma - \sum_{i \in N} \langle f_i(x), \proj{\tc{X_i}{x_i}}{\nabla_i u_i(x)} \rangle \leq q(x) \ \forall \ x \in X. \tag{SCE-DF}
    \end{equation}
    Then $\mathbb{E}_{x \sim \sigma}[q(x)] \geq \gamma - \epsilon \cdot \poly(\vec{G},\vec{L},G_f,L_f)$. The analogous statement holds if $\sigma$ is instead an $\epsilon$-local CE with respect to $F$, and $\gamma \in \mathbb{R},f \in F$ satisfy 
    \begin{equation}\label{eq:lce-cond}
    \gamma - \sum_{i \in N} \langle \proj{\tc{X_i}{x_i}}{f_i(x)}, \nabla_i u_i(x) \rangle \leq q(x) \ \forall \ x \in X. \tag{LCE-DF}
    \end{equation}
\end{theorem}

Whenever a tuple $(\gamma,f,q)$ satisfies either (\ref{eq:sce-cond}) or (\ref{eq:lce-cond}), we shall refer to the value of $\gamma$ as the \textbf{(dual) bound} on $q$ and the vector field $f$ as its \textbf{dual certificate}. Theorem \ref{thm:general-perf-bounds} then shows that a valid dual bound $\gamma$ on the expectation of $q$ is \emph{proven} by its certificate, $f$. This is akin to a bound on the value of a primal LP being proven by the value attained by some feasible solution to its dual. Through Theorem \ref{thm:general-perf-bounds}, several bounds on time-average guarantees on a finite time horizon are immediate for outcomes of gradient ascent.

\begin{corollary}
    In the setting of a smooth game for $\hatN \subseteq N$, suppose that $\gamma$ is a bound on $q$ with a dual certificate $\nabla h$ for $h \in \cotdl(\times_{i \in \hatN} X_i, \mathbb{R})$. Then after $T$ rounds,
    $$ \frac{1}{T} \sum_{t = 0}^{T-1} q(x^t) \geq \gamma - \frac{1}{T}\sum_{t = 0}^{T-1} \frac{h(\xhat{t+1}) - h(\xhat{t})}{\eta_t} - 2 L_h \Big( \sum_{i \in \hatN} G_i \Big)^2 \omega(\eta_t).$$
\end{corollary}

\begin{corollary}
    Suppose that each player $i$ implements projected gradient ascent at potentially different rates, $(\eta_{it})$. Suppose that $\gamma$ is a bound on $q$ with a dual certificate $\nabla h$ for a separable function $h(x) = \sum_{i \in N} h_i(x_i)$, where each $h_i \in \cotdl(X_i,\mathbb{R})$. Then after $T$ rounds, 
    $$ \frac{1}{T} \sum_{t = 0}^{T-1} q(x^t) \geq \gamma - \frac{1}{T} \sum_{i \in N} \sum_{t = 0}^{T-1} \frac{h_i(x_i^{t+1}) - h_i(x^{t})}{\eta_{it}} - 2 L_{h_i} G_i ^2 \omega(\eta_{it}).$$
\end{corollary}

\begin{proof}
    By Theorem \ref{thm:general-perf-bounds} and invoking the refined bound in Theorem \ref{thm:already-proven}.
\end{proof}

\begin{corollary}\label{cor:not-there-yet}
    In the setting of a smooth game for $N$, suppose that each player $i$ has a closed \& convex set that either (1) has a smooth boundary, or (2) is an acute polyhedron. Suppose that $\gamma$ is a bound on $q$ with a dual certificate $\nabla h$ for $h \in \cdl(X,\mathbb{R})$. Then after $T$ rounds, 
    \begin{align*}
        \frac{1}{T} \sum_{t = 0}^{T-1} q(x^t) \geq \gamma - \ & \frac{1}{T} \sum_{t = 0}^{T-1} \Bigg[\frac{h(x^{t+1}_{\hat{N}}) - h(x^t_{\hat{N}})}{\eta_t} +  \frac{1}{2} L_h \Big( \sum_{i \in \hatN} G_i \Big)^2 \omega(\eta_t) \\
        & + G_h \sum_{i \in N, \textnormal{sgn } \in \{+,-\}} B_\textnormal{sgn}^T\Big[X_i,G_i,L_i\sum_{j \in N} G_j,(\eta_{t})_{t \in \mathbb{N}}\Big]\Bigg]
    \end{align*}
\end{corollary}

\begin{proof}
    By Theorem \ref{thm:general-perf-bounds}, and applying Theorem \ref{thm:binding-losses-smooth} and Theorem \ref{thm:binding-losses-polyhedral} to Proposition \ref{thm:avg-approx-general}.
\end{proof}

Corollary \ref{cor:not-there-yet} provides bounds on the expectation of $q$ for outcomes of gradient ascent, when the dual certificate $\nabla h$ may fail to be tangent. However, the general such bound for acute polyhedra through Theorem \ref{thm:binding-losses-polyhedral} is exponential in the dimension $D_i$ of the action set of player $i$. Thankfully, up so far, we have not imposed any regularity condition on $q : X \rightarrow \mathbb{R}$ except for boundedness. By imposing a Lipschitz continuity assumption, and making an appeal to the approximate projected gradient dynamics, we can circumvent this issue.

In particular, given a differentiable $h : X \rightarrow \mathbb{R}$ and the outcome of approximate gradient dynamics $x : [0,T] \rightarrow X$, the bounds implied by a dual feasible pair $(\gamma,\nabla h)$ are valid for the expectation of $q$ along the curve $x$. The integral (\ref{eq:master-continuous-SCCE}) along the curve $x$ exists since it is differentiable almost everywhere on $[0,T]$ by construction. In particular, if $(\gamma,\nabla h,q)$ satisfy (\ref{eq:sce-cond}), then 
$$\gamma = \inf_{x \in X} q(x) + \sum_{i \in N} \bilin{\nabla_i h(x)}{\proj{\tc{X_i}{x_i}}{\nabla_i u_i(x)}}.$$
Therefore, the time expectation of $q(x(\tau))$ satisfies
\begin{align*}
    \frac{1}{T} \int_0^{T} d\tau \cdot q(x(\tau)) & \geq \frac{1}{T} \int_0^T d\tau \cdot \left( \gamma - \sum_{i \in N} \bilin{\nabla_i h(x(\tau))}{\proj{\tc{X_i}{x_i(\tau)}}{\nabla_i u_i(x(\tau))}} \right) \\
    & \geq \gamma - \frac{1}{T} \left|\int_0^{T} d\tau \cdot \sum_{i \in N} \bilin{\nabla_i h(x(\tau))}{\proj{\tc{X_i}{x_i(\tau)}}{\nabla_i u_i(x(\tau))}} \right| \\
    & \geq \gamma - \poly(\vec{G}, \vec{L}, G_h, L_h) \cdot \epsilon^\omega(T).
\end{align*}
Moreover, whenever $q(x)$ is Lipschitz continuous, for each $\tau \in [0,T]$,
\begin{align*}
    |q(x(\tau)) - q(x(\underline{\tau}))| & \leq G_q \| x(\tau) - x(\underline{\tau})\| \leq \frac{1}{\mu_\tau} \cdot (\tau-\underline{\tau}) \cdot G_q \sum_{i \in N} G_i.
\end{align*}
Therefore,
\begin{align}
    \frac{1}{T}\left|\sum_{t = 0}^{T-1}  q(x^t) - \int_0^{T} d\tau \cdot q(x(\tau)) \right| \ & \leq \frac{1}{T}  \sum_{t=0}^{T-1} \int_0^{1} d\tau \cdot \eta_t \tau \left( G_q \sum_{i \in N} G_i\right)\leq \frac{1}{T} \sum_{t = 0}^{T-1} \frac{1}{2} \eta_t  G_q \sum_{i \in N} G_i. \label{eq:retract-argument}
\end{align}

\begin{theorem}\label{thm:approx-q(x)}
    In the setting of a smooth game for $N$, suppose that each player $i$ has a closed \& convex set that either (1) has a smooth boundary, or (2) is an acute polyhedron. Suppose that $\gamma$ is a bound on $q$ with a dual certificate $\nabla h$ for $h \in \cdl(X,\mathbb{R})$, and that $q : X \rightarrow \mathbb{R}$ is Lipschitz continuous with modulus $G_q$. Then after $T$ rounds, 
    \begin{align*}\frac{1}{T}\sum_{t = 0}^{T-1} q(x^t) \geq \gamma -  \frac{1}{T} \sum_{t = 0}^{T-1} \Bigg[\frac{h(x^{t+1}) - h(x^t)}{\eta_t} + \frac{1}{2} G_h \sum_{i \in N}
       \left(\eta_t L_i \sum_{j \in N} G_j \right)  + \frac{R_{it} G_h}{\eta_t}  
         + \frac{1}{2} \eta_t  G_q \sum_{i \in N} G_i \Bigg].\end{align*}
\end{theorem}

\begin{remark*}
    The bound derived on the expectation of $q$ in Theorem \ref{thm:approx-q(x)} does not depend at all on the modulus of continuity of $h$. As a consequence, in the setting of a smooth game, a dual bound $\gamma$ on $q$ with a certificate $h$ is valid even if the tangent cone projection of $\nabla h$ can be discontinuous.
\end{remark*}

\begin{remark*}
    Whereas Theorem \ref{thm:approx-q(x)} provides guarantees on $\mathbb{E}_{x \sim \sigma}[q(x)]$ when all players employ projected gradient ascent with the same step sizes, its conclusion extends to the partially adversarial setting as long as $h$ and $q$ additively separate suitably over classes of players who use different step sizes. For instance, if players in $\hatN \subseteq N$ use the same step-sizes, and
    $$ \forall \ x \in X, \gamma - \sum_{i \in N'} \bilin{\nabla_i h(x)}{\proj{\tc{X_i}{x_i}}{\nabla_i u_i(x)}} \leq q(x)$$
    for a function $h \in \cdl(\times_{i \in \hatN} X_i,\mathbb{R})$, and if $q(x)$ is Lipschitz continuous with modulus $G_q$, then the proof of Theorem \ref{thm:approx-q(x)} readily extends to bound $\sum_{t \in T} q(x^t) / T$ by an analogous retraction argument. Then more generally, if $(N^\alpha)_\alpha$ are the equivalence classes of players who use the same step sizes, $h^\alpha \in \cdl(\times_{i \in N^\alpha} X_i,\mathbb{R})$ for each equivalence class $\alpha$, and if 
    $$ \forall \ x \in X, \gamma^\alpha - \sum_{i \in N^\alpha} \bilin{\nabla_i h^\alpha(x)}{\proj{\tc{X_i}{x_i}}{\nabla_i u_i(x)}} \leq q^\alpha(x),$$
    for each $\alpha$, then we conclude that $\sum_{t \in T, \alpha} q^\alpha(x^t) / T \geq \sum_{\alpha} \gamma^\alpha - \epsilon \cdot \poly(\vec{G},\vec{L},G_{h^\alpha},L_{h^\alpha},G_{q^\alpha})$. Note that here, we do not need $q^\alpha$ to depend only on the actions of the players in $N^\alpha$, but we do need it to be Lipschitz continuous. Of course, the analogous statement holds if (\ref{eq:lce-cond}) holds for each $(\gamma^\alpha,\nabla h^\alpha, q^\alpha)_\alpha$.
\end{remark*}

To conclude the section, we conclude that similar results apply to the output of Algorithm \ref{alg:eps-LCE} or Algorithm \ref{alg:eps-SCE}. For instance:

\begin{corollary}
    Suppose that $F$ is a finite set of tangential vector fields, and $\gamma$ is a dual bound on $q$ with a dual certificate $f = \sum_{f \in F} \lambda_f f$. Then the sequence $(x_t)_{0 \leq t < T}$ output by Algorithm \ref{alg:eps-LCE} initialised as in Theorem \ref{thm:eps-LCE-approx} satisfies 
    $$ \frac{1}{T} \sum_{t = 0}^{T-1} q(x^{t}) \geq \gamma - \sum_{f \in F} \lambda_f \left[ G_F\sqrt{\sum_{i \in N}G_i^2 \ln |F| / T} + \delta \sqrt{\sum_{i \in N} G_i^2} \right].$$
\end{corollary}

\subsection{Interpretation of dual guarantees \& Lyapunov arguments}\label{sec:interpret}

Given a bound $\gamma$ on $q$ with a dual certificate $f$, it remains to give an interpretation to the solution. We shall first consider the case of coarse equilibrium, where $f = \nabla h$ for some function $h$. To wit, consider the case when our smooth game has a unique first-order NE $x^* \in X$, and suppose $q(x) \leq 0$ for any $x \in X$ with equality if and only if $x = x^*$. Furthermore, suppose that $(0,\nabla h)$ satisfy (\ref{eq:sce-cond}). Then the dual feasibility conditions imply that 
\begin{align*}
    \sum_{i \in N} \bilin{-\nabla_i h(x)}{\proj{\tc{X_i}{x_i}}{\nabla_i u_i(x)}} & \leq q(x) < 0 \ \forall \ x \in X, x \neq x^*, \\
    \sum_{i \in N} \bilin{-\nabla_i h(x^*)}{\proj{\tc{X_i}{x^*_i}}{\nabla_i u_i(x^*)}} & = q(x^*) = 0.
\end{align*}
Similarly, if $(0,\nabla h)$ satisfy (\ref{eq:lce-cond}) and $h$ is tangential, then
\begin{align*}
    -\sum_{i \in N} \bilin{\nabla_i h(x)}{\proj{\tc{X_i}{x_i}}{\nabla_i u_i(x)}} & \leq \sum_{i \in N} \bilin{-\nabla_i h(x)}{\proj{\tc{X_i}{x_i}}{\nabla_i u_i(x)} + \proj{\nc{X_i}{x_i}}{\nabla_i u_i(x)}} \\
    & = \sum_{i \in N} \bilin{-\nabla_i h(x)}{\nabla_i u_i(x)} \leq q(x) < 0 \ \forall \ x \in X, x \neq x^* \\
    \sum_{i \in N} \bilin{-\nabla_i h(x^*)}{\proj{\tc{X_i}{x^*_i}}{\nabla_i u_i(x^*)}} & = q(x^*) = 0.
\end{align*}
Here, the first inequality follows since for any pair of vectors $v \in \tc{X}{x}$ and $w \in \nc{X}{x}$, $\bilin{v}{w} \leq 0$, and the assumption that $\nabla h(x)$ is tangential to $X$. The first equality holds via Moreau decomposition theorem, and the second inequality is by dual feasibility. The final equality, in turn, holds as $x^*$ is a first-order NE by assumption. 

Now, recall that for a continous time projected dynamical system defined by the differential equation $\dot{x}(\tau) = \proj{\tc{X}{x(\tau)}}{\nabla_i u_i(x(\tau))}$ with a unique equilibrium $x^*$, a function $V : X \rightarrow \mathbb{R}_+$ is called a (global) Lyapunov function if $dV(x(\tau))/d\tau \leq 0$, with equality only if $x(\tau) = x^*$. By (\ref{eq:grad-dyn}), we see that the dual feasibility condition for $(0,\nabla h)$ to bound $q$ in this case is precisely the condition that $-h$ is a Lyapunov function in the usual sense for the dynamics of the game.

For a more general performance metric $q$, we shall thus call a dual solution $h$ a \emph{(generalised) Lyapunov function}, following the nomenclature in \cite{glynn2008bounding}. The insight is that not only such functions can describe worst- and best- case behaviour of \emph{approximate or exact} gradient dynamics of the game as far as bounding the performance of $\epsilon$-approximate first-order CCE is concerned, they arise naturally through the primal-dual framework associated with such first-order CCE.

The following example illustrates how potentials translate to time-average guarantees:

\begin{example}\label{ex:potential}
    Consider again the matching pennies game in Example \ref{ex:pennies}. The unconstrained projected gradient dynamics satisfy, if $r(0)^2 = x_1^2(0) + x_2^2(0) \leq 1$, the equalities $x_1(\tau) = r \cdot \cos(\tau + \phi), x_2(\tau) = r \cdot \sin(\tau + \phi)$ for some \emph{``phase angle''} $\phi$. If $r(0) > 1$, however, projections at the edge of a box suppresses the radius of motion down to $1$ by some time $\tau' > 0$, and for each $\tau \geq \tau'$, $(x_1(\tau),x_2(\tau)) = (\cos(\tau+\phi), \sin(\tau+\phi))$. Therefore, any stationary probability distribution $\sigma \in \Delta([-1,1]^2)$ must be rotationally symmetric, with support on the disc $D = \{(x_1,x_2) \ | x_1^2 + x_2^2 \leq 1\}$. The goal here is to present the form of generalised Lyapunov functions $h$ which certify these bounds, at least approximately, such that they provide convergence guarantees also for approximate projected gradient dynamics.

    For the bound on the radius, we seek a suitable function $h : [-1,1]^2 \rightarrow \mathbb{R}$ and a small $\delta > 0$ such that 
    $$ \min_{x_1, x_2 \in [-1,1]} x_1^2 + x_2^2 - \underbrace{\left(\nabla_1 h(x) \proj{\tc{[-1,1]}{x_1}}{-x_2} + \nabla_2 h(x) \proj{\tc{[-1,1]}{x_2}}{x_1}\right)}_{``dh(x)/d\tau''}\leq 1 + \delta.$$
    Here, we abuse notation by considering time dependent quantities given suitable initial conditions for the projected gradient dynamics. Now, consider setting, 
    $$ h(x) = \begin{cases}
        0 & x_1^2 + x_2^2 \leq 1 \\
        -\frac{M_1(r-1)^2}{1 + M_1(r-1)} \cdot M_2 \cdot \arctan(x_1/x_2)& x_1^2 + x_2^2 > 1, x_1 x_2 > 0 \\
        -\frac{M_1(r-1)^2}{1 + M_1(r-1)} \cdot M_2 \cdot \arctan(-x_2/x_1) & x_1^2 + x_2^2 > 1, x_1 x_2 \leq 0 
    \end{cases}$$
    for some constants $M_1, M_2 > 0$ to be determined later. By construction, $h$ is continuously differentiable with Lipschitz gradients. On the disc $D$, $\nabla h(x) = (0,0)$, whereas for any $x_1^2 + x_2^2 > 1$, by symmetry of the dynamics under rotations by $\pi/2$, it is sufficient to consider the case when $x_1, x_2 > 0$. In this case, if $x_2 < 1$, then no projections occur, and 
    \begin{align*}
        \frac{dh(x)}{d\tau} = \frac{d}{d\tau} \left[ -\frac{M_1(r-1)^2}{1 + M_1(r-1)} \cdot M_2 \cdot \arctan(x_1/x_2)  \right] = \frac{M_1 M_2 (r-1)^2}{1 + M_1(r-1)},
    \end{align*}
    in which case we have     
    $$r^2 - dh(x)/d\tau = r^2 - \frac{M_2 M_1(r-1)^2}{1 + M_1(r-1)}.$$
    We would like to bound this quantity. Denote $x = r - 1$, and suppose that $x \leq 1/M_1$. Then 
    \begin{align*}
        (1+x)^2 - \frac{M_1 M_2 x^2}{1 + M_1 x} & \leq (1+x)^2 - \frac{M_1 M_2 x^2}{2} = \left(1 - \frac{M_1 M_2}{2}\right) x^2 + 2x + 1.
    \end{align*}
    This bound is maximised, when $M_1 M_2/2 > 1$, whenever $x = 1/(-1 + M_1 M_2/2)$, with value $1 + 2/(M_1 M_2 - 2)$. Meanwhile, if $x > 1/M_1$, then $1 < M_1 x$, and
    \begin{align*}
        (1+x)^2 - \frac{M_1 M_2 x^2}{1 + M_1 x} & \leq (1+x)^2 - \frac{M_1 M_2 x^2}{2M_1 x} = x^2 + \left( 2 - \frac{M_2}{2} \right)x + 1.
    \end{align*}
    This function is convex, so it attains its maximum over the feasible interval for $x \in [0,\sqrt{2}-1]$. When $x = 0$ the bound equals $1$, whereas if $x = \sqrt{2}-1$, whenever 
    $$ (\sqrt{2}-1)^2 + (\sqrt{2}-1)\left( 2 - \frac{M_2}{2} \right) + 1 \leq 1 \Leftrightarrow M_2 \geq 2(\sqrt{2}+1)$$
    the bound is $\leq 1$. As the bounds will generally improve in $M_1$, we will also fix $M_2 = 10$ at this point for convenience.

    If instead $x_2 = 1$ then $``dx_2/d\tau'' = \proj{\tc{[-1,1]}{x_2}}{x_1} = 0$, and we have 
    \begin{align*}
        h(x) & = - \frac{M_1 \left( \sqrt{1 + x_1^2} - 1\right)^2}{1 + M_1 (\sqrt{1 + x_1^2} - 1)} \cdot 10 \arctan(x_1), \\
        \frac{dh(x)}{d\tau} & = -10 M_1  \cdot \underbrace{\left( \frac{dx_1}{d\tau} \right)}_{=-1} \cdot \frac{d}{dx_1} \left[ \frac{\arctan(x_1) (\sqrt{1 + x_1^2}-1)^2}{1 + M_1 (\sqrt{1 + x_1^2}-1)} \right] \\
        & = \frac{10 M_1 (\sqrt{1 + x_1^2}-1)}{\sqrt{1+x_1^2} (1 + M_1 (\sqrt{1 + x_1^2}-1))} \\ & \quad \cdot \left[ \frac{\sqrt{1 + x_1^2}-1}{\sqrt{1 + x_1^2}} +  \frac{2\arctan(x_1) x_1 + M_1 x_1 \arctan(x_1) (\sqrt{1 + x_1^2}-1)}{1 + M_1 (\sqrt{1 + x_1^2}-1)} \right] \\
        & \geq \frac{10 M_1 (r-1)}{r(1+M_1 (r-1))} \left[1- \frac{1}{r}\right] \\
        & \geq \frac{5 M_1 (r-1)^2}{(1+M_1 (r-1))},
    \end{align*}
    where as $r \leq \sqrt{2}$, $1-1/r \geq (r-1)/\sqrt{2}$. To bound $r^2 - dh(x)/d\tau$, we again consider the cases when $r-1 < 1/M$ and $r-1 \geq 1/M$. In the former case, 
    \begin{align*}
        r^2 - \frac{5 M_1 (r-1)^2}{(1+M_1 (r-1))} & \leq r^2 - \frac{5 M_1 (r-1)^2}{2},
    \end{align*}
    which is maximised for $r = 1 + \frac{2}{5M_1 - 2}$ with the same value. Meanwhile, if $r-1 \geq 1/M$, then 
    \begin{align*}
        r^2 - \frac{5 M_1 (r-1)^2}{1+M_1(r-1)} & \leq r^2 - \frac{5(r-1)}{2} = (1+x)^2 - \frac{5x}{2},
    \end{align*}
    defining $x = r-1 \in [0,\sqrt{2}-1]$. Again by the convexity of the expression, it is sufficient to check its values at its endpoints; setting $x = 0$ results in a bound of $1$, while setting $x = \sqrt{2}-1$ results in a bound $2-5(\sqrt{2}-1)/2 < 1$. As a consequence, $h$ proves a bound of 
    $$\delta = \min\left\{ \frac{2}{10M_1 - 2}, \frac{2}{5M_1 - 2} \right\} = \frac{2}{5M_1 - 2}.$$
    We remark that Theorem \ref{thm:approx-q(x)} does not depend at all on the Lipschitz coefficient $L_h$, but only $G_h$. Therefore, by considering the limit $M_1 \rightarrow \infty$, we conclude that if both players implement projected gradient ascent with the same step sizes $\propto 1/\sqrt{T}$ or $\propto 1/\sqrt{t+1}$, the time-average distribution has expected square radius at most $1 + O(1/\sqrt{T})$.

    Finally, we would like to provide an approximate dual proof that any stable distribution is approximately rotationally invariant. Towards this end, consider a function $q : [-1,1] \rightarrow \mathbb{R}$ which has average zero on any circle about the origin, i.e. $q(x_1,x_2) = p(r) \cdot \sin(k\cdot\theta + \phi)$ for some function $p : [0,\sqrt{2}] \rightarrow \mathbb{R}$ differentiable with $p(0) = 0$, where $\theta$ is the angle in polar coordinates corresponding to $x_1$ and $x_2$, non-zero $k \in \mathbb{Z}$ is the associated frequency, and $\phi \in \mathbb{R}$ is again some phase angle. For any rotationally symmetric stationary distribution $\sigma \in \Delta([-1,1]^2)$, by an appeal to the Fourier decomposition of $\sigma$, it must be the case that $\mathbb{E}_{x \sim \sigma}[q(x)] = 0$.

    Towards this end, let $\ell(x_1,x_2) = p(r) \cdot \cos(k\theta + \phi) / k$. In the region without projections, $d\ell(x) / d\tau = q(x)$, and therefore $\nabla\ell$ forms an exact dual certificate for $q$ whenever projections are not needed. To handle the region where projections are needed, consider $\ell + A \cdot h$ for the function $h$ we previously defined, given some $M_1 \leftarrow M > 0$ and some $A > 0$. By the symmetry of the problem under $90^\circ$ rotations, without loss of generality restrict attention to the case when $x_2 = 1, x_1 > 0$. In this case, 
    $$ \frac{dr}{d\tau} = \frac{d}{d\tau} \sqrt{1+x_1^2} = \frac{-x_1}{r}, \frac{d\theta}{d\tau} = \frac{d}{d\tau} \left[\pi/2 - \arctan(x_1) \right] = \frac{1}{1+x_1^2}.$$
    Therefore,
    $$ \frac{d(\ell + A h)(x)}{d\tau} = \underbrace{-\frac{x_1 p'(r) \sin(k\theta+ \phi)}{k \sqrt{1 + x_1^2}} + \frac{q(x)}{1+x_1^2}}_{(\ast)} + A \frac{dh(x)}{d\tau}.$$
    Note that by a similar treatment as the previous setting, via considering $M \rightarrow \infty$ and picking $A$ sufficiently large, we may acquire a family of dual solutions which prove that $\mathbb{E}_{x \sim \sigma}[q(x)] \simeq 0$ for any outcome of projected gradient ascent. In the interest of brevity, we shall argue about the asymptotic case. In the large $M$ limit, $dh(x)/d\tau \geq 5$ whenever $1 \geq x_1 > 0, x_2 = 1$, while ($\ast$) is bounded on that set by $q(x) \pm C x_1$ for some constant $C$. Therefore, a fixed choice of constant $A$ is sufficient to provide a family of increasingly tighter dual solutions for bounds on the expected value of $q(x)$. In particular, we conclude that if both players implement projected gradient ascent with the same step sizes $\propto 1/\sqrt{T}$ or $\propto 1/\sqrt{t+1}$, the time-average distribution satisfies $|q(x)| \leq O(1/\sqrt{T})$ in expectation.
\end{example}

We end this section with a short note on the interpretation of $(\gamma,f)$, when $f$ is not necessarily a gradient field. Given a (for simplicity, finite) set of continuously differentiable vector fields $F$, the analogue of (\ref{opt:primal-SCCE}) for stationary CE is given 
\begin{align}
    \inf_{\sigma \geq 0} \int_X d\sigma(x) \cdot q(x) \ & \textnormal{subject to} \label{opt:primal-SCE} \\
    \int_X d\sigma(x) & = 1 \tag{$\gamma$} \\
   \forall \ f \in F, \sum_{i \in N} \int_X d\sigma(x) \cdot \bilin{-f_i(x)}{\proj{\tc{X_i}{x_i}}{\nabla_i u_i(x)}} & = 0, \tag{$\mu_f$}
\end{align}  
and its Lagrangian dual is simply 
\begin{align}
    \sup_{\gamma, \mu} \gamma \ & \textnormal{subject to} \label{opt:dual-SCE} \\
    \forall \ x \in X, \gamma + \sum_{f \in F, i \in N} \bilin{-\mu_f f(x)}{\proj{\tc{X_i}{x_i}}{\nabla_i u_i(x)}} & \leq q(x). \tag{$\sigma(x)$}
\end{align}
Now, suppose that the game has a unique first-order Nash equilibrium $x^*$, and $q(x) \leq 0$ with equality only if $x = x^*$. Then if $(0,\mu)$ is a solution of (\ref{opt:dual-SCE}), the vector field $\sum_{f \in F} \mu_{f} f$ satisfies 
\begin{align*}
    \sum_{i \in N} \bilin{ \sum_{f \in F} \mu_{f} f}{\proj{\tc{X_i}{x_i}}{\nabla_i u_i(x)}} & \geq -q(x) > 0 \ \forall \ x \in X, x \neq x^*, \\
    \sum_{i \in N} \bilin{ \sum_{f \in F} \mu_{f} f}{\proj{\tc{X_i}{x^*_i}}{\nabla_i u_i(x^*)}} & = 0.
\end{align*}
In particular, the set of stationary CE with respect to $F$ is the set of probability distributions over the first-order NE of the game whenever the linear span of $F$ contains a vector field $\sum_{f \in F} \mu_{f} f$ such that $\langle{\sum_{f \in F} \mu_{f} f(x(\tau))},{\dot{x}(\tau)}\rangle \geq 0$, with equality only if $x(\tau)$ is an equilibrium point for the dynamics. 

When $F$ is a set of tangential vector fields, an analogous argument shows that the dual problem associated with a local CE looks for a vector field that is positively aligned with the utility gradients in the conical span of $F$ instead. Explicitly, the primal problem is 
\begin{align}
    \inf_{\sigma \geq 0} \int_X d\sigma(x) \cdot q(x) \ & \textnormal{subject to} \label{opt:primal-LCE} \\
    \int_X d\sigma(x) & = 1 \tag{$\gamma$} \\
   \forall \ f \in F, \sum_{i \in N} \int_X d\sigma(x) \cdot \bilin{f_i(x)}{\nabla_i u_i(x)} & \leq 0, \tag{$\mu_f$}
\end{align} 
which now has a Lagrangian dual 
\begin{align}
    \sup_{\gamma \in \mathbb{R}, \mu \geq 0} \gamma \ & \textnormal{subject to} \label{opt:dual-LCE} \\
    \forall \ x \in X, \gamma - \sum_{f \in F, i \in N} \bilin{\mu_f f(x)}{\nabla_i u_i(x)} & \leq q(x). \tag{$\sigma(x)$}
\end{align}
In this case, when $(0,\mu)$ is a solution and $q(x) \leq 0$ for any $x \in X$ with equality only at the first-order Nash equilibria of the game, we have 
\begin{align*}
    \sum_{f \in F, i \in N} \bilin{\mu_f f(x)}{\nabla_i u_i(x)} & \geq -q(x) > 0 \ \forall \ x \in X, x \neq x^*, \\
    \sum_{f \in F, i \in N} \bilin{\mu_f f(x^*)}{\nabla_i u_i(x^*)} & = 0.
\end{align*}
We remark that the equality for $x^*$ must hold, because $\sum_{f \in F} \mu_f f$ is tangential, while at a fixed point of the gradient dynamics of the game, $\nabla_i u_i$ is in the normal cone to $X_i$. This implies that 
$\sum_{f \in F, i \in N} \bilin{\mu_f f(x^*)}{\nabla_i u_i(x^*)} \leq 0$,
however, dual feasibility implies that the inner product is also $\geq 0$. We thus interpret (\ref{opt:dual-SCE},\ref{opt:dual-LCE}) as vector field fit problems; these dual problems attempt to find a vector field $\sum_{f \in F} \mu_f f$ that is suitably aligned with the utility gradient at every non-equilibrium point $x$, and $\gamma$ is a parameter indicating the goodness of fit. 

\subsection{First-order (coarse) correlated equilibrium in normal-form games}\label{sec:normal-form}

Finally, we turn our attention to the interpretation of first-order equilibria in normal-form games. We shall see that the usual notions of CE, CCE as well as average CCE (in the sense of \cite{nadav2010limits}) are in fact equivalent to an appropriately defined notion of first-order correlated equilibrium with respect to a finite set of tangent vector fields, with the coarse notions induced in fact by gradient fields. The conclusion is that the usual primal-dual analysis of price of anarchy bounds in fact fall under the umbrella of results covered in Section \ref{sec:duality}. Crucially, whenever a dual solution for first-order coarse correlated equilibrium certifies \emph{uniqueness} of equilibrium, it necessarily assembles into a Lyapunov function in the traditional sense. However, even the set of tangential functions itself is rich, providing refinements of coarse correlated equilibria for the outcomes of gradient ascent dynamics. 

Throughout this section, we shall consider a finite normal-form game, with a set of players $N$, players' pure action sets $(A_i)_{i \in N}$, and utilities $(u_i : \times_{i \in N} A_i \rightarrow \mathbb{R})_{i \in N}$. The mixed-extension of the game has action sets $X_i = \Delta(A_i)$, the probability simplex over $A_i$, and utilities $u_i(x) = \sum_{a \in A} \left(\prod_{i \in N} x_i(a_i) \right) u_i(a)$ defined via expectations. Therefore, the utility gradients are given,
$$\nabla_i u_i(x)_{a_i}  = \sum_{a_{-i} \in A_{-i}} \left( \prod_{j \neq i} x_j(a_j) \right) \cdot u_i(a_i,a_{-i}) \ \forall \ i \in N.$$
By $e_{a_i}$, we shall denote the vector over $\Delta(A_i)$ which assigns probability $1$ to the action $a_i$.

\subsubsection{Equivalence of coarse correlated equilibria}\label{sec:CCE-equiv}

We will begin by showing that the CCE of finite normal-form games are equivalent\footnote{This was remarked by Constantinos Daskalakis in a private correspondence.} to an appropriately defined notion of a local CCE. Consider the set of functions, 
$$ H_{\textnormal{CONST}} = \{ -\|x_i - e_{a'_i}\|^2/2 \ | \ i \in N, a'_i \in A_i \}.$$
Now, take and $h = -\|x_i - e_{a'_i}\|^2/2 \in H_{\textnormal{CONST}}$. Then $h$ is obviously tangential, and for any $x \in X$,
\begin{align*}
    \sum_{j \in N} \bilin{\nabla_j h(x)}{\nabla_j u_i(x)} & = \bilin{\nabla_i h(x)}{\nabla_i u_i(x)} \\
    & = \bilin{(e_{a'_i} - x_i)}{\nabla_i u_i(x)} \\
    & = \sum_{a_i \in A_i} (\delta(a'_i,a_i) - x_i(a_i)) \cdot \sum_{a_{-i} \in A_{-i}} \left( \prod_{j \neq i} x_j(a_j) \right) u_i(a) \\
    & = \sum_{a_{-i} \in A_{-i}} \left( \prod_{j \neq i} x_j(a_j) \right) \left( u_i(a'_i,a_{-i}) - \sum_{a \in A} x_i(a_i) u_i(a) \right) \\
    & = \sum_{a \in A} \left(\prod_{j \in N} x_j(a_j)\right) (u_i(a'_i,a_{-i}) - u_i(a)).
\end{align*}
For a probability distribution $\hat{\sigma} \in \Delta(X)$, denote by $\sigma(a) = \int_X d\hat{\sigma}(x) \cdot \left(\prod_{j \in N} x_j(a_j) \right)$, the marginalisation of $\hat{\sigma}$ onto the pure strategies. Then the local CCE condition with respect to $H_{\textnormal{CONST}}$ is equivalent to 
$$ \forall \ a'_i \in A_i, \sum_{a \in A} \sigma(a) \cdot (u_i(a'_i, a_{-i}) - u_i(a)) \leq 0.$$
This is precisely the usual coarse correlated equilibrium constraint for the finite normal-form game. The conclusion is immediate.

\begin{proposition}\label{prop:CCE-equiv}
     Suppose that $\hat{\sigma}$ is an first-order local CCE with respect to $H_{\textnormal{CONST}}$. Then the probability distribution $\sigma \in \Delta(A)$, defined as $\sigma(a) = \int_X d\hat{\sigma}(x) \cdot \left(\prod_{j \in N} x_j(a_j) \right)$, is a coarse correlated equilibrium of the normal-form game in the usual sense. In turn, interpreting a coarse correlated equilibrium $\sigma^* \in \Delta(A)$ of the normal-form game as a probability distribution on $\Delta(X)$, $\sigma^*$ is a first-order local CCE with respect to $H_{\textnormal{CONST}}$.
\end{proposition}

We now consider what a primal-dual proof of uniqueness of such a local CCE looks like. Suppose that the probability distribution which places probability $1$ on $x^* \in X$ is the unique local CCE of the game, then the measure valued primal problem 
\begin{align}
    \max_{\hat{\sigma} \geq 0} \int_X d\hat{\sigma}(x) \cdot \| x - x^*\|^2 & \textnormal{ subject to} \label{opt:ex-primal} \\
    \forall \ h \in H_{\textnormal{CONST}}, \int_X d\hat{\sigma}(x) \cdot \sum_{i \in N} \bilin{\nabla h(x)}{\nabla_i u_i(x)} & \leq 0 \tag{$h$}\\
    \int_X d\hat{\sigma}(x) & = 1 \tag{$\gamma$}
\end{align}
has value $0$. First, note that by the convexity of the objective in $x$ and by Proposition \ref{prop:CCE-equiv}, we may assume that the optimal solution $\sigma$ concentrates all probability on pure action profiles. But that implies that we simply have the standard LP over the set of CCE (in the usual sense) of the normal-form game, 
\begin{align}
    \max_{\sigma \geq 0} \sum_{a \in A} \sigma(a) \cdot \sum_{i \in N} \|x_i(a_i) - x^*_i(a_i)\|^2 & \textnormal{ subject to} \\
    \forall i \in N, \forall a'_i \in A_i, \sum_{a \in A} \sigma(a) \cdot (u_i(a'_i, a_{-i}) - u_i(a)) & \leq 0 \tag{$d(i,a'_i)$}\\
    \sum_{a \in A} \sigma(a) & = 1 \tag{$\gamma$}
\end{align}
For this LP to have value $0$, $x_i^*(a_i)$ must necessarily be $\{0,1\}$-valued; otherwise, the objective is strictly positive for the LP. So we have that $x^*_i(a_i) = \delta(a^*_i,a_i)$ for each player $i$, and some action profile $a^*$. The associated dual LP is given, after scaling the primal objective by $1/2$,
\begin{align}
    \min_{\gamma \in \mathbb{R}, d \geq 0} \gamma & \textnormal{ subject to} \\
    \ \forall \ a \in A, \gamma + \sum_{i \in N, a'_i \in A_i} d(i,a'_i) \cdot (u_i(a'_i, a_{-i}) - u_i(a)) & \geq \sum_{i \in N} 1- \delta(a_i,a_i^*). \tag{$\sigma(a)$}
\end{align}
\cite{ahunbay2024uniqueness} show that (in the more general setting of incomplete information games) any dual solution $d^*$ of value $0$ necessarily has $d^*(i,a'_i) > 0 \Leftrightarrow a'_i = a^*_i$, and this condition is in fact equivalent to a unique CCE of the finite normal-form game in pure strategies. By the discussion in Section \ref{sec:interpret}, we deduce:

\begin{proposition}\label{prop:Lyapunov-CCE}
    A finite normal-form game has a unique first-order local CCE with respect to $H_{\textnormal{CONST}}$ which assigns probability $1$ to a mixed-strategy profile $x^*$ if and only if $x^*$ is a Nash equilibrium in pure strategies and the convergence of the game's gradient dynamics to $x^*$ may be proven via a Lyapunov function of the form $h(x) = \sum_{i \in N} C_i \cdot \|x_i - x^*_i\|^2$ for some constants $C_i > 0$.
\end{proposition}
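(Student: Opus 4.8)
The plan is to package the duality computation carried out immediately before the statement into a clean biconditional, reading off each implication from the measure-valued LP (\ref{opt:ex-primal}) and its dual. Throughout I write $h_i(x_i \mid a'_i) = -\|x_i - e_{ia'_i}\|^2/2$, so that by Proposition \ref{prop:CCE-equiv} a first-order local CCE with respect to $H$ is exactly a feasible point of (\ref{opt:ex-primal}), whose objective $\int_X \|x-x^*\|^2\,d\sigma$ vanishes precisely when $\sigma = \delta(x^*)$. I will use the identity $\bilin{\nabla_i h_i(x_i\mid a'_i)}{\nabla_i u_i(x)} = \sum_{a} \sigma'(a)\,(u_i(a'_i,a_{-i}) - u_i(a))$ established above, the tangency of each $\nabla_i h_i(\cdot\mid a'_i) = e_{ia'_i} - x_i \in \tc{X_i}{x_i}$, and the resulting inequality $\bilin{\nabla_i h_i}{\nabla_i u_i} \le \bilin{\nabla_i h_i}{\proj{\tc{X_i}{x_i}}{\nabla_i u_i}}$.

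For the forward direction, I assume $\delta(x^*)$ is the unique local CCE. Then it is the only feasible point of (\ref{opt:ex-primal}) with zero objective, so the optimal value is $0$. Since $x \mapsto \|x-x^*\|^2$ is convex and the constraints $d(i,a'_i)$ depend on $\sigma$ only through the induced $\sigma' \in \Delta(A)$, the supremum is attained on distributions concentrated on pure profiles, whence the program reduces to the standard CCE LP; its value $0$ forces $x^*$ to be $\{0,1\}$-valued, i.e. $x^* = \delta(a^*)$. A CCE supported on a single pure profile is exactly a pure Nash equilibrium, giving the first half of the conclusion. Uniqueness then lets me invoke \cite{ahunbay2024uniqueness} (Proposition 3.1) to obtain a tight dual $d^* \ge 0$ with $d^*(i,a'_i) > 0 \Leftrightarrow a'_i = a^*_i$. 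Setting $C_i = d^*(i,a^*_i)/2 > 0$ and combining dual feasibility with the tangency inequality yields the pointwise bound $\sum_{i} 2C_i \bilin{\nabla_i h_i(x_i\mid a^*_i)}{\proj{\tc{X_i}{x_i}}{\nabla_i u_i(x)}} \ge \tfrac12\|x-x^*\|^2$, with equality at $x^*$. As $\nabla_i h = 2C_i\,\nabla_i h_i(\cdot\mid a^*_i)$ for $h = \sum_{i} -C_i\|x_i-x^*_i\|^2$, this is precisely $dh(x(\tau))/d\tau \ge \tfrac12\|x-x^*\|^2 > 0$ off $x^*$; equivalently $-h$ is a strict Lyapunov function of the stated quadratic form certifying convergence of the projected gradient dynamics to $x^*$.

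For the reverse direction, I assume $x^* = \delta(a^*)$ is a pure NE and that such constants $C_i>0$ exist. The pure-NE property makes $\delta(x^*)$ feasible for (\ref{opt:ex-primal}) with objective $0$. I set $\mu(i,a^*_i) = 2C_i$, all other $\mu = 0$, and $\omega = 0$; the Lyapunov inequality is exactly the dual feasibility constraint, so this is dual feasible with value $0$. Integrating the dual inequality against an arbitrary local CCE $\sigma$, using $\mu \ge 0$ and the per-constraint bounds $\int_X \bilin{\nabla_i h_i(x_i\mid a^*_i)}{\nabla_i u_i(x)}\,d\sigma \le 0$, gives $0 = \omega \ge \tfrac12\int_X \|x-x^*\|^2\,d\sigma \ge 0$, forcing $\int_X\|x-x^*\|^2\,d\sigma = 0$ and hence $\sigma = \delta(x^*)$; thus $\delta(x^*)$ is the unique local CCE. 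The main obstacle is the forward direction's appeal to \cite{ahunbay2024uniqueness} (Proposition 3.1), which is what converts bare uniqueness into the existence of a sign-definite dual supported exactly on $a^*$, together with the sign- and scaling-bookkeeping that turns that dual into the claimed Lyapunov function. I note that, unlike the general duality concerns flagged after (\ref{opt:dual-SCCE-2}), the weak-duality step in the reverse direction is elementary: $\sigma$ is a probability measure and every integrand is bounded on the compact $X$, so integrating the dual constraint term-by-term is legitimate and no infinite-dimensional strong-duality theorem is needed.
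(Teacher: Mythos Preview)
Your proposal follows the paper's approach closely: the paper's ``proof'' is precisely the discussion preceding the proposition, and your forward direction matches it step for step---reduce the measure-valued LP to the finite CCE LP by convexity, force $x^*$ to be pure, then invoke Proposition 3.1 of \cite{ahunbay2024uniqueness} to obtain a tight dual supported on $a^*$ and read off the Lyapunov function.

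For the reverse direction you take a more direct weak-duality route, whereas the paper simply relies on the ``if and only if'' in Proposition 3.1 of \cite{ahunbay2024uniqueness} for both implications (unique CCE $\Leftrightarrow$ tight dual supported on $a^*$) and treats the Lyapunov property as a derived consequence. One point to watch in your argument: you write that ``the Lyapunov inequality is exactly the dual feasibility constraint'', but the Lyapunov condition for the \emph{projected} gradient dynamics involves $\proj{\tc{X_i}{x_i}}{\nabla_i u_i(x)}$, while the dual constraint of (\ref{opt:ex-primal}) involves the unprojected $\nabla_i u_i(x)$. The tangency inequality $\bilin{\nabla_i h_i}{\nabla_i u_i} \le \bilin{\nabla_i h_i}{\proj{\tc{X_i}{x_i}}{\nabla_i u_i}}$ runs in the wrong direction to pass from the former to the latter. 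The paper avoids this by working at the level of the finite normal-form dual LP, where no projections appear, and then deducing the projected Lyapunov inequality as a corollary; your direct argument is clean if the Lyapunov hypothesis is read as the unprojected dual inequality (which is how the paper implicitly treats it), but as written it conflates the two.
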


\subsubsection{Equivalence of correlated equilibria}\label{sec:CE-equiv}

The equivalence of correlated equilibria of normal-form games to first-order local correlated equilibria with respect to a suitably chosen set of vector fields $F$ also follows from arguments analogous to those in Section \ref{sec:CCE-equiv}. Here, however, we do not find a coarse representation. Consider the set of vector fields,
    $$ \fint = \{ x_{i}(a_i) \cdot (e_{ia'_i} - e_{ia_i}) \ | \ i \in N, a_i \neq a'_i \in A_i\}.$$
In this case, for any player $i$ and any $x \in \times_{i \in N} \Delta(A_i) \equiv X$, 
\begin{align*}
        x_{i}(a_i) \cdot \bilin{e_{ia'_i} - e_{ia_i}}{\nabla_i u_i(x)} & = x_{i}(a_i) \cdot \sum_{a_{-i} \in A_{-i}} \left( \prod_{j \neq i} x_j(a_j) \right) \cdot (u_i(a'_i,a_{-i}) - u_i(a_i,a_{-i})) \\
        & = x_{i}(a_i) \cdot \mathbb{E}_{a_{-i} \sim x_{-i}}[u_i(a'_i,a_{-i}) - u_i(a_i,a_{-i})].
\end{align*}
Therefore, for $\hat{\sigma} \in \Delta(X)$, defining $\sigma$ to be the probability distribution on $A$ induced by $\sigma$,
    \begin{align*}
        & \int_X d\hat{\sigma}(x) \cdot x_{i}(a_i) \cdot \bilin{e_{ia'_i} - e_{ia_i}}{\nabla_i u_i(x)} \\
         = & \int_X d\hat{\sigma}(x) \cdot x_{i}(a_i) \cdot \sum_{a_{-i} \in A_{-i}} \left( \prod_{j \neq i} x_j(a_j) \right) \cdot (u_i(a'_i,a_{-i}) - u_i(a_i,a_{-i})) \\
        = &\sum_{a_{-i} \in A_{-i}} \left( \int_X d\hat{\sigma}(x) \cdot x_i(a_i) \cdot \prod_{j \neq i} x_j(a_j) \right) \cdot (u_i(a'_i,a_{-i}) - u_i(a_i,a_{-i})) \\
        = & \sum_{a_{-i} \in A_{-i}} {\sigma}(a_i,a_{-i}) \cdot (u_i(a'_i,a_{-i}) - u_i(a_i,a_{-i})).
\end{align*}
The final term is, of course, the left hand side of the usual (linear) correlated equilibrium constraints for a normal form game. The equivalence result is yet again immediate.

\begin{proposition}\label{prop:CE-equiv}
        Suppose that $\hat{\sigma}$ is an first-order local CE with respect to $\fint$. Then the probability distribution ${\sigma} \in \Delta(A)$, defined as ${\sigma}(a) = \int_X d\hat{\sigma}(x) \cdot \left(\prod_{j \in N} x_j(a_j) \right)$, is a correlated equilibrium of the game in the usual sense. In turn, interpreting a correlated equilibrium $\sigma^*$ of the normal-form game as a probability distribution on $\Delta(A)$, $\sigma^*$ is a first-order local CE with respect to $\fint$.
\end{proposition}

Moreover, we may yet again convert a statement on the uniqueness of the correlated equilibrium of a game to one about the vector field which defines its gradient dynamics. However, whenever a player $i$ has $\geq 3$ actions, for any pair $a_i \neq a'_i \in A_i$, the vector field $x \mapsto x_i(a_i) \cdot (e_{i a_i'} - e_{i a_i})$ ceases to be a gradient field. Indeed, let $a''_i \neq a_i, a'_i$, and consider the curve 
$$ x_i(t) = \begin{cases}
    \lfloor t \rfloor e_{i a_i} + (1-\lfloor t \rfloor) e_{i a'_i} & t \in [0,1), \\
    \lfloor t \rfloor e_{i a''_i} + (1-\lfloor t \rfloor) e_{i a_i} & t \in [1,2), \\
    \lfloor t \rfloor e_{i a'_i} + (1-\lfloor t \rfloor) e_{i a''_i} & t \in [2,3].
\end{cases}, \text{ with } x_i(a_i,t) = \begin{cases}
    t & t \in [0,1), \\
    2-t & t \in [1,2), \\
    0 & t \in [2,3].
\end{cases}$$
Then we have
$$ \int_0^3 \bilin{f_i(x_i(t))}{\frac{dx_i(t)}{dt}} = -\frac{1}{2} \neq 0.$$
Therefore, as discussed in Section \ref{sec:interpret}, dual guarantees with respect to $\fint$ do not translate to a Lyapunov function. Instead, whenever a game has a unique local CE with respect to $\fint$, $\fint$ must contain within its conical span a vector field positively aligned with the utility gradients, whose only fixed point is $x^*$. 
\begin{proposition}
    A finite normal-form game has a unique local CE with respect to $\fint$ which assigns probability $1$ to a strategy profile $x^*$ if and only if $x^*$ is a Nash equilibrium in pure strategies and there exists multipliers $\mu(i,a_i,a'_i) \geq 0$ for each $i \in N, a_i, a'_i \in A_i$, such that 
    $$\sum_{i \in N, a_i, a'_i \in A_i} \mu(i,a_i,a'_i) \cdot \bilin{x_{i}(a_i) \cdot (e_{ia'_i} - e_{ia_i})}{\nabla_i u_i(x)} \geq 0 \ \forall \ x \in X,\text{ with equality only if }x = x^*.$$
\end{proposition}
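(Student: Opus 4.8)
The plan is to run the primal--dual argument parallel to the proof of Proposition \ref{prop:Lyapunov-CCE}, the only structural difference being that the deviation directions in $F$ need not be gradient fields, so the dual certificate remains a vector-field alignment condition rather than condensing into a potential. The first reduction is to pass, via Proposition \ref{prop:CE-equiv}, from local CE on $X$ to correlated equilibria of the finite game: the constraint attached to $f = x_i(a_i)(e_{ia'_i}-e_{ia_i})$ depends on $\sigma$ only through $\sigma'(a)=\int_X d\sigma(x)\prod_j x_j(a_j)$, and equals $\sum_{a_{-i}}\sigma'(a)\,(u_i(a'_i,a_{-i})-u_i(a_i,a_{-i}))$. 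I would first observe that ``$\delta_{x^*}$ is the unique local CE'' forces $x^*$ to be a vertex $e_{a^*}$: any correlated equilibrium $\rho'$ is induced by the local CE $\sum_{a}\rho'(a)\,\delta_{e_a}$ supported on vertices, so a non-pure $x^*$, or any second correlated equilibrium, would produce a local CE distinct from $\delta_{x^*}$. Hence uniqueness of the local CE is equivalent to $\delta_{a^*}$ being the unique correlated equilibrium with $x^*=e_{a^*}$, and a pure profile is a correlated equilibrium precisely when it is a pure Nash equilibrium.

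For the forward direction I would then consider the finite linear program $\max_{\sigma'}\sum_{a}\sigma'(a)\tfrac12\|e_a-x^*\|^2$ over correlated equilibria $\sigma'$. By the reduction its value is $0$, and LP strong duality yields multipliers $\mu(i,a_i,a'_i)\ge 0$ and $\omega=0$ satisfying, for every pure profile $a$, $G(a):=\sum_{i,a'_i}\mu(i,a_i,a'_i)(u_i(a'_i,a_{-i})-u_i(a_i,a_{-i}))\ge \tfrac12\|e_a-x^*\|^2$; this is the analogue of the tight dual solution of \cite{ahunbay2024uniqueness}. The crucial step is to upgrade this finite family of vertex inequalities to the pointwise statement over all $x\in X$ demanded by the proposition. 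Here I would use the identity $g(x):=\sum_{i,a_i,a'_i}\mu(i,a_i,a'_i)\bilin{x_i(a_i)(e_{ia'_i}-e_{ia_i})}{\nabla_i u_i(x)}=\mathbb{E}_{a\sim P_x}[G(a)]$, where $P_x=\prod_i x_i$ is the product distribution on $A$ determined by $x$, since $x_i(a_i)\prod_{j\neq i}x_j(a_j)=P_x(a)$. Combining this with the vertex bound and Jensen's inequality applied to the convex map $y\mapsto\tfrac12\|y-x^*\|^2$ (using $\mathbb{E}_{a\sim P_x}[e_a]=x$) gives $g(x)\ge\mathbb{E}_{a\sim P_x}[\tfrac12\|e_a-x^*\|^2]\ge\tfrac12\|x-x^*\|^2\ge 0$, with the outer bound strict for $x\neq x^*$. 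Finally, at $x^*=e_{a^*}$ the pure-Nash property forces each $u_i(a'_i,a^*_{-i})-u_i(a^*_i,a^*_{-i})\le 0$, so $g(x^*)=G(a^*)\le 0$; together with $g(x^*)\ge 0$ this pins $g(x^*)=0$, giving the alignment condition with equality exactly at $x^*$.

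For the converse, suppose $x^*$ is a pure Nash equilibrium and such multipliers exist. Let $\sigma$ be any local CE. Since each $\mu(i,a_i,a'_i)\ge 0$ and each local CE constraint reads $\sum_i\mathbb{E}_{x\sim\sigma}\bilin{f_i(x)}{\nabla_i u_i(x)}\le 0$, summing against the multipliers gives $\int_X g(x)\,d\sigma(x)\le 0$. But $g\ge 0$ pointwise, so $g=0$ $\sigma$-almost surely, and the equality clause $g(x)=0\Rightarrow x=x^*$ forces $\mathrm{supp}\,\sigma\subseteq\{x^*\}$, i.e. $\sigma=\delta_{x^*}$. As $x^*$ being a pure Nash equilibrium guarantees $\delta_{x^*}$ is itself a local CE, it is the unique one.

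The main obstacle is the continuum upgrade in the forward direction: a finite LP over correlated equilibria only controls the certificate $g$ at the vertices $e_a$, whereas the proposition asks for $g\ge 0$ on all of the product-of-simplices $X$. The multilinear identity $g(x)=\mathbb{E}_{a\sim P_x}[G(a)]$ together with convexity of the quadratic objective is what closes this gap, and it is the same mechanism that silently underlies the corresponding step of Proposition \ref{prop:Lyapunov-CCE}. The secondary point to handle carefully is that, unlike the coarse case, the vector fields in $F$ are generally not conservative, so $g$ does not arise as $\bilin{\nabla h}{\cdot}$ for a single $h$; the certificate therefore cannot be read as (the derivative of) a Lyapunov function and must be stated as the vector-field alignment condition, matching the dual (\ref{opt:dual-LCE}).
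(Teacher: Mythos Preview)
Your proof is correct and follows precisely the template the paper uses for the coarse analogue (Proposition~\ref{prop:Lyapunov-CCE}): reduce via Proposition~\ref{prop:CE-equiv} to the finite correlated-equilibrium LP, invoke strong duality there, and lift the vertex certificate to all of $X$. The paper itself does not write out a proof for this proposition beyond displaying the continuous dual and pointing back to Section~\ref{sec:lyapunov}; your explicit handling of the continuum upgrade via the multilinear identity $g(x)=\mathbb{E}_{a\sim P_x}[G(a)]$ together with Jensen for the convex objective is exactly the step the paper leaves implicit in both the coarse and non-coarse cases.
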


To conclude this section, we remark that an $\epsilon$-local CE with respect to $\fint$ is an $\epsilon$-internal regret outcome for the game. In order to consider local CE in which each player has $\epsilon$-swap regret, we may instead take the set of vector fields
$$ \fswap = \left\{ \sum_{a_i \in A_i} x_i(a_i) \cdot (e_{i\phi_i(a_i)} - e_{ia_i}) \ | \ i \in N, \phi_i : A_i \rightarrow A_i \right\}.$$

\subsubsection{Equivalence of average CCE and the smoothness framework}

We next turn our attention to how the smoothness framework of \cite{roughgarden2015intrinsic} can also be interpreted as the analysis of a variant of first-order local CCE. Recall the defintion of a smoothness for a \emph{cost minimisation game}:

\begin{definition}[Adapted from \cite{roughgarden2015intrinsic}]
    For a normal-form game, a function $c_i : A \rightarrow \mathbb{R}_+$ for a player $i$ is called a \textbf{cost function}, and the game paired with such a cost function for each player $i$ is called a \textbf{cost minimisation game}. We denote by $c = \sum_{i \in N} c_i$ the \textbf{social cost}. A cost minimisation game with a minimum social cost outcome $a^*$ is then called $(\lambda,\mu)$\textbf{-smooth} if 
    $$ \sum_{i = 1}^{n} c_i(a^*_i,a_{-i}) \leq \lambda \cdot c(a^*) + \mu \cdot c(a) \ \forall \ a \in A.$$
\end{definition}

Since our formalism is concerned with utilities instead of costs, the cost function of a player will be their disutility, $c_i = -u_i$ for any player $i$. Given this definition, \cite{nadav2010limits} characterises the distributions for which a $(\lambda,\mu)$ smoothness bound applies as the those of an \emph{average coarse correlated equilibrium}.

\begin{definition}[\cite{nadav2010limits}, Definition 2]
    For a normal form game and an outcome $a^*$, an \textbf{average coarse correlated equilibrium with respect to} $a^*$ is a probability distribution $\sigma^* \in \Delta(A)$ such that 
    $$ \sum_{i \in N, a \in A} \sigma^*(a) \cdot \left(u_i(a^*_i,a_{-i}) - u_i(a) \right) \leq 0.$$ 
\end{definition}

Through an identical line of arguments to those for CCE, this implies that any average CCE with respect an outcome $a^*$ is a local CCE with respect to the singleton set, 
$$ H_{ACCE}(a^*) = \left\{ - \sum_{i\in N} \| x_i - e_{a^*_i} \|^2/2 \right\}.$$
As a consequence, we obtain the following equivalence between average CCE (with respect to a fixed $a^*$) and local coarse equilibrium.
\begin{proposition}\label{prop:ACCE-equiv}
    Suppose that $\sigma$ is an first-order local CCE with respect to $H_{ACCE}(a^*)$. Then the probability distribution $\hat{\sigma} \in \Delta(A)$, defined as $\hat{\sigma}(a) = \int_X d\sigma(x) \cdot \left(\prod_{j \in N} x_j(a_j) \right)$, is an average coarse correlated equilibrium with respect to $a^*$. In turn, interpreting such an average coarse correlated equilibrium $\sigma^*$ of the normal-form game as a probability distribution on $\Delta(A)$, $\sigma^*$ is a first-order local CCE with respect to $H_{ACCE}(a^*)$.
\end{proposition}

By \cite{nadav2010limits} (Theorem 1), given a $(\lambda,\mu)$-smooth cost minimisation game with a minimum social cost outcome $a^*$, the price of anarchy of any average CCE with respect to $a^*$ is at least $\lambda / (1-\mu)$, and precisely so whenever $\lambda$ and $\mu$ are optimally chosen. The proof follows by considering the price of anarchy bounding linear program
\begin{align}
    \min_{p \in \mathbb{R}, z \geq 0} p & \textnormal{ subject to} \label{opt:lp-acce} \\
    \ \forall \ a \in A, p \cdot c(a^*) + z \cdot \sum_{i \in N} (c_i(a) - c_i(a^*_i, a_{-i})) & \geq c(a). \tag{$\sigma^*(a)$} 
\end{align}
This problem can be recast as a maximisation problem,
\begin{align}
    -\max_{\gamma \in \mathbb{R}, \mu' \geq 0} \gamma & \textnormal{ subject to} \label{opt:lp-acce-2} \\
    \ \forall \ a \in A, \gamma - \mu' \sum_{i \in N} \bilin{-\nabla_i \| x_i - e_{a^*_i} \|^2/2}{\nabla_i u_i(a)} & \leq -\frac{c(a)}{c(a^*)}. \tag{$\sigma^*(a)$} 
\end{align}
We observe immediately the similarity of (\ref{opt:lp-acce-2}) with (\ref{opt:dual-LCE}), indicating that a smoothness based price of anarchy bound in a normal-form game is a specific instance of time-average guarantee for local CCE, as implied by Theorem \ref{thm:general-perf-bounds}. In particular, smoothness implies that the above problem has a solution where $\gamma = -\lambda/(1-\mu)c(a^*)$ and $\mu' = 1/(1-\mu)c(a^*)$. Therefore, a dual bound $\lambda/(1-\mu)$ on the price of anarchy is certified by the gradient field, $-\nabla \big[ \sum_{i \in N} \| x_i - e_{a^*_i} \|^2/2(1-\mu) \big]$.

\subsubsection{Equilibrium refinement for gradient ascent via tangential functions}\label{sec:refinement}

Propositions \ref{prop:impossibility} and \ref{prop:gradient-impossibility} show that, if a player implements gradient ascent, amongst the set of continuous vector fields, they are guaranteed to incur small regret only against gradient fields tangent to their action set. This might appear to be a rather limiting assumption; still, we show that these deviations encompass a much richer set of strategy modifications than merely no-external regret in normal-form games. We remark that any of the below arguments extend naturally to the case when the strategy set is compact \& convex (through Theorem \ref{thm:already-proven}), since tangentiality guarantees that projected gradient ascent incurs low regret. 

\begin{example}\label{ex:affine-sym}
    Suppose in a mixed-extension of a normal-form game that $\phi_i : x_i \mapsto x_i + Q_i x_i + q_i$ is a linear map $\Delta(A_i) \rightarrow \Delta(A_i)$, where $Q_i$ is a symmetric matrix. Then for any $x_i \in \Delta(A_i)$, $Q_i x_i + q_i = \nabla_i h(x_i)$, where $h(x_i) = 1/2 \cdot x_i^T Q_i x_i + q_i^T x_i$, which in turn implies that 
    $$ \bilin{\nabla_i h(x_i)}{\nabla_i u_i(x)} = \bilin{\phi_i(x_i) - x_i}{\nabla_i u_i(x)}$$
    for any $x \in \times_{j \in N} \Delta(A_j)$, through analogous arguments to those in Section \ref{sec:CCE-equiv}. As a consequence, a player in a normal-form game updating their mixed-strategies by gradient ascent satisfies a no-regret property against a class of linear deviations between that of external and swap regret. \cite{ahunbay2025semicoarse} establishes the guarantee for polyhedral action sets, wherein the resulting notion of equilibrium for a normal-form game is dubbed a ``semicoarse equilibrium''. Meanwhile, \cite{cai2025new} establishes this guarantee for general convex action sets via arguments based on Example \ref{ex:moreau}.
\end{example}

\begin{example}
    \cite{zhang2025learning} demonstrate efficient algorithms to minimise $\phi$ regret for a wide range of deviations $x_i \mapsto \phi_i(x_i)$, including all degree $\ell$ polynomials on $X_i \subseteq \mathbb{R}^{D_i}$. Now, a necessary and sufficient condition for $\phi(x_i) - x_i$ to be a gradient field in $\mathbb{R}^{D_i}$ is the condition, for any $\mu, \nu \in D_i$,
    $$\frac{\partial \phi_i(x_i)_\nu}{\partial x_{i\mu}} = \frac{\partial \phi_i(x_i)_\mu}{\partial x_{i\nu}}.$$
    Any polynomial is also bounded over a compact \& convex set $X_i$. Our results will thus imply that if player $i$ has a compact \& convex action set, and implements projected gradient ascent with suitable step sizes, they will incur vanishing regret against \underline{any} polynomial strategy modification $x_i \mapsto \phi_i(x_i)$ \underline{of any arbitrarily high degree}, so long as the derivative of $\phi_i$ satisfies the above symmetry condition. 
\end{example}

\begin{example}\label{ex:moreau}
    \cite{cai2025new} define a notion of proximal strategy modifications; given a function $h : X_i \rightarrow \mathbb{R} \cup \{+\infty\}$ that is either proper, convex \& lower semicontinuous, or is differentiable with Lipschitz modulus $L_h < 1$, the proximal mapping induces a strategy modification
    $$\prox_h(x_i) : x_i \mapsto \arg \min_{x'_i \in X_i} h(x'_i) + \frac{1}{2} \|x'_i - x_i\|^2.$$
    Then, if a player $i$ has a closed and convex action set $X_i$ in a smooth game, employing projected gradient ascent with appropriate stepsizes results in a ``proximal correlated equilibrium'', such that if $h$ satisfies $\|\prox_h(x_i) - x_i\| \leq \delta$, written in our convention\footnote{The work of \cite{cai2025new} formulates the problem in terms of losses and gradient descent, which causes a sign flip.}
    $$\frac{1}{T} \sum_{t = 1}^{T} \bilin{\nabla_i u_i(x^t)}{\prox_h(x_i) - x_i} = O\left(\frac{1}{\sqrt{T}}\right) + O(\delta^2).$$
    Now, consider the Moreau envelope $M_h : X_i \rightarrow \mathbb{R}$ of $h$, defined as 
    $$ M_h(x_i) = \inf_{x'_i \in X} h(x'_i) + \frac{1}{2} \| x_i - x'_i \|^2,$$
    then $M_h$ is differentiable. Moreover, if $h$ is proper, convex \& lower semicontinuous then its gradient is $1$-Lipschitz, whereas if $h \in C^{1,1}(X_i,\mathbb{R})$ with Lipschitz modulus $< 1$ then it is guaranteed to be $1/(1-L_h)$ Lipschitz \cite{poliquin1996prox}. Finally, we see that $\nabla_i [-M_h(x_i)] = \prox_h(x_i) - x_i$, and if $\|\prox_h(x_i) - x_i\| \leq \delta$ is guaranteed, then $\nabla_i [-M_h(x_i)]$ admits a bound $\delta$ on its gradient. We conclude that their notion of proximal correlated equilibrium is a specific instance of a local coarse correlated equilibrium in the sense we consider in our paper. 
\end{example}

\section{Further Directions \& Extensions}\label{sec:conclusion}

In this paper, we have established the question of an approximate first-order analogue of a coarse correlated equilibrium, by identifying such outcomes as the natural property of the gradient dynamics of the underlying game, and considering the weighted history of play when some or all players employ projected gradient ascent as their learning algorithm. For the setting of finitely many vector fields, we have shown how $\Phi$-regret minimisation algorithms may be extended to approximate first-order CE, and presented one tractable (coupled) learning algorithm which may avoid cycling behaviour. Appealing to the properties of the gradient dynamics of the game and extending the usual convex optimisation based primal-dual framework for price of anarchy / stability bounds, we were then able to argue that the question of proving performance guarantees for the resulting distribution reduces to Lyapunov function estimation or vector-field fit problems, which captures and extends a wide class of previously analysed equilibrium concepts. However, our work raises several questions yet unanswered, some of which we discuss.

\subsubsection*{Joining coarse and non-coarse guarantees}

We have shown that a generalisation of correlated equilibrium, against finitely many affine-linear vector fields that map the set of outcomes to itself, is tractable -- and in fact, \cite{zhang2025expected} provide an exponential improvement even when $|F|$ may have infinite cardinality. \cite{zhang2025learning} also provides efficient algorithms when the set of vector fields are those which correspond to strategy modifications $\phi_i : X_i \rightarrow X_i$\footnote{In this case, the vector field $x_i \mapsto \phi_i(x_i)- x_i$ is tangent to $X_i$.}, which may be a polynomial of degree $d$. One question is, however, whether such non-coarse guarantees can be adjoined with the universal coarse equilibrium guarantees of projected gradient ascent. That is, 

\begin{center}
\emph{Does there exist an algorithm which simultaneously attains the guarantees of projected gradient ascent, and those of a local CE with respect to a family of vector fields $F$?}
\end{center}

\subsubsection*{Extensions incorporating stochasticity \& bias}

\cite{mertikopoulos2019online} shows that the feedback provided to a no-regret learning algorithm can be described in terms of its mean, variance, and bias. Our analysis assumes exact gradient feedback. As a consequence, the guarantees we establish in this paper do not extend to the bandit setting, or for stochastic gradient ascent. Moreover, the analysis does not incorporate the biases of projected gradient ascent. For instance, in the matching pennies game (Example \ref{ex:pennies}), if the initial strategies are not initialised at equilibrium, projected gradient ascent will eventually cycle about the circle of radius $1$ about the equilibrium; each discrete gradient step increases the radius so long as projections are not involved. Thus we ask,

\begin{center}
\emph{Does there exist an extension of local or stationary CCE which captures the biases of gradient ascent? Similarly, does there exist an extension which captures the time-average outcomes of stochastic gradient ascent?}
\end{center}

\subsubsection*{Dual closure}

The discussion in Section \ref{sec:duality} indicates that when a Lipschitz continuous function $q : X \rightarrow \mathbb{R}$ admits a sufficiently regular Lyapunov function $h$, such that $\gamma - dh(x)/dt \leq q(x)$ at every outcome $x \in X$ for the continuous time dynamics of the game, then with suitable step sizes, at time $T$ the time-average expectation of $q$ is bounded below by $\gamma - O(\sqrt{T})$. One question that is open is whether the \emph{converse} result holds under some regularity assumptions, e.g. 

\begin{center}
\emph{Suppose that the each player's action set $X_i$ is compact \& convex, and that with step sizes $\propto 1/\sqrt{t+1}$ or $\propto 1/\sqrt{T}$, the time-average outcome of projected gradient ascent satisfies $\mathbb{E}[q(x)] = \gamma - O(1/\sqrt{T})$ for a Lipschitz continuous function $q : X \rightarrow \mathbb{R}$. Then does there necessarily exist $h \in \cotdla_M(X,\mathbb{R})$ such that $\gamma - \sum_{i \in N} \bilin{\nabla_i h_i(x)}{\nabla_i u_i(x)} \leq q(x)$ for every $x \in X$?}
\end{center}

In other words, is every true time-average bound for projected gradient ascent ``proveable'' by some potential bound?

\subsubsection*{Further directions in primal-dual analysis}

The final question pertains to how useful our primal-dual framework could be. Whereas we obtain a primal-dual framework to prove bounds on the performance of gradient ascent, Example \ref{ex:pennies} demonstrates that potentials whose gradient fields certify ``obvious'' conditions can be rather complicated. Our last question is thus, 

\begin{center}
\emph{Does there exist systematic ways of constructing generalised Lyapunov functions to prove dual bounds, tight or approximate, e.g. via (non-convex) optimisation problems that are tractable for solvers for certain classes of games?}
\end{center}

One approach would be to consider relaxations to subclasses of differentiable functions with Lipschitz gradients for which the problem is tractable. As usual CCE of normal form games are tractable in the sense that performance bounds take the form of an LP, we infer by the discussion in Section \ref{sec:duality} that there exists at least one such relaxation. In fact, given the tractibility of linear programming and that computing performance bounds over the set of (C)CE reduces to an LP, one interesting question is classifying stronger convex programming formulations which capture the guarantees of learning algorithms. Recent results by \cite{ahunbay2025semicoarse} provide the limits of LP based approaches in normal-form games, and suggest that semidefinite hierarchies might provide a way onwards.

\subsection*{Acknowledgements}
Special thanks to Constantinos Daskalakis for the introduction to the question and discussions, and further thanks to Martin Bichler, Denizalp Göktaş, Alexandros Hollender, Matthias Oberlecher, Ioannis Anagnostides, and further anonymous reviewers for discussions or feedback. This project was funded by the Deutsche Forschungsgemeinschaft (DFG, German Research Foundation) under Grant No. BI 1057/9.

\bibliographystyle{alpha}
\bibliography{bibliography}

\appendix

\section{Notation}\label{sec:notation-summarised}

\begin{tabular}{p{4.2cm} p{2cm} p{9.4cm}}
    \hline
    \hline
    \textbf{General Notation} & \begin{center}
        \vspace{-20pt}\textbf{Canonical Elements}
    \end{center} & \\
    \hline
		\hline
		\textbf{Sets \& Tuples} & & \\ 
		\hline
		$\mathbb{N}$ & $n, m, N, ...$ & Set of natural numbers, in the convention including $0$ \\
        $N$ & & For $N \in \mathbb{N}$, the set $\{1,2,...,N\}$ \\
		$\mathbb{R}$ & & Set of real numbers \\
		$\mathbb{R}_+$ & & Set of non-negative real numbers \\
            $A^B$ & $u,{\color{red} \bcancel{f}}, g, h, ...$ & Set of functions $B \rightarrow A$ \\
            $(X_i)_{i \in N}$ & & An $N$-tuple \\
            $X = \times_{i \in N} X_i$ & $x, x', y, ...$ & Cartesian product of an indexed family of sets $(X_i)_{i \in N}$ \\
            $X_{-i} = (X_j)_{j \in N \setminus \{i\}}$ & $x_{-i}, x'_{-i}, ...$ & The $N$-tuple $(X_j)_{j \in N}$ with the $i$'th element removed \\
            $X_{-\hatN} = (X_j)_{j \in N \setminus \hatN}$ & $x_{-\hatN}, x'_{-\hatN}, ...$ & Wherein we drop the coordinates $\hatN$ \\
            $X_{\hatN} = (X_i)_{i \in \hatN}$ & $x_{\hatN}, x'_{\hatN}, ...$ & Restriction of the $N$-tuple $(X_i)_{i \in N}$ to coordinates $\hatN$ \\ 
            $(Y_{\hatN},X_{-\hatN})$ & & Tuple obtained by replacing entries of $(X_i)_{i \in N}$ in the subset of coordinates $\hatN$ with those of $(Y_i)_{i \in N}$ \\
            \hline
            \textbf{In Euclidean Space} & & \\
            \hline
        $\mathbb{R}^D$ & $x,y,...$ & For $D \in \mathbb{N}$, $D$-dimensional Euclidean space \\
        $e_i$ & & Vector whose $i$'th component equals $1$ and others $0$ \\
        $\overline{A}$ & & Closure of $A \subseteq \mathbb{R}^D$ with respect to the open ball topology \\
        $\ch(A)$ & & Convex hull of $A \subseteq \mathbb{R}^D$, by Carathéodory's theorem, $$\ch(A) = \left\{\sum_{\mu = 1}^{D+1} \lambda_\mu x_\mu \ | \ x_\mu \in A, \lambda \geq 0, \sum_{\mu = 1}^{D+1} \lambda_\mu = 1\right\}$$ \\ 
        $\bilin{x}{y}, x^T y$ & & Usual inner product on $\mathbb{R}^D$, for $x, y \in \mathbb{R}^D$, $\sum_{i \in D} x_i y_i$ \\
        $\| x \|$ & & Usual Euclidean norm on $\mathbb{R}^D$, $\|x\| = \sqrt{\bilin{x}{x}}$ \\
        $\proj{X}{y}$ & & Projection of $y \in \mathbb{R}^D$ onto closed \& convex $X \subseteq \mathbb{R}^D$ \\
        $\tc{X}{x}$ & & Tangent cone to $X$ at $x$, defined (\ref{def:tc}) \\
        $\nc{X}{x}$ & & Normal cone to $X$ at $x$, defined (\ref{def:nc}) \\
        $\rtc{X}{x}$ & & Projection of $\tc{X}{x}$ onto the span of $\nc{X}{x}$ \\
        $f : X \rightarrow \mathbb{R}^D$ & & A vector field over $X \subseteq \mathbb{R}^D$ \\
        \hline
        \textbf{Functions} & & \\
        \hline
            $h : X \subseteq \mathbb{R}^D \rightarrow \mathbb{R}$ & & A function with domain $X$ and codomain $\mathbb{R}$ \\
            $\nabla h(x)$ & & Gradient of $h$, if differentiable \\
            $\nabla_i h(x_i,x_{-i})$ & & Gradient of $h$ with respect to $x_i$ \\
            $G_h$ & & Upper bound on $\|\nabla h(x)\|$ over $x \in X$\\
            $L_h$ & & Upper bound on $\frac{\|\nabla h(x)-\nabla h(x')\|}{\omega(\|x-x'\|)}$ for $x,x' \in X$ distinct \\
            $\mathbb{I}[P(x,y,z,...)]$ & & Indicator function, equals $1$ if and only if the proposition $P(x,y,z,...)$ is true and $0$ otherwise \\
            $\poly(x,y,z,..)$ & & A fixed polynomial over variables $x,y,z,...$ 
\end{tabular}

\noindent\begin{tabular}{p{4.2cm} p{2cm} p{9.4cm}}
    \hline
        \textbf{Probability} & & \\
        \hline
		$\Delta(X)$ & $\sigma$ & Set of probability distributions over a set $X$ \\
            $\mathbb{E}_{x \sim \sigma}[q(x)]$ & & Expectation of $q : X \rightarrow \mathbb{R}$ when $x$ is drawn from $\sigma$ \\
    \hline
        \textbf{Common Subscripts} & & \emph{(for $X = \times_{i \in N} X_i$ with $X_i \subseteq \mathbb{R}^{D_i}$)} \\
        \hline
        $i,j,... \in N$ & & Quantity relating to indices in $N$ \\
        $\mu, \nu, ...$ & & Subindices given $i \in N$ with their range inferred by context, e.g. $x_{i\mu}$ is the $\mu$'th coordinate of $x_i \in X_i$ \\
        $-i, (-i)$ & & Quantity relating to players other than $i$ \\
        \hline \hline
		\textbf{Games} & &  \\
        \hline \hline
        \textbf{In Abstract }& & \\
        \hline
        $\Gamma$ && A normal form game, a tuple $\left(N,\left(\times_{i \in N} X_i\right), \left( u_i \right)_{i \in N} \right)$\\
        $N$ & $i,j,...$ & Number, set of agents / players \\
        $X = \times_{i \in N} X_i$ & $x, x', ...$ & Set of action profiles / outcomes \\
        $u_i : X \rightarrow \mathbb{R}$ & & Utility function for player $i$ \\
        $\sigma \in \Delta(X)$ & & Correlated distribution over the outcomes \\
        \hline
        \textbf{Smooth Games} & & \\
        \hline
        $\hat{N}$ & $i,j$ & Number, subset of players with ``smooth'' utilities \\
        $X_i$ given $i \in \hat{N}$ & $x_i$ & A closed \& convex subset of $\mathbb{R}^{D_i}$ \\
        $u_i$ given $i \in \hat{N}$ & & Utility of player $i$, with bounded \& Lipschitz gradients \\
        \hline 
        \textbf{Normal-Form Games} && \\
        \hline
        $X_i \equiv A_i$ & $a_i, a'_i,...$ & Finite set of actions for player $i$ \\
        \hline
        \textbf{Mixed Extensions} & & \emph{(smooth game, derived from a normal form game $\Gamma$)} \\
        \hline
        $X_i \equiv \Delta(A_i)$ & & Set of mixed strategies of player $i$ \\
        $u_i(x) = \mathbb{E}_{a_j \sim x_j}[u_i(a)]$ & & Utilities, defined via expectation \\
        \hline
        \hline
        \textbf{Game Dynamics} \\
        \hline 
        \hline
        \textbf{Time} && \\
        \hline 
        $t, t', T, ...$ & & Discrete time steps in $\mathbb{N}$, \emph{``periods''} \\
        $\tau, \tau'$ & & Continuous time in $\mathbb{R}_+$ \\
        $\eta_t$ & & Step size (of player $i \in \hat{N}$) at time $t$ \\
        \hline
        \textbf{Play} & & \\
        \hline
        $x^t$ & & Outcome $(x_i^t)_{i \in N}$ at period $t$, recursively defined (\ref{def:pga}) \\
        $x(\tau)$ & & Continuous time extension of $x^0, x^1, ...$ at time $\tau$
\end{tabular}

\newpage

\section{Comparison with Further Related Work}

This section serves to provide a comparison of our results with that of \cite{cai2024on,cai2025new}; it is helpful towards this end to also state our results in the language of \cite{zhang2025expected}. Some insights from the latter also allows us to tend to some unfinished business regarding first-order CE.

\subsection{On tractable $\Phi$-equilibria in non-concave games \cite{cai2024on}}\label{sec:comparison-with-cai}

Our paper builds upon the insights of \cite{cai2024on} with respect to the guarantees of gradient ascent. We remark that the scope of \cite{cai2024on} is generally wider with respect to non-convex games, with answers to questions that are left outside the scope of this paper: e.g. an algorithm for $\Phi$-regret minimisation with respect to a set of global strategy deviations. Our work, in turn, mostly focuses on the guarantees of gradient ascent through what we call ``coarse'' equilibria, with a corresponding correlated equilibrium notion arising as an extension of the concept. Below, we highlight the commonalities and differences between the proposed notions of equilibrium, and the form of the guarantees for gradient ascent.

\vspace{8pt}\noindent\textbf{$\Phi$-equilibrium \& first-order correlated equilibrium.} \cite{cai2024on} propose, as an equilibrium notion for non-concave games, an extension of $\Phi$-equilibrium. Given a smooth game, and a set of strategy modifications $\Phi_i = \{ \phi_i : X_i \rightarrow X_i\}$, an $\epsilon$-$\Phi$-equilibrium is a distribution $\sigma \in \Delta(X)$ such that for any $i \in N$, $\phi_i \in \Phi_i$, 
$$ \mathbb{E}_{x \sim \sigma}[u_i(\phi_i(x_i),x_{-i}) - u_i(x)] \leq \epsilon.$$
Crucially, each map $\phi_i$ maps $x_i \mapsto \phi_i(x_i)$ independent of the other players actions. If $\| \phi_i(x_i) - x_i \| \lesssim \delta$, the equilibrium notion is called a $\delta$-local $\Phi$-equilibrium, and the corresponding first-order approximation is 
$$ \mathbb{E}_{x \sim \sigma}[\bilin{\phi_i(x_i) - x_i}{\nabla_i u_i(x)}] \lesssim \epsilon.$$
Recall that an $\epsilon$-local CE with respect to a set of vector fields $F$ is a distribution $\sigma \in \Delta(X)$ such that for any $f \in F$,
$$ \sum_{i \in N}\mathbb{E}_{x \sim \sigma}[\bilin{\proj{\tc{X_i}{x_i}}{f_i(x_i)}}{\nabla_i u_i(x)}] \leq \epsilon.$$
Note that since $\phi_i(x_i) \in X_i$, we have $\phi_i(x_i) - x_i \in \tc{X_i}{x_i}$; i.e. the strategy modification is a vector field which is tangent to $X_i$. Therefore, the notion of $\Phi$-equilibrium is roughly equivalent (up to $O(\delta^2)$ terms) to an $\epsilon$-local CE with respect to a set of vector fields $F$ such that 
$$ \forall \ f \in F, \exists \ i \in N, \forall \ x \in X, (\forall \ j \neq i, f_j(x) = 0) \land (\forall \ x'_{-i} \in X_{-i}, f_i(x) = f_i(x'_i,x_{-i})).$$
That is, a $\Phi$-equilibrium a local CE with respect to a set of vector fields $F = \sqcup_{i \in N} F_i$ such that for any $f \in F_i$, $f$ is non-zero valued only over the coordinates corresponding to player $i$'s strategies, and depends solely on player $i$'s strategies. In turn, our definition of a local CE allows prescribing strategy modifications to some player $i$ conditional on what actions are chosen by the other players. 

The equilibrium notion that is equivalent to a local CE turns out to be the notion of an ``expected variational inequality'', proposed by \cite{zhang2025expected} (c.f. Section \ref{sec:EVI-stuff}). This causes a distinction in available solution methods. In particular, \cite{cai2024on} provide an algorithm leveraging \emph{``fixed points in expectation''} \cite{zhang2024efficient} to compute, with high probability, a low regret outcome with respect to the convex hull of a finite set of $\delta$-local strategy modifications. By the above arguments, this also results in an approximate local CE with respect to some set of vector fields. However, in our setting, this approach does not work at all, which is why we depend on fixed-point computation. 

Meanwhile, stationary CE are in general incomparable to either notion.

\vspace{8pt}\noindent\textbf{Interpolation based strategy modifications.} \cite{cai2024on} also show that, if a learning algorithm ensures vanishing regret against a set $\Phi$ of strategy modifications when the utilities are linear, then running it with gradient feedback incurs vanishing regret against each strategy modification $\phi \in \Phi$ interpolated to be $\delta$-local. Whereas this allows for extending the usual external regret guarantees to the setting of a non-concave game, we remark that neither proximal regret guarantees of \cite{cai2025new} nor our first-order CE guarantees follow from this. The result \emph{assumes} that the algorithm incurs vanishing $\Phi$ regret against linear utilities to begin with, whereas we \emph{prove} that gradient ascent does so for a larger set of strategy modifications than previously known.

\subsection{Proximal regret and proximal correlated equilibria \cite{cai2025new}}

In work concurrent with a newer version of this paper, \cite{cai2025new} show that online gradient ascent incurs low regret against \emph{proximal strategy modifications}. At a high level, for a suitable function $h : X_i \rightarrow X_i$, a proximal strategy modification maps $x_i \mapsto \prox_h(x_i)$, where 
$$ \prox_f(x_i) : x_i \mapsto \arg \min_{x'_i \in X_i} h(x'_i) + \frac{1}{2}\|x'_i - x_i\|^2.$$
As mentioned in Example \ref{ex:moreau}, the mapping $x_i \mapsto x_i - \prox_f(x_i)$ is the gradient of the Moreau envelope of $f$, which has bounded, Lipschitz continuous gradients under the assumptions made by \cite{cai2025new}. Moreover, this gradient field is necessarily tangent to $X_i$. Thus, the low proximal regret condition of \cite{cai2025new}, 
$\mathbb{E}_{x \sim \sigma}[\bilin{\prox_f(x_i)-x_i}{\nabla_i u_i(x)}] \leq \epsilon$, 
measures regret against a specific subclass of $\underrightarrow{\tilde{C}}^{1,1-}_M(X_i,\mathbb{R})$ functions. On the other hand, the key step in their proof can be applied to extend the first-order local CCE guarantees of Theorem \ref{thm:smooth-to-avg-reduction} to arbitrary action sets (c.f. Section \ref{sec:tangency-test-fun}).

We remark again that our analysis provides further guarantees when a general subset $\hat{N}$ players implement projected gradient ascent with the same step sizes, induced by functions $h : \times_{i \in \hat{N}} X_i \rightarrow \mathbb{R}$. Moreover, stationary CCE remains incomparable with the notion of proximal regret of \cite{cai2025new} in general. As a consequence, especially for the self-play setting, we obtain more inequalities which are satisfied by time-average outcomes of gradient ascent.

\vspace{8pt}\noindent\textbf{Characterisation of adversarial guarantees.} Our first-order formulation allows us to check not only when gradient ascent incurs low regret against a strategy modification when payoffs are revealed adversarially, but also when it fails to do so. Propositions \ref{prop:impossibility} and \ref{prop:gradient-impossibility} demonstrate that it is \emph{necessary} for a strategy modification to induce a gradient field tangent to the action set if gradient ascent is to incur low regret against it. We suspect that the first-order perspective is necessary in order to prove necessity. 

\vspace{8pt}\noindent\textbf{Bregman setup.} \cite{cai2025new} also provide a generalisation of proximal regret to the Bregman setup, providing guarantees for online mirror ascent with a strongly convex regulariser. We attach a note in Section \ref{sec:reg-learning}, demonstrating that if the regulariser is steep, first-order guarantees follow from the unconstrained case illustrated in the Introduction, (\ref{eq:unconstrained}). On the other hand, if the regulariser is not steep and is not a quadratic function, our current results do not provide guarantees.

\vspace{8pt}\noindent\textbf{Optimism and improved guarantees.} \cite{cai2025new} also provide sharper regret bounds for optimistic gradient ascent. We suspect, due to the interpretation of their notion proximal regret as a first-order local CCE (Example \ref{ex:moreau}) and the discussion in Sections \ref{sec:tangency-test-fun}, that similar improvements in bounds are possible at least for tangential test functions $h$. However, our focus is strictly on the \emph{extent} of guarantees when some \emph{or all} players implement projected gradient ascent, and it is not obvious whether similar improvements are possible for our main result.

\subsubsection{Tangency of test functions}\label{sec:tangency-test-fun}

Theorem \ref{thm:smooth-to-avg-reduction} and the proof of Proposition \ref{prop:impossibility} strongly suggest that if a game is smooth for $i \in \hat{N}$ and all players $i \in \hat{N} \subseteq N$ use the same step sizes, then  ``partially'' adversarial convergence to a local CCE should be possible with respect to $\cotm_M(\times_{i \in \hat{N}} X_i, \mathbb{R})$, independent of the regularity assumptions on players' action sets. In other words, tangency of $\nabla h$ should be a sufficient condition to guarantee adversarial local equilibrium guarantees for $h$. 

In this section, we observe that this general first-order result is implied by the arguments in \cite{cai2025new,ahunbay2025semicoarse}, where the results are proven for Lipschitz continuous tangent gradient fields under fairly general assumptions. Moreover we obtain a more robust local equilibrium guarantee. Specifically, \cite{cai2025new} provide a proof for proximal deviations, which provides a subcase of the setting when $\hatN$ is a singleton (c.f. Example \ref{ex:moreau}), but allows arbitrary convex action sets. \cite{ahunbay2025semicoarse} instead provides a proof for the setting of a smooth game for $\hatN$ that we consider when action sets are polyhedral, but this assumption of polyhedral sets is only invoked to establish a ``more robust'' $(O(1/\sqrt{T}),\Delta)$-local CCE guarantee for suitably chosen step sizes. Both proofs leverage the same observation that projection error is normal to the action set at $x^{t+1}$ whereas the gradient of $h$ is tangent. It turns out the latter proof, (already stated in the first-order formalism of this paper), can be readily extended to functions whose gradients admit a suitable modulus of continuity, which we demonstrate below.

\begin{theorem}[From \cite{ahunbay2025semicoarse}, Theorem 5.3]\label{thm:already-proven}
    In the setting of a smooth game for $\hatN$, for any function $h \in \cotdla(\times_{i \in \hat{N}} X_i, \mathbb{R})$ and for any player $i$,
    \begin{equation}\label{eq:LCCE-convex-bound} \sum_{t = 0}^{T-1} \frac{1}{\eta_{t}}\bilin{\nabla_i h(\xhat{t})}{\x{t}_i + \eta_{t} \tanproju{t} - \x{t+1}_i} \leq \sum_{t=0}^{T-1}  \omega(\eta_t) G_i L_h \Big(\sum_{j \in \hat{N}} G_j \Big).\end{equation}
    As a consequence, we have 
    $$ \sum_{t = 0}^{T-1} \sum_{i \in \hat{N}} \bilin{\nabla_i h(\xhat{t})}{\nabla_i u_i(x^t)}  \leq \sum_{t = 0}^{T-1} \frac{h(\xhat{t+1}) - h(\xhat{t})}{\eta_t} + 2 L_h \Big( \sum_{i \in \hatN} G_i \Big)^2 \omega(\eta_t).$$
\end{theorem}

\begin{proof}
    The proof of Theorem 5.3. \cite{ahunbay2025semicoarse} for polyhedral functions invokes the assumption only for the initial $(\epsilon,\Delta)$-reduction, and the arguments bounding (\ref{eq:LCCE-convex-bound}) apply so long as for any $i \in \hat{N}, X_i$ is closed \& convex, with only minor modifications required. In particular, by Proposition \ref{prop:average-bounds-functional}, we need only to show (\ref{eq:LCCE-convex-bound}).

    In this case, if $h$ is tangential then $\proj{\nc{X_i}{x_i}}{\nabla_i h(x)} = 0$ for every player $i \in \hatN$ and any $x_i \in X_i$. Meanwhile, letting $\mu_{it} \equiv \x{t}_i + \eta_{t} \nabla_i u_i(x^t) - \x{t+1}$, we observe that $\mu_{it} \in \nc{X_i}{x_i^{t+1}}$. Therefore, 
    \begin{align*}
        \bilin{\nabla_i h(\xhat{t})}{\mu_{it}} & = \bilin{\nabla_i h(\xhat{t}) - \nabla_i h(\xhat{t+1})}{\mu_{it}} + \bilin{\nabla_i h(\xhat{t+1})}{\mu_{it}} \\
        & \leq \| \mu_{it} \| \cdot L_h\omega(\| \xhat{t+1} - \xhat{t} \|)  \leq \eta_{t} \omega(\eta_t) L_h G_i \Big( \sum_{j \in \hatN} G_j \Big).
    \end{align*}
\end{proof}

It turns out that well-tangency actually implies that the more robust $(\epsilon,\Delta)$-local CCE condition holds also for general convex sets. In this case, the reduction of \cite{cai2024on} (Lemma 1) \emph{cannot} be applied directly, as the reduction is valid only when $x_i + \delta\nabla_i h(x_{\hat{N}})$ is guaranteed to lie in $X_i$. However, $\cotdla$ functions are also $\propto \omega$-well-tangential, which implies that the projection error is necessarily small.

\begin{theorem}\label{thm:epsilon-Delta}
    In the setting of a smooth game for $\hatN$, for any function $h \in \cotma_M(\times_{i \in \hat{N}} X_i, \mathbb{R})$ and any $\delta > 0$, and any $0 \leq t < T$,
    \begin{align*}
    \sum_{i \in \hat{N}} u_i(\proj{X_i}{x^t_i + \delta \nabla_i h(x^t_{\hat{N}},x_{-\hat{N}}^t)},x_{-i}^t) - u_i(x^t) \leq \sum_{i \in \hat{N}} & \delta \bilin{\nabla_i h(x_{\hat{N}}^t)}{\nabla_i u_i(x^t)} + \delta^2 \left(\frac{L_i G_h^2}{2} \right) \\
    & + \delta \omega(\delta) L_h G_h^2 G_i.
    \end{align*}
    As a consequence, the time-average play also forms an $(\epsilon^\omega(T),\infty)$-local coarse correlated equilibrium with respect to $\cotba_M(\times_{i \in \hat{N}} X_i,\mathbb{R})$, and if $\eta_t$ is constant, also with respect to $\cotma_M(\times_{i \in \hat{N}} X_i,\mathbb{R})$.
\end{theorem}

\begin{proof}
    Note that the usual Lipschitz guarantee of \cite{cai2024on,ahunbay2025semicoarse} only provides a guarantee of 
    $$\leq \bilin{\proj{X_i}{x_i^t + \delta \nabla_i h(x_{\hat{N}}^t)} - x_i^t}{\nabla_i u_i(x^t)} + \frac{1}{2} L_i G_h^2 \delta^2.$$
    To rectify this, we add and subtract $\langle \delta \nabla_i h(x_{\hat{N}}^t), \nabla_i u_i(x^t) \rangle$. Then invoking Proposition \ref{prop:well-tangent},
    \begin{align*}
        \|\proj{X_i}{x_i^t + \delta \nabla_i h(x_{\hat{N}}^t)} - x_i^t - \delta \nabla_i h(x_{\hat{N}}^t) \| \leq L_h G_h^2 \delta \omega(\delta).
    \end{align*}
\end{proof}

\subsubsection{A note on regularised learning}\label{sec:reg-learning}

Up so far, our discussion has focused on the guarantees of projected gradient ascent, and only projected gradient ascent. However, \cite{cai2025new} show that similar adversarial guarantees may be obtained when some player $i$ updates their strategies via online mirror ascent \cite{nemirovskij1983problem}, setting 
$$x_i^{t+1} = \arg \min_{x'_i \in X_i} \phi(x'_i) - \bilin{\eta\nabla_i u_i(x^t) + \nabla_i \phi(x^t_i)}{x'_i},$$
where $\phi : X_i \rightarrow \mathbb{R}$ is a strictly convex function which is continuously differentiable on the relative interior of $X_i$. 

In this section, we remark that our results on first-order CCE translate to guarantees when (some) players employ online mirror ascent a \emph{steep}, smooth regulariser; this implies that $\nabla_i \phi : \textnormal{int}(X_i) \rightarrow \mathbb{R}^{D_i}$ is a bijection. The key insight is that our results up so far on first-order equilibria do not leverage at all that the utility gradients are in fact gradients of the utility function\footnote{Except for Theorem \ref{thm:epsilon-Delta} for $(\epsilon,\Delta)$-local CCE.}. That is to say, our results on the approximability of $O(1/\sqrt{T})$-stationary and local CCE would keep holding, if we were to replace each $\nabla_i u_i : X \rightarrow \mathbb{R}^{D_i}$ with a Lipschitz continuous vector field $g_i : X \rightarrow \mathbb{R}^{D_i}$ with bounds $G_i, L_i$ on its magnitude and Lipschitz modulus. We prove the result when the action set $X_i$ has full volume in $\mathbb{R}^{D_i}$; the results readily carry over by going to the affine span when $X_i$ has codimension $> 0$.

When the regulariser is steep, we recall that the update step of online mirror ascent can be represented, $y_i^0 = 0$, $x_i^0 = \arg \min_{x'_i \in X_i} \phi(x'_i)$, and for any $t \geq 0$,
\begin{align*}
    y_i^{t+1} = \eta \nabla_i u_i(x^t), \\
    x_i^{t+1} = \nabla_i \phi^*(y_i^{t+1}),
\end{align*}
where $\phi^*$ is the Legendre-Fenchel conjugate of $\phi$. The idea, then, is to replace the action set $X_i$ of player $i$ with the set $Y_i = \nabla_i \phi (X_i)$, which equals $\mathbb{R}^{D_i}$ under our assumptions. 

In this case, $\nabla_i \phi : \intxi \rightarrow \mathbb{R}^{D_i}$ is a diffeomorphism, with inverse $\nabla_i \phi^* : \mathbb{R}^{D_i} \rightarrow \intxi$. Here, $\mathbb{R}^{D_i}$ is endowed with the usual Euclidean metric, whereas the interior $\intxi$ is endowed with the metric $\nabla^2_i \phi(x)$ at each $x \in \intxi$; this is the Hessian Riemannian metric considered in \cite{mertikopoulos2018riemannian}. As a consequence, the game remains smooth when $X_i$ is replaced by $Y_i$, and player $i$'s utility gradient is revealed $\nabla_i u_i(x^t)$ at each time period $t$.

Thus, by Proposition \ref{thm:already-proven} (similar bounds apply by invoking Theorem \ref{thm:smooth-to-avg-reduction} or Proposition \ref{prop:tangent-util}), if $\eta = 1/\sqrt{T}$, for any $h \in \cotm_M(Y_i,\mathbb{R})$,
\begin{align*}
    \frac{1}{T} \sum_{t= 0}^{T-1} \bilin{\nabla_i [h(\nabla_i \phi(x_i^t))]}{(\nabla^2_i \phi(x_i^t))^{-1}\nabla_i u_i(x^t)} & \leq \frac{1}{T} \sum_{t= 0}^{T-1} \sum_{k = 1}^{D_i} \bilin{\frac{dh(y_i^t)}{dy_{ik}}}{\nabla_i u_i(x^t)} \\
    & \leq \frac{h(y_i^T) - h(y_i^0)}{T\eta} + 2 L_h \Big( \sum_{i \in \hatN} G_i \Big)^2 \omega(\eta).
\end{align*}
Moreover, by Propositions \ref{prop:impossibility} and \ref{prop:gradient-impossibility} and by invoking the diffeomorphism between $\intxi$ and $\mathbb{R}^{D_i}$,, player $i$ may guarantee low regret against a continuous vector field $f : \intxi \rightarrow \mathbb{R}^{D_i}$ only if $\nabla^2_i \phi(x_i) f(x_i)$ is a gradient field. This provides the desired characterisation of the guarantees of regularised learning.

\begin{example}
    That online mirror ascent satisfies no-external regret follows from setting $h(y_i) = \phi^*(y_i) - \bilin{y_i}{x_i^*}$ for any $x_i^* \in X_i$, which is the usual Fenchel coupling considered in e.g. \cite{MZ19}.
\end{example}

\begin{example}\label{ex:positive-orth}
    We had seen in Example \ref{ex:affine-sym} (studied also in \cite{ahunbay2025semicoarse,cai2025new}) that projected gradient ascent incurs low regret against strategy modifications $x_i \mapsto Q x_i + q$, where $Q$ is a symmetric matrix. We may show that the online mirror ascent with entropic regularisation on the positive orthant $X_i = \mathbb{R}_+^{D_i}$ incurs low regret against such strategy modifications only if $Q$ is a diagonal matrix. Here, we have $\phi(x_i) = \sum_{\mu = 1}^{D_i} x_{i\mu} \ln x_{i\mu}$, and the condition we require is that $\nabla^2_i \phi(x_i) (Qx_i + q)$ must be a gradient field. To verify this, we require the cross-derivative condition to hold, that is, compactifying expressions in Einstein notation\footnote{Contract matching pairs of upper \& lower indices, e.g. $q_\mu q^\mu = \sum_{\mu} q_\mu^2$, and $\partial_\mu$ stands for $\partial / \partial x_{i\mu}$},
    $$ \partial_{\gamma} ( \partial_{\mu \nu} \phi(x_i) (Q^{\nu\alpha} x_{i\alpha} + q^\nu) ) =  \partial_{\mu} (\partial_{\gamma \nu} \phi(x_i) (Q^{\nu\alpha} x_{i\alpha} + q^\nu))$$
    for any pair of indices $\gamma \neq \mu$ between $1$ and $D_i$. This necessitates $\partial_{\mu \nu} \phi(x_i) Q^\nu_\gamma = \partial_{\gamma \nu} \phi(x_i) Q^\nu_\mu$, i.e. $(\nabla_i^2 \phi(x_i) Q)^T = \nabla_i^2 \phi(x_i) Q$. In this case, $\nabla_i^2 \phi(x_i)_{\mu\nu} = \delta_{\mu\nu}/x_{i\mu}$, which implies that we require $Q_{\mu\gamma} / x_{i\mu} = Q_{\gamma\mu} / x_{i\gamma}$ for every $x_i \in \intxi$. This is impossible unless all off-diagonal elements of $Q$ are $0$.
\end{example}

Now, \cite{cai2025new} prove that online mirror descent incurs vanishing Bregman proximal regret, against strategy modifications defined as $x_i \mapsto \prox_h^\phi(x_i) - x_i$, where 
$$ \prox_h^\phi(x_i) = \arg \min_{x'_i \in X_i} h(x'_i) + \phi(x'_i) - \bilin{\nabla_i \phi(x_i)}{x_i - x'_i}.$$
We remark that in this case, $\nabla_i^2\phi(x_i) (x_i - \prox_h^{\phi}(x_i))$ is the gradient of the ``left Bregman-Moreau envelope'' of $h$ \cite{bauschke2018regularizing}, establishing the Bregman proximal regret guarantees of \cite{cai2025new} as another example of local CCE.

On the other hand, our techniques \emph{do not} provide the same guarantees by \cite{cai2025new} for Bregman proximal regret when $\phi$ is not necessarily steep. The hiccup is at the step $Y_i = \nabla_i \phi(X_i)$; when the regulariser is not steep, the image of $Y_i$ is no longer equal to $\mathbb{R}^{D_i}$ in general, and if the domain of $\phi$ equals $\mathbb{R}^{D_i}$, we cannot even guarantee convexity of $\nabla_i \phi(X_i)$ unless $\nabla_i \phi$ is an affine-linear map \cite{bailey2013computational}.

\subsection{Expected variational inequalities \cite{zhang2025expected}}\label{sec:EVI-stuff}

\cite{zhang2025expected} propose the notion of an ``expected variational inequality'', which in the language of our paper (e.g., using ascent instead of descent, utilities instead of losses) can be defined:
\begin{definition}
    Let $X \subseteq \mathbb{R}^D$ be a closed and convex set, $g : X \rightarrow \mathbb{R}^D$ a vector field, and $\Phi \subseteq X^X$ a set of maps $X \rightarrow X$. Then a distribution $\sigma$ is called an $\epsilon$-$\Phi$\textbf{-expected variational inequality (EVI)} if $$\mathbb{E}_{x \sim \sigma} \bilin{\phi(x)-x}{g(x)} \leq \epsilon.$$
\end{definition}
Now, in a smooth game, we may take $D = \sum_{i \in N} D_i$, and $g(x) = (\nabla_i u_i(x))_{i \in N}$. Meanwhile, $\phi(x)-x \in \tc{X}{x}$ necessarily, meaning that it induces a vector field tangent to $X$. 

Thus, an $\epsilon$-$\Phi$-EVI is precisely a class of local CE in the context of a smooth game, and thus our results in Section \ref{sec:on-local-CE} can be seen as a precursor of \cite{zhang2025expected}. Likewise, results on local CCE can be interpreted as simply running gradient ascent on $g$ and taking the time-average distribution providing an $\epsilon$-$\Phi$-EVI in $O(1/\epsilon^2)$ iterations, when the subset of deviations $\Phi \subseteq X^X$ all induce gradient fields.

We remark that, amongst other results, \cite{zhang2025expected} provide efficient algorithms for computing a local CE with respect to $F_{\textnormal{LIN}}$, even when the set of vector fields may have large or infinite cardinality. Indeed, if $\phi : X \rightarrow X$ is an affine-linear transformation, then $f(x) = \phi(x) - x$ is an affine-linear and tangent vector field on $X$. The converse statement holds $O(\|f(x)\|^2)$-approximately by Proposition \ref{prop:well-tangent}.

More directly relevant for our work, they also observe a result from \cite{Fujii2023}, which allows us to characterise $F_{\textnormal{LIN}}$ for mixed-extensions of two player normal-form games. This allows us to provide a proof of Theorem \ref{thm:sce-lin-hard}.

\begin{lemma}[\cite{zhang2025expected}, by Lemma E.5]\label{lem:fujii}
    Let $X = \Delta(A_1) \times \Delta(A_2)$. Then if $f \in \flin$, then $f_1(x) = \alpha (Q x_1 - x_1) + \beta (P x_2 - x_1)$, where $\alpha, \beta \geq 0$, $Q x_1 : \Delta(A_1) \rightarrow \Delta(A_1)$ and $Px_2 : \Delta(A_2) \rightarrow \Delta(A_1)$. A similar statement holds for $f_2(x)$. 
\end{lemma}

\begin{proof}[Proof (of Theorem \ref{thm:sce-lin-hard}).]
    Let $u_1(x_1,x_2) = x_1^T U_1 x_2$. We will show that at any local CE $\sigma$ with respect to $\flin$, it is necessarily the case that 
    $$ \mathbb{E}_{x \sim \sigma} [ \bilin{\nabla_1 u_1(x)}{\proj{\tc{X_1}{x_1}}{\nabla_1 u_1(x)}} ] = 0,$$
    from which the result follows. Towards this end, denote by $\Sigma$ an all-ones vector of appropriate size. Then note that since $\proj{\tc{X_1}{x_1}}{\nabla_1 u_1(x)} \in \tc{X_1}{x_1}$ and $X_1 = \Delta(A_1)$, we have $\Sigma^T \proj{\tc{X_1}{x_1}}{\nabla_1 u_1(x)} = 0$. Therefore, 
    $$ \bilin{\nabla_1 u_1(x)}{\proj{\tc{X_1}{x_1}}{\nabla_i u_i(x)}} = \bilin{U_1 x_2 - \frac{1}{|A_1|}\Sigma \Sigma^T U_1 x_2}{\proj{\tc{X_1}{x_1}}{\nabla_i u_i(x)}}.$$
    Now, $U \equiv U_1 - \frac{1}{|A_1|}\Sigma \Sigma^T U_1$ is a matrix such that each of its columns sums up to zero. Therefore, $U$ is a linear combination of matrices $T$ of the form, 
    $$ T_{a_1 a_2} = (\delta_{a_1 a'_1}-\delta_{a_1a''_1})\delta_{a_2 a'_2},$$
    that is, a matrix with the $(a'_1,a'_2)$'th entry equal to $1$, the $(a''_1,a'_2)$'th entry equal to $-1$, and all other entries equal to zero. But for a suitable choice of matrices $P_1, P_2$ such that $P_1 x_2, P_2 x_2 : \Delta(A_2) \rightarrow \Delta(A_1)$, $T x_2 = (P_1 x_2 - x_1) - (P_2 x_2 - x_1)$. Indeed, pick an arbitrary function $\phi_1 : A_2 \rightarrow A_1$ such that $\phi_1(a'_2) = a'_1$. Let 
    $$ \phi_2(a_2) = \begin{cases}
        a''_1 & a_2 = a'_2, \\
        \phi_1(a_2) & \text{else.}
    \end{cases}$$
    Let $P_{1a_1a_2} = \sum_{a_2} \delta_{\phi_1(a_2)a_1}$ and $P_{2a_1a_2} = \sum_{a_2} \delta_{\phi_2(a_2)a_1}$.
\end{proof}

\end{document}